\documentclass[12pt,reqno,oneside]{amsart}
\usepackage[T1]{fontenc}
\usepackage{srcltx} 
\usepackage[left=1.5in,top=1.5in]{geometry}
\usepackage{har2nat}
\usepackage{natbib}
\usepackage{graphicx}
\allowdisplaybreaks[3]
\usepackage{calc}
\usepackage{setspace}
   \setstretch{1.5}
  \usepackage{times}

\usepackage{booktabs, cellspace, hhline}
\setlength\cellspacetoplimit{4pt}
\setlength\cellspacebottomlimit{4pt}
\usepackage[dvipsnames]{xcolor}
\usepackage[
  colorlinks=true,linkcolor=blue, urlcolor=red]{hyperref}
  
\AtBeginDocument{\hypersetup{citecolor=Blue,linkcolor=black}}
\usepackage{color, colortbl}
\usepackage{booktabs}
\usepackage{threeparttable}
\usepackage{multirow,array}
\usepackage[dvipsnames]{xcolor}

\usepackage{tikz}
\usepackage{pgfplots}
\pgfplotsset{compat=newest,
samples=500,
every axis/.append style={
                    axis line style={semithick},
                    label style={font=\scriptsize},
                    tick label style={font=\scriptsize},
                    xlabel style={at={(ticklabel* cs:1)},anchor=north west},
                    ylabel style={at={(ticklabel* cs:1)},anchor=south west}
                 }
}

\tikzset{
    myarrow/.style = {-{Triangle[length = 1.5mm, width = 1.5mm]}}
}

\usepackage{graphicx}


\usetikzlibrary{decorations.pathreplacing}
\usetikzlibrary{calc,patterns,positioning}
\usepgfplotslibrary{fillbetween}
   \usetikzlibrary{arrows,snakes,backgrounds,automata,positioning,arrows.meta}
\usepackage[font=footnotesize]{caption}
\usetikzlibrary{arrows.meta}
\usepackage[outline]{contour} 
\tikzset{>=latex} 


\usepackage{adjustbox}
\usetikzlibrary{decorations.pathmorphing}
\tikzset{discont/.style={decoration={zigzag,segment length=12pt, amplitude=3pt},decorate}}
\def\discontarrow(#1)(#2)(#3)(#4);{
  \draw[discont] (#2) -- (#3);
  \draw[thick][->] (#1) -- (#2) (#3) -- (#4);
}
\renewcommand{\thesubsection}{\thesection.\Alph{subsection}}

\usepackage{amssymb}
\usepackage[redeflists]{IEEEtrantools}
\usepackage{multibib}
\theoremstyle{plain}
\newtheoremstyle{defn}%
    {7pt}
    {7pt}
    {}
    {}
    {\scshape}
    {---}
    {.7em}
    {}
\theoremstyle{defn}    
\newtheorem{lemma}{Lemma}

\newtheorem{assm}{Assumption}

\newtheorem{corollary}{Corollary}
\newtheorem{prop}{Proposition}

\usepackage{graphicx} 
\newcommand{\divby}[2]{#1 \mathord{\left/ \vphantom{#1 #2} \right.}
    \kern-\nulldelimiterspace #2}


\renewcommand{\thesection}{\Roman{section}}

\usepackage{graphicx} 
\usepackage{pgf,tikz}
\usepackage{pgfplots}
\usetikzlibrary{babel} 
\usepackage{circuitikz}
\usepackage{siunitx}
\usetikzlibrary{arrows,calc}
\usepackage{relsize}

\addtolength{\textheight}{-\baselineskip}
\addtolength{\footskip}{\baselineskip}
\usepackage{tcolorbox}
\usepgfplotslibrary{fillbetween}
\usepackage{enumitem}
\usepackage{istgame}
\usepackage{blkarray}
\tikzset{chance node/.style={hollow node, minimum size=2mm}}

\tikzset{normal node/.style={circle, black, minimum size=2mm}}

\title{
There is power in general equilibrium}

\author{Juan Jacobo*}
%

\thanks{* Department of Economics, Externado University of Colombia, juan.jacobo@uexternado.edu.co. }%


\begin{document}

\begin{abstract}
The article develops a general equilibrium model where power relations are central in the determination of unemployment, profitability, and income distribution. The paper contributes to the ``market forces versus institutions'' debate by providing a unified model capable of identifying key interrelations between technical and institutional changes in the economy.  Empirically, the model is used to gauge the relative roles of technology and institutions in the behavior of the labor share, the unemployment rate, the capital-output ratio, and business profitability and demonstrates how they complement each other in providing an adequate narrative to  the  structural changes of the  US economy.

\bigskip
\smallskip
\noindent \emph{Keywords.}  Power relations,  unemployment, automation,  labor institutions. 

\smallskip
\smallskip
\noindent \emph{JEL Classification.} C78, D24, D33, E11, J64, J65, O33, P16

\end{abstract}

\maketitle


 

\section{Introduction}

Over the past 70 years, the US economy has seen dramatic changes in income distribution, technology adoption, corporate profitability, and  unemployment rates. The years  from the late 1940s to the mid-1970s marked a period with a considerable reduction in income  inequality and a slightly increasing labor share, albeit with a higher ratio of capital to value added, a surge in the rate of unemployment, and a deteriorated profitability of businesses. Most of these patterns reverted in the early 1980s and led to a new era with a sharply uneven distribution in favor of  upper income groups. 

While there have been many discussions about the causes of these macro patterns, there is not a fully compelling explanation. The prevailing theories can be divided into market-driven versus institution-driven stories. The market-driven approach posits that technical  change (particularly automation), globalization, and industrial concentration  have created a bias in favor of high-skilled labor and the owners of capital, which are commonly in the top percentiles of the distribution of income (see, e.g., \citeasnoun{autor2020fall, hemous2022rise, moll2022uneven}). The two main problems with  this approach is that it cannot account for the fact  that not all nations subject to similar technological forces have seen an equal rise of top income shares and that it is hard to reconcile with the behavior of key macro trends like the rate of unemployment and several measures of corporate profitability during the postwar period \cite{stansbury2020declining}.

The institution driven stories postulate that union memberships, minimum wages, tax policy,  preferences for redistribution, and broadly defined organizational practices  in the labor market had a major role in macroeconomic outcomes and the evolution of income inequality (see, e.g., \citeasnoun{piketty2014optimal,stansbury2020declining,farber2021unions}.) The difficulty is that it is generally challenging to represent  the multidimensional character of labor institutions in a tractable model that  highlights the relative role of each specific factor.

The first goal of this paper is to present a comprehensive  general equilibrium model capturing key aspects of the market-driven and institution-driven narratives to assess their relative roles in the evolution of inequality and macroeconomic outcomes. To do so, I merge the task-based formalism of \citeasnoun{acemoglu2018race} with the search and matching models of equilibrium unemployment, while relaxing the unrealistic  assumption that firms can and do include the ``required'' rate of return as a cost of production.  This presents a more realistic model of capitalist economies by explicitly revealing how corporate profitability is determined by power relations between workers and firms, and how these power relations are endogenously formed by norms and organizational practices defining the bargaining protocol of wages. Furthermore, the model explores the dynamic interrelation between technical and institutional changes, and provides  a clear and tractable framework illustrating how unemployment and the functional distribution of income  are affected by automation, labor productivity growth, and specific labor institutions like union membership and real minimum wages.

The second goal of the paper is to gauge the relative  roles that technical and institutional changes had in the US economy over the  postwar period by comparing the predicted paths of the model with their empirical counterparts.   I consider  macro-level time series of economic, political, and institutional data.  

Basing the initial analysis on economic time series evidence, and employing  a parsimonious  calibration  strategy where the only parameters which are directly estimated are the measure of automation  and  the bargaining power of labor  using the theoretical  equilibrium conditions, the model reaches two main results. First,    the rise and fall of worker power before and after the mid-1970s is probably the major structural change responsible for the behavior of the labor share,  corporate profitability, and the unemployment rate. This suggests that an adequate understanding of  macroeconomic trends requires a careful study of the institutional and politico-economic variables determining the bargaining power of labor. Second, technical change (particularly automation) is nonetheless a key factor  determining the behavior of the labor share and the ratio of capital to value added. Altogether, by studying a wide array of macroeconomic variables over the entire postwar period, the evidence shows that the market-driven and institution-driven stories  likely complement, rather than substitute, each other in providing a consistent narrative for the main events of the US economy. 

The time series on labor institutions are used to supplement the previous results in two ways.  First, they illustrate that the predicted paths of worker power derived from the calibration  strategy of the model are consistent with the observed variations in labor institutions in the US. Specifically,  worker power increased between the 1940s to the late 1970s when the institutional support to labor was generally rising, and decreased steadily  thereafter when unions, minimum wages, and top marginal income tax rates simultaneously declined. Second, the data  exhibits  a clear association between  the rise and fall of the  institutional support to labor  with the ``Communist threat'', which refers to the class compromise between capital and labor induced by the fear that communism could replace  the foundations of capitalism \cite{gerstle2022}.  This presents a plausible story explaining  why Democrat and Republican governments alike supported the construction of a welfare state in the US before the mid-1970s,  but dismantled  some of its foundations afterwards.

 Combining these empirical results, the model sheds new light  on widely studied phenomena like the wage-premium and the association of corporate markups with market concentration. The evidence  shows that corporate profitability is highly correlated with the wage-premium since the 1950s, suggesting that similar mechanisms driving up the rate of return of capital  are also raising  the relative wage of high-skilled labor. The data also indicates that  the behavior of  corporate markups is only consistent with the trends  of market concentration after the early 1980s \cite{kwon2023}, while it is  generally well aligned with the behavior of worker power throughout the  postwar period. Thus, given the centrality of labor power in explaining the behavior of business profitability, it is likely that the   relations between capital and labor have been key actors shaping  the behavior of the wage-premium and   corporate markups in the US.

 To the best of my knowledge, this is the first paper to connect---both theoretically and empirically---the growing literature on the political economy of income distribution, labor institutions, and political preferences (see, e.g., \citeasnoun{piketty2003income,piketty2014optimal,farber2021unions}) with the numerous studies on the  trends in the labor share, the unemployment rate, and  the capital-output ratio. Similar to  \citeasnoun{stansbury2020declining}, \citeasnoun{dinardo2000unions},   \citeasnoun{krueger2022theory}, \citeasnoun{taschereau2020union}, and \citeasnoun{acemoglu2022eclipse}, the paper establishes an explicit connection between  worker power,  the distribution of income, and the rents transferred from labor to capital. However, unlike  the cited literature, the model corrects for possible confounding factors by developing a methodology that explicitly distinguishes the relative roles of technological and institutional changes in economic dynamics. 

 The paper also contributes to the growing literature on the effects of technical progress and automation on labor demand and income distribution \cite{aghion1994growth,acemoglu2018race,hemous2022rise,moll2022uneven}. Relative to these papers,  I show how technical change is explicitly associated with technological unemployment in a dynamic setting, and why the effects of automation  always depend on the specific institutional arrangements defining the bargaining power of labor. Furthermore, the model establishes the conditions for a balanced growth path (BGP) with positive growth and reveals how they are associated with the  institutions enabling the existence of sufficiently large profits for firms. 
 
 Finally, this work extends on the literature attempting to explain the trend of key macroeconomic variables in the US economy \cite{goldin2010race,karabarbounis2014global,FARHI_GOURIO2018,autor2020fall,barkai2020declining,
 stansbury2020declining}. Similar to \citeasnoun{stansbury2020declining}, the paper identifies worker power as a major  source of the  structural changes in the US  over the postwar period. However, by revealing the links between technical  and   institutional changes,  the model also supports  the findings in \citeasnoun{bergholt2022decline} and \citeasnoun{moll2022uneven} by showing  that   automation contributed to the fall of the labor share in the mid-1970s and in the early 2000s, and to the rise of the capital-output ratio since the late 1960s.

 The next section describes the basic   environment of the model. Section \ref{sec:barg_protocol} defines the bargaining protocol of wages and its connection with the equilibrium rate of return of capital.  Section \ref{sec:equilibrium_dynamics} reveals the conditions for a general equilibrium with positive growth and derives the key results on transitional dynamics. Section \ref{sec:empirics} presents an approximate calibration to the model and evaluates the roles of technology and institutions in the structural changes of the US economy. Section \ref{sec:extensions} shows some channels through which  worker power is  associated  with the wage-premium  and disentangles the extent to which business markups are related to market concentration. Section \ref{sec:conclusions} concludes.  The main Appendix generalizes the model in  Section \ref{sec:model} and complements the theoretical results in Sections \ref{sec:barg_protocol} and \ref{sec:equilibrium_dynamics}.  The online Appendix presents all the relevant proofs and derivations of the paper, the details of the calibration exercise, and the description of the data    along with  additional robustness tests.

\section{Model}\label{sec:model}

This section presents the   technology and price structure of the model, describes the matching function and the dynamics of aggregate   employment and capital with  the automation and creation of new tasks,  and characterizes  the value functions of capitalists and workers.

\subsection{Environment} The description of the production process follows the  formalism of \citeasnoun{acemoglu2018race} by emphasizing the  role of capital and labor  in the production of tasks $j$ indexed over a normalized space $[M_{t}-1, M_{t}]$. Tasks with $j \in (J_{t}, M_{t}]$ are produced with labor, and have an effective unit cost $W_{t}/A^{l}_{t}(j)$---$W_{t}$  is the nominal wage per worker and $A^{l}_{t}(j)$ is the task-specific labor-augmenting technology. Respectively, tasks $j \in [M_{t}-1, J_{t}]$ are produced with capital at an effective unit cost $\delta P^{k}_{t}/A^{k}_{t}(j)$, where $\delta \in (0,1)$ is the depreciation rate,  $P^{k}_{t}$ is the price of  capital, and $A^{k}_{t}(j)$ is the capital-augmenting technology.

Throughout, the factor augmenting technologies are represented by:
 
 \begin{assm}\label{ass:task_function} $A^{k}_{t}(j) = A^{k}>0$ and $A^{l}_{t}(j)=e^{\alpha j}$, with $\alpha >0$. 
 \end{assm}

 Assumption \ref{ass:task_function} says that labor has a comparative advantage in higher-indexed tasks and guarantees the existence of a threshold $\tilde{J}_{t}$ such that 
 
 \begin{equation*}
e^{\alpha \tilde{J}_{t}}=\frac{W_{t} A^{k}}{\delta P^{k}_{t} }.
 \end{equation*}
 
When $ j \leq \tilde{J}_{t}$, tasks are produced with capital since it has a lower effective cost   than  labor. If $j > \tilde{J}_{t}$, the production of tasks is bounded by the existing technology and firms will only be able to automatize up to $J_{t}$. The unique threshold defining the assignment of tasks is consequently $ J^{*}_{t} = \min\{J_{t}, \tilde{J}_{t}\}$.

 Appendix \ref{subappendix:genmodelsec1} shows that, in this setup, the equilibrium  output can be expressed as an aggregate production function

 \begin{equation}\label{eq:agg_prod_text}
 Y_{t} =  \Bigg[(1-m^{*}_{t})^{1/\sigma}\big(A^{k}\; K_{t} \big)^{\frac{\sigma-1}{\sigma}} + \Big(\int_{0}^{m^{*}_{t}} e^{\alpha j} \;  \mathrm{d}j \Big)^{1/\sigma} \Big(e^{\alpha J^{*}_{t}}   L_{t}\Big)^{\frac{\sigma-1}{\sigma}}\Bigg]^{\frac{\sigma}{\sigma-1}},
  \end{equation}
 
where $K_{t}$ is the aggregate  capital stock, $L_{t} $ is  aggregate employment,  $P^{c}_{t}$ is the price index of costs of production satisfying the \emph{ideal price index condition},  and $m^{*}_{t}=M_{t}-J^{*}_{t}$ is the equilibrium measure of automation.


 \subsubsection{Prices and Growth} The economy-wide price  of the final output is given by

\begin{equation}\label{eq:final_price}
P_{t}=(1+\mu_{t})P^{c}_{t}. 
\end{equation}

The key characteristic of  \eqref{eq:final_price} is that firms  only realize a profit \emph{after} a commodity is produced and sold, meaning that the rate of return of capital, $\mu_{t}$,  cannot be included as a   cost of production. 

In the text, the exposition is simplified by assuming that the economy can convert one unit of output into $q_{t}=q$ units of capital, so that  $P^{k}_{t}/P_{t} = q^{-1}$ at any time $t$. This special case of an economy with \emph{investment-specific technological change} allows the existence of a BGP without the introduction of human capital accumulation or further discussions on the so-called ``capital-skill'' complementarity.\footnote{Appendix \ref{subappendix:genmodelsec1} presents  a generalized model    showing how investment-specific technological change can be incorporated to the analysis.}

Denoting the growth rate of any variable $X$ as $g_{X}$, the next lemma specifies the conditions for a BGP in  the economy  described above. 
 
 \begin{lemma}\label{lemma:management_text} Suppose that Assumption  \ref{ass:task_function} holds. Then in any BGP:  
 
 \begin{equation*}
 g_{K}=g_{Y} =g_{C}=g=  \alpha  \dot{M} 
 \end{equation*}

 \end{lemma}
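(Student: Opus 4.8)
The plan is to exploit the definition of a balanced growth path as a trajectory along which every aggregate grows at a constant (possibly zero) rate, and then to read off the restrictions imposed by the aggregate production function \eqref{eq:agg_prod_text}, the capital-accumulation equation, and goods-market clearing. First I would record two boundedness facts. Employment $L_{t}$ lies in the (normalized) labour force, so it can have a constant growth rate only if $g_{L}=0$; and the measure of automation $m^{*}_{t}=M_{t}-J^{*}_{t}$ lies in $[0,1]$, so a constant growth rate forces $m^{*}_{t}\equiv m^{*}$ constant. Consequently $J^{*}_{t}=M_{t}-m^{*}$, the labour-augmenting term $e^{\alpha J^{*}_{t}}$ has growth rate $\alpha\dot M_{t}$, and constancy of this rate requires $\dot M_{t}\equiv\dot M$; hence the ``effective labour'' input $e^{\alpha J^{*}_{t}}L_{t}$ grows at the constant rate $\alpha\dot M$.

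Second I would feed these facts into \eqref{eq:agg_prod_text}. With $m^{*}_{t}$ constant, both CES weights $(1-m^{*})^{1/\sigma}$ and $\big(\int_{0}^{m^{*}}e^{\alpha j}\,\mathrm{d}j\big)^{1/\sigma}$ are constants, so $Y_{t}$ is a CES aggregate with fixed coefficients of $A^{k}K_{t}$, growing at $g_{K}$, and of $e^{\alpha J^{*}_{t}}L_{t}$, growing at $\alpha\dot M$. A CES aggregate of two positive power terms with fixed weights has an exactly constant growth rate only when the two terms share a common growth rate --- otherwise one term eventually dominates, the aggregate's growth rate is not constant, and, relatedly, the capital share is not stationary, contradicting the BGP requirement --- so $g_{K}=\alpha\dot M$ and then $g_{Y}=\alpha\dot M$ as well. (The knife-edge case $\sigma=1$ gives the same conclusion by taking logs.)

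Third, I would close the loop with the accumulation and accounting identities. Using $P^{k}_{t}/P_{t}=q^{-1}$, the law of motion $\dot K_{t}=q\,I_{t}-\delta K_{t}$ has a constant $g_{K}$ only if $I_{t}/K_{t}$ is constant, i.e.\ $g_{I}=g_{K}$; and goods-market clearing $Y_{t}=C_{t}+I_{t}$, together with the stationarity of the great ratios along a BGP, gives $g_{C}=g_{I}=g_{Y}$. Chaining these equalities yields $g_{K}=g_{Y}=g_{C}=\alpha\dot M=:g$, which is the claim.

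I expect the delicate step to be the second one: ruling out the corner configurations in which capital and effective labour grow at different rates, so that one input asymptotically dominates the CES aggregate. This is where I would lean on the precise notion of BGP used here --- in particular that it requires the capital--output ratio (equivalently, the factor shares) to be stationary --- and on the intertemporal optimality conditions attached to the value functions of capitalists and workers introduced in Section \ref{sec:model}, which tie $g_{C}$ to the common growth rate and so exclude a vanishing or exploding consumption share. The remaining steps are routine constant-growth-rate bookkeeping.
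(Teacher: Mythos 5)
Your argument is correct and follows essentially the same route as the paper's: the paper proves the general Lemma \ref{lemma:management} (of which this is the special case $q_{t}=q$, $a_{0}=a_{1}=0$) by showing that the capital-to-effective-labour ratio $Z_{t}$ is constant along a BGP, deducing $g_{K}=\alpha\dot J^{*}=\alpha\dot M$ and $g_{Y}=g_{K}$ from the CES aggregate, and obtaining $g_{C}=g_{Y}$ from the resource constraint $Y_{t}=C^{T}_{t}+I_{t}/q_{t}+\xi V_{t}$ after folding vacancy costs into consumption, exactly as you do. The only noteworthy difference is that the paper pins down the constancy of $Z_{t}$ via the first-order condition for the management effort in the generalized model, whereas you derive it directly from the requirement that a fixed-weight CES aggregate have an exactly constant growth rate (equivalently, stationary factor shares) --- which is the natural substitute in the special case where the management margin is absent, and is the step the paper implicitly relies on when it asserts $\dot L_{t}=0$ and $\dot m^{*}_{t}=0$ in the steady state.
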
 
 
 Lemma \ref{lemma:management_text} is a simplified version of Lemma \ref{lemma:management} in Appendix \ref{subappendix:genmodelsec1},  used below to study how changes in the rate of automation or  in the pace  labor-augmenting technological progress affect  the economy; see Proposition \ref{prop:comp_stat}. 

\subsection{Matching and State Dynamics}\label{subsection:state_dynamics} Society is made of a unit measure of risk-neutral workers and a continuum of  potential firms (capitalists) with a  common discount rate $\rho$. Lower-case letters represent real  stationary variables, whereas stationary per-capita variables  are  denoted by $\hat{x}_{t}$.\footnote{For example, $w_{t} = W_{t}/\big(P_{t} e^{\alpha  (M_{t}-m^{*}_{t})} \big)$ and $\hat{y}_{t} = Y_{t}/\big(L_{t} e^{\alpha (M_{t}-m^{*}_{t})} \big)$. }

Employed workers are denoted by $L_{t}$ and the remaining $U_{t}=1-L_{t}$ are the unemployed. Vacancies are filled via a matching function $G(U_{t},V_{t})$ which exhibits constant returns to scale in $(U_{t},V_{t})$ and decreasing returns to scale in $V_{t}$ or $U_{t}$ separately. Labor market tightness is defined as the vacancy-unemployment ratio $\theta_{t} = V_{t}/U_{t}$,  the probability of filling a vacancy is $q(\theta_{t})= G(U_{t}, V_{t})/V_{t}$, and the job-finding probability per unit of time is $f(\theta_{t})=G(U_{t}, V_{t})/U_{t}$. 

Introducing changes in the automation and the creation of new tasks, the evolution of employment can be described by

\begin{equation*}
L_{t+\mathrm{d}t}=(1-\lambda_{0}) L_{t} +q(\theta_{t}) V_{t} -  \overbrace{\textcolor{black}{ \Big[ \underbrace{\int_{J^{*}_{t}}^{J^{*}_{t+\mathrm{d}t}} l_{t}(j) dj}_{\text{displacement effect}} - \underbrace{\int_{M_{t}}^{M_{t+\mathrm{d}t}} l_{t}(j) dj}_{\text{reinstatement effect}}}\Big] }^{U^{A}_{t} =\text{technological unemployment}}.
\end{equation*}

As usual, $\lambda_{0}$ is the exogenous  job-separation  rate. An important feature of the employment dynamics is that the displacement and reinstatement effects of the automation and creation of new tasks give rise to  a \emph{technological unemployment} component. Essentially, technological change creates  a displacement effect by replacing labor for capital, and a reinstatement effect by expanding the number of tasks on which labor has a comparative advantage.

In the limit when $\mathrm{d}t \rightarrow 0$, the employment dynamics equation becomes

\begin{equation}\label{eq:employment_dyn}
\dot{L}_{t} =  q(\theta_{t})V_{t} - \lambda_{t} L_{t}
\end{equation}

with $\lambda_{t} = \lambda_{0} + \partial U^{A}_{t}/\partial L_{t}$. The  intuition  of how technological change affects employment is well captured in the following lemma.

\begin{lemma}\label{lemma:tech_unemployment}
Suppose that Assumption \ref{ass:task_function} holds. Then technological unemployment is equal to

\begin{equation}\label{eq:tech_unemployment}
U^{A}_{t} = L_{t} \Big(1-e^{\alpha (\sigma-1)(\dot{M}_{t} - \dot{m}^{*}_{t})}\; \frac{ e^{\alpha (\sigma-1)(m^{*}_{t} + \dot{m}^{*}_{t})} -1}{e^{\alpha (\sigma-1)m^{*}_{t}}-1} \Big),
\end{equation}

and satisfies the relations in Table \ref{table:tech_unemployment} in the steady-state.

\begin{table}[h!]\caption{Scenarios of technological unemployment. }
\begin{center}
\resizebox{0.95\textwidth}{!}{
\renewcommand{\arraystretch}{1.35}
\begin{tabular}{l  c  c  c   c c }
\toprule
& $\frac{\partial U^{A}_{t}}{\partial L_{t}} \; ( \text{if } \dot{M}_{t} > 0)$ & $\frac{\partial U^{A}_{t}}{\partial L_{t}} \; ( \text{if } \dot{M}_{t} < 0)$ & $ \partial U^{A}_{L_{t}}/\partial \dot{M}_{t}  $ &  $\frac{\partial U^{A}_{L_{t}} }{\partial \dot{m}^{*}_{t}} \; ( \text{if } m^{*}_{t}=m_{t}) $ &   $\frac{\partial U^{A}_{L_{t}} }{\partial m^{*}_{t}}$ 
\\\cmidrule{2-6}
$\sigma >1 $ & $ < 0 $ & $>0$ & $<0$ & $<0$ &  $=0$\\
$\sigma \in (0,1) $ & $ > 0 $ & $<0$  & $>0$& $<0$ & $=0$\\
\bottomrule
\end{tabular}}
\end{center}
    \label{table:tech_unemployment}
\end{table}

\end{lemma}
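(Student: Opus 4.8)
The plan is to derive \eqref{eq:tech_unemployment} from the task-level labor demand generated by the same task structure that yields \eqref{eq:agg_prod_text}, and then to obtain Table \ref{table:tech_unemployment} by specializing that formula to a steady state.

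\textbf{Step 1 (labor demand, task by task).} Every labor task $j\in(J^{*}_{t},M_{t}]$ is supplied competitively at its effective unit cost $p_{t}(j)=W_{t}/A^{l}_{t}(j)=W_{t}e^{-\alpha j}$, so the CES (elasticity $\sigma$) demand for task $j$ is $y_{t}(j)=\big(p_{t}(j)/P^{c}_{t}\big)^{-\sigma}Y_{t}$ and the labor it absorbs is $l_{t}(j)=y_{t}(j)/A^{l}_{t}(j)=\big(W_{t}/P^{c}_{t}\big)^{-\sigma}e^{\alpha(\sigma-1)j}Y_{t}$; I extend this exponential schedule to $j>M_{t}$ by the same functional form — the labor that the tasks about to be created would absorb at prevailing prices and output. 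Writing $\gamma\equiv\alpha(\sigma-1)$ and using the adding-up identity $L_{t}=\int_{J^{*}_{t}}^{M_{t}}l_{t}(j)\,\mathrm{d}j$ to eliminate the common prefactor $\big(W_{t}/P^{c}_{t}\big)^{-\sigma}Y_{t}$, the schedule takes the compact form $l_{t}(j)=\dfrac{\gamma\,e^{\gamma(j-J^{*}_{t})}}{e^{\gamma m^{*}_{t}}-1}\,L_{t}$ (using $m^{*}_{t}=M_{t}-J^{*}_{t}$).

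\textbf{Step 2 (the two integrals and the formula).} Since $l_{t}(\cdot)$ is a pure exponential, both pieces of the definition of $U^{A}_{t}$ integrate in closed form and exactly: with $\dot{J}^{*}_{t}=\dot{M}_{t}-\dot{m}^{*}_{t}$ the displacement term equals $\int_{J^{*}_{t}}^{J^{*}_{t+\mathrm{d}t}}l_{t}(j)\,\mathrm{d}j=L_{t}\,\dfrac{e^{\gamma\dot{J}^{*}_{t}}-1}{e^{\gamma m^{*}_{t}}-1}$ and the reinstatement term equals $\int_{M_{t}}^{M_{t+\mathrm{d}t}}l_{t}(j)\,\mathrm{d}j=L_{t}\,\dfrac{e^{\gamma m^{*}_{t}}\big(e^{\gamma\dot{M}_{t}}-1\big)}{e^{\gamma m^{*}_{t}}-1}$. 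Subtracting the second from the first, substituting $\dot{M}_{t}=\dot{J}^{*}_{t}+\dot{m}^{*}_{t}$ back into the exponents, and factoring $L_{t}$ out, the bracket telescopes to $1-e^{\gamma(\dot{M}_{t}-\dot{m}^{*}_{t})}\,\dfrac{e^{\gamma(m^{*}_{t}+\dot{m}^{*}_{t})}-1}{e^{\gamma m^{*}_{t}}-1}$, which is exactly \eqref{eq:tech_unemployment}; the Cobb--Douglas boundary $\sigma=1$ ($\gamma=0$) is recovered as the continuous limit $\gamma\to0$.

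\textbf{Step 3 (the table).} In a steady state $\dot{m}^{*}_{t}=0$, so the bracket collapses to $1-e^{\gamma\dot{M}_{t}}$; hence $U^{A}_{t}=L_{t}\big(1-e^{\alpha(\sigma-1)\dot{M}_{t}}\big)$ and $\partial U^{A}_{t}/\partial L_{t}=1-e^{\alpha(\sigma-1)\dot{M}_{t}}$, whose sign is $-\,\mathrm{sgn}(\sigma-1)\,\mathrm{sgn}(\dot{M}_{t})$ — this delivers the two $\partial U^{A}_{t}/\partial L_{t}$ entries (cases $\dot{M}_{t}>0$ and $\dot{M}_{t}<0$). For the remaining entries I differentiate the per-worker map $f(m^{*}_{t},\dot{M}_{t},\dot{m}^{*}_{t})\equiv U^{A}_{t}/L_{t}$ and evaluate at $\dot{m}^{*}_{t}=0$: $\partial f/\partial\dot{M}_{t}=-\gamma e^{\gamma\dot{M}_{t}}$ has sign $-\,\mathrm{sgn}(\sigma-1)$; $\partial f/\partial\dot{m}^{*}_{t}=-\dfrac{\gamma e^{\gamma\dot{M}_{t}}}{e^{\gamma m^{*}_{t}}-1}$ has numerator and denominator both carrying the sign of $\gamma$ (because $m^{*}_{t}>0$), so it is negative for both $\sigma>1$ and $\sigma\in(0,1)$ — the side condition $m^{*}_{t}=m_{t}$ being needed precisely so that $\dot{m}^{*}_{t}$ is a free perturbation and not one slaved to relative factor prices through $\tilde{J}_{t}$; and since $f|_{\dot{m}^{*}_{t}=0}=1-e^{\gamma\dot{M}_{t}}$ carries no dependence on $m^{*}_{t}$, we have $\partial f/\partial m^{*}_{t}=0$.

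\textbf{Main obstacle.} Everything past Step 1 is routine exponential bookkeeping; the one point that needs genuine care is Step 1 — pinning down $l_{t}(j)$ from the task structure, and especially justifying that the reinstatement term should evaluate the time-$t$ exponential schedule extrapolated beyond $M_{t}$. Once that is granted, the algebra in Steps 2--3 is mechanical.
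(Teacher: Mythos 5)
Your derivation is correct and follows exactly the route the paper's definition of $U^{A}_{t}$ dictates: compute the CES task-level labor demand $l_{t}(j)\propto e^{\alpha(\sigma-1)j}$, normalize by $L_{t}=\int_{J^{*}_{t}}^{M_{t}}l_{t}(j)\,\mathrm{d}j$, evaluate the displacement and reinstatement integrals in closed form, and then sign the steady-state ($\dot{m}^{*}_{t}=0$) derivatives of $U^{A}_{t}/L_{t}$; all of your table entries check out, including the cancellation that makes $\partial U^{A}_{L_{t}}/\partial m^{*}_{t}=0$. The extrapolation of the time-$t$ schedule beyond $M_{t}$ for the reinstatement term, which you rightly flag as the only delicate point, is exactly what the paper's expression $\int_{M_{t}}^{M_{t+\mathrm{d}t}}l_{t}(j)\,\mathrm{d}j$ presupposes, so there is no gap.
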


The bottom line in Lemma \ref{lemma:tech_unemployment} is that  the  rate of technological unemployment   will decrease in an expanding economy when $\sigma >1$ and  will  increase with a higher rate of automation if  mechanizing tasks is economically feasible,  regardless of the value of $\sigma$.  

Analogous to the evolution of employment, the dynamics of aggregate capital with task automation can be expressed as 
 
 \begin{equation}\label{eq:capital_dyn}
 \dot{K}_{t} = I_{t} - \Big(\delta + \frac{\dot{m}^{*}_{t}}{1-m^{*}_{t}}  \Big) K_{t} =I_{t} - \delta_{t} K_{t},
 \end{equation}

where $\delta_{t}$ is the the total depreciation rate of capital. In the steady-state, when $\dot{m}^{*}_{t}=0$,  $\delta_{t}=\delta$.

\subsection{Value Functions}\label{subsection:value_functions} The  value function of an unemployed worker satisfies\footnote{With the exception of time, partial derivatives are denoted by subscripts. For instance, $y_{L_{t}}$ is the partial derivative of stationary output with respect to labor in period $t$. }

\begin{equation}\label{eq:un_value}
(\rho +\alpha  \; \dot{m}^{*}_{t} - g) \phi_{U_{t}} - \dot{\phi}_{U_{t}} = b_{t} + f(\theta_{t})\big( \phi_{L_{t}}-\phi_{U_{t}}\big).
\end{equation} 

In equation \eqref{eq:un_value},  the unemployed   receive flow utility $b_{t}$ and transition to employment with a rate $f(\theta_{t})$, in which case they receive a payoff $ \phi_{L_{t}}$ satisfying

\begin{equation}\label{eq:job_value_worker}
(\rho +\alpha\; \dot{m}^{*}_{t} - g)  \phi_{L_{t}} - \dot{\phi}_{L_{t}} = \lambda_{t}(\phi_{U_{t}}-\phi_{L_{t}})+ w_{t}. 
\end{equation} 

The employed worker receives flow utility from real wages, and at rate $\lambda_{t}$ the job is dissolved. An important feature of equation \eqref{eq:job_value_worker} is that the job separation rate is partly determined by technological progress, meaning that firms can reduce the worth of a job to a worker by increasing technological unemployment. In addition,  the effective discount rate  is the sum of two  components: (i) the common  time-preference parameter $\rho$; and (ii) variations  in the automation and creation of new tasks (which affect $\dot{m}_{t}$ and $g$, respectively). 

The  value of a vacancy for the firm is represented by

\begin{equation}\label{eq:value_vacancy_firm}
(\rho +\alpha\; \dot{m}^{*}_{t} - g)   \pi_{V_{t}} - \dot{\pi}_{V_{t}} =                                                                                                              q(\theta_{t})\big(\pi_{L_{t}}-\pi_{V_{t}}) - \xi_{t}
\end{equation}

Here the firm pays the flow cost of opening a vacancy, $\xi_{t}$, and matches with a worker at a rate $q(\theta_{t})$. Correspondingly, the value of a filled job for the firm satisfies

\begin{equation}\label{eq:value_job_firm}
(\rho +\alpha\;  \dot{m}^{*}_{t} - g)  \pi_{L_{t}} -\dot{\pi}_{L_{t}} = \lambda_{t}(\pi_{V_{t}}-\pi_{L_{t}}) +  \hat{y}_{t} - \hat{k}_{t} \hat{y}_{\hat{k}_{t}} -w_{t}, 
\end{equation}

where $ y_{L_{t}}-w_{t}= \hat{y}_{t} - \hat{k}_{t} \hat{y}_{\hat{k}_{t}} -w_{t}$ is the flow utility earned by the firm.

\section{Wage Bargaining and the Return of Capital}\label{sec:barg_protocol}

This section presents the core of the paper by showing  how aggregate employment and rate of return of capital   are  simultaneously determined by the bargaining protocol of wages. 

\subsection{Bargaining Protocol} The bargaining model is summarized in Figure \ref{fig:bargaining_protocol} by dividing wage outcomes in terms of two competing organizational practices. On one side we find the individual bargaining protocol, characterized for allowing employee and employer competition in the determination of wages. The competition process is represented by introducing a  minimum  time delay  affecting the probability that firms and workers will each find new  bargaining partners  to restart  the  negotiation of wages.   The   minimum time delay  is proportional to a parameter $T^{w}$, which plays a key role in the model by  capturing the firms' \emph{relative}  capacity of finding new workers willing to compete for lower wages. For instance,    $T^{w}$ can increase as a result of  policies or economic conditions which effectively  reduce  the  employment options and the  mobility  of workers,  as it is the case with   non-poaching and non-competing clauses  or with a higher monopsony power of firms   \cite{krueger2022theory, azar2020concentration}.\footnote{Throughout, I will  refer to $T^{w}$ as the \emph{hiring capacity of firms} or as the \emph{relative mobility of workers}.} Similarly,  $T^{w}$ can decrease by passing legislative action which mitigates the capacity of firms to lower wages through competition, as it can be expected by setting   higher minimum wages \cite[p. 18]{naidu2022there}.

In the left-hand side  of  Figure \ref{fig:bargaining_protocol}, the model introduces the possibility that workers will choose a collective bargaining process when negotiating wages.

\begin{figure}
 	\begin{center}
    \small
\begin{istgame}[font=\tiny]
\setistEllipseNodeStyle[white]
 \xtdistance{10mm}{22mm}
 \setxtinfosetstyle{dotted}
\istroot(0)(0,0)[normal node]{\textbf{Worker}}+2mm..62mm+
\istb{\text{Collective}}[above,sloped]
 \istb{\text{Individual}}[above,sloped]
  \endist
  
  \istroot(c1)(0-1)<-90>{$\underset{w}{\mathrm{max}} \Big(L_{t}(\phi_{L}-\phi_{U})\Big)^{\Gamma^{u}} \Big(\pi^{L}\Big)^{1-\Gamma^{u}}$}
  \endist
  

  
  \istroot(1)(0-2)[chance node]
  \istB{\frac{\theta }{T^{w}+\theta}}[al]
 \istB{\frac{T^{w}}{T^{w}+\theta}}[ar]
  \endist
    \istroot(2)(1-2)<0>{\text{f.1}}
  \istbm
    \istbm
      \endist
     \istroot(3)(1-1)<180>{\text{f.1}}
  \istbm
    \istbm
      \endist
      \xtInfosetO(2)(3){$t=0$}
            
      \cntmdistance*{9mm}{10mm}
      \cntmApreset[dashed]
      \istrootocntmA(1r)(1-2)  
      \istbA{w}[br] \istbm 
          \endist
                \istrootocntmA(1l)(1-1)  
      \istbA{w}[br] \istbm 
          \endist
          
          \istrooto(2a)(1r-1)<120>{\text{w.1}}
          \istbm \istb{N}[al]    \istb{Y}[r]{ (\mathbb{A^{I}},0)}[right]  \endist
          
                    \istrooto(2b)(1l-1)<120>{\text{w.1}}
     \istb{Y}[l]{ (\mathbb{A^{I}},0)}[left]  \istb{N}[ar]   \istbm \endist
         \istroot(4)(2a-2)<0>{\text{w.1}}
  \istbm
    \istbm
      \endist
        \istroot(5)(2b-2)<180>{\text{w.1}}
  \istbm
    \istbm
      \endist
       \istrootocntmA(2ra)(2a-2)  
      \istbA{w'}[br] \istbm 
          \endist
              \istrootocntmA(2lb)(2b-2)  
      \istbA{w'}[br] \istbm 
          \endist
          
              \xtInfoset*(4)(5){$t=\Delta$}[b]
  
          \istrooto(3a)(2ra-1)<120>{\text{f.1}}
          \istbm \istb{N}[al]    \istb{Y}[r]{ (\mathbb{A^{I}},\Delta)}[right]  \endist
                  \istrooto(3b)(2lb-1)<120>{\text{f.1}}
     \istb{Y}[l]{ (\mathbb{A^{I}},\Delta)}[left]    \istb{N}[ar]        \istbm   \endist

         \istroot(6)(3a-2)[white]<0>{\text{f.1}}
  \istbm
    \istb[dashed]
      \istbm
      \endist
      
        \istroot(7)(3b-2)[white]<180>{\text{f.1}}
  \istbm
    \istb[dashed]
      \istbm
      \endist
                \xtInfoset*(6)(7){$\vdots$}[b]
                
                      \istroot(8)(6-2)<0>{\text{w.1}}
  \istbm
    \istbm
      \endist
      
              \istroot(9)(7-2)<180>{\text{f.1}}
  \istbm
    \istbm
      \endist
                
       \istrootocntmA(Tra)(6-2)  
      \istbA{w'}[br] \istbm 
          \endist
              \istrootocntmA(Tlb)(7-2)  
      \istbA{w}[br] \istbm 
          \endist
          
                    \istrooto(Trf)(Tra-1)<120>{\text{f.1}}
 \istbm      \istbm   \istb{N}[al]   \istb{Y}[r]{ (\mathbb{A^{I}}, \frac{ \Delta}{q(\theta)})}[r] \istbA(3.5)<grow=0>{\text{Switch}}[a]    \endist
    
                    \istrooto(Trw)(Tlb-1)<120>{\text{w.1}}
 \istbA(3.5)<grow=-180>{\text{Switch}}[a]   \istb{Y}[l]{ (\mathbb{A^{I}}, \frac{ \Delta T^{w}}{f(\theta)})}[left]   \istb{N}[ar]     \istbm     \istbm      \endist
          
               \xtInfoset*(8)(9){$
               t=\Delta T(\theta)$}[b]
                
      \istroot(11)(Trf-3)<0>{\text{f.1}}
  \istbm
    \istbm
      \endist
      
       \istrootocntmA(Iof)(Trf-3)  
      \istbA{w}[br] \istbm 
          \endist
          
            \istroot(12)(Trf-5)<0>{\text{f.1}}
  \istbm
    \istbm
      \endist
      
             \istrootocntmA(I2of)(Trf-5)  
      \istbA{w}[br] \istbm 
          \endist
          
          \istrooto(12)(I2of-1)<120>{\text{w.2}}
       \istb{Y}[l]{ (\mathbb{A^{I}},0)}[left]        \istb{N}[ar] \istbm  \endist

          \istrooto(11)(Iof-1)<120>{\text{w.1}}
          \istbm \istb[dashed]     \istbm   \endist

      \istroot(13)(Trw-1)<-180>{\text{f.2}}
  \istbm
    \istbm
      \endist
      
            \istrootocntmA(Iof2)(Trw-1)  
      \istbA{w}[br] \istbm 
          \endist
          
              \istrooto(13)(Iof2-1)<120>{\text{w.1}}
          \istbm \istb{N}[al]    \istb{Y}[r]{ (\mathbb{A^{I}},0)}[right]  \endist

            \istroot(14)(Trw-3)<-180>{\text{w.1}}
  \istbm
    \istbm
      \endist

          \istrootocntmA(Iow1)(Trw-3)  
      \istbA{w'}[br] \istbm 
          \endist
          
                    \istrooto(14)(Iow1-1)<120>{\text{f.1}}
          \istbm \istb[dashed]     \istbm   \endist

              \xtInfoset*(Iow1)(Iof){$\begin{matrix}
               t=\Delta T(\theta)\\
               +\Delta\end{matrix}$}[b]

\end{istgame}
    \end{center}
    \caption{\scriptsize BARGAINING PROTOCOL.  \label{fig:bargaining_protocol}}
       \caption{ \scriptsize \emph{Notes--- The individual agreement payoff is $\mathbb{A^{I}}=(\pi_{L}, \phi_{L}-\phi_{U})$. Under collective bargaining,  the firm's payoff satisfies $(\rho + \alpha\;  \dot{m}^{*}_{t} -g) \pi^{L}=y_{t} - w_{t} L_{t} - \frac{\lambda_{t} \xi_{t} L_{t}}{q(\theta_{t})}$. The response time between offers is $\Delta$. The notation $f.i$ means firm $i$ and $w.j$ means worker $j$.}}
   
  \end{figure}

\subsubsection{Individual Bargaining}  The individual bargaining model has the following structure, shown as an extensive-form game in  Figure \ref{fig:bargaining_protocol}.

\begin{itemize}
\item The  first node in the right-hand side of   Figure \ref{fig:bargaining_protocol} is a chance node defining the type of competitive process between workers and firms. Each  worker takes a random sample $T(\theta) =\mathrm{min}\Big\{ T^{w}(\theta) \sim \mathcal{E}\big(T^{w}/f(\theta) \big),  T^{F}(\theta) \sim \mathcal{E}\big(1/q(\theta)\big)\Big\}$.\footnote{Here $\mathcal{E}\big(T^{w}/f(\theta) \big)$ is an exponential distribution with mean $T^{w}/f(\theta) $. } If  $T(\theta)=T^{F}(\theta)$, the firm will be the first to find a new partner to start bargaining  after a time delay  $\Delta T^{F}(\theta)$, measured by the average duration of a vacant job.  The contrary occurs when $T(\theta)=T^{w}(\theta)$, in which case the average time delay is  given by the mean duration of unemployment, $1/f(\theta)$, multiplied by the hiring capacity of firms, $T^{w}$.   

\item Given the law of large numbers, $T(\theta)=T^{F}(\theta)$ with probability $T^{w}/(T^{w}+\theta)$ and $T(\theta)=T^{w}(\theta)$ with probability $\theta/(\theta+ T^{w})$. 

\item If $T(\theta)=T^{F}(\theta)$, the  game follows the steps described by \citeasnoun{shaked1984involuntary}, which is depicted in the rightmost branch of Figure \ref{fig:bargaining_protocol}. However,  if $T(\theta)=T^{w}(\theta)$, the game replicates the   alternating  offers model of \citeasnoun{rubinstein1982perfect} since   firms are identical by assumption.
\end{itemize}

The following proposition summarizes the main results of the individual bargaining protocol.

\begin{prop}\label{prop:non_union_wage} Suppose that firms always make the first offer, that  $\Delta$ tends to zero, and that  the capitalists' response time is $\Delta_{f} = \gamma^{f}\Delta$, with $\gamma^{f} > 0$.   Applying the law of large numbers, 

\begin{enumerate}[label=(\emph{\roman*})]

\item if $T(\theta)=T^{w}/f(\theta_{t})$, \; $w^{na}_{t}  = b_{t}  + \Psi^{na}_{t}  \big(y_{L_{t}} - b_{t}\big)$, with

\begin{equation*}
 \Psi^{na}_{t}  = \frac{\Gamma^{na} [\rho+ \alpha \;  \dot{m}^{*}_{t}  -g +\lambda_{t} + f(\theta_{t})]}{\rho+ \alpha \;  \dot{m}^{*}_{t}  -g +\lambda_{t} +  \Gamma^{na} f(\theta_{t})},\; \;    \Gamma^{na}  =\frac{\gamma^{f}}{1+\gamma^{f}}.
\end{equation*}

\item  If $T(\theta)=1/q(\theta_{t})$, \; $w^{nb}_{t}  = b_{t}  + \Psi^{nb}_{t}  \big(y_{L_{t}} - b_{t}\big)$, \; with 

\begin{equation*}
\Psi^{nb}_{t}  = \frac{\Gamma^{nb}_{t} [\rho + \alpha\;  \dot{m}^{*}_{t}  -g +\lambda_{t} + f(\theta_{t})]}{\rho + \alpha\;  \dot{m}^{*}_{t}  -g +\lambda_{t}  +  \Gamma^{nb}_{t} f(\theta_{t})},
 \; \Gamma^{nb}_{t} =  \frac{\gamma^{f}(1-q(\theta_{t}))}{1+\gamma^{f} +q(\theta_{t}) (1-\gamma^{f})}.
\end{equation*}

\item The  average wage rate from individual bargaining is

\begin{equation}\label{eq:ind_wage}
w^{n}_{t} = b_{t} + \Psi^{n}_{t}  \big(y_{L_{t}} - b_{t}\big), \quad \text{with  }  \Psi^{n}_{t} =\frac{T^{w} \; \Psi^{nb}_{t} + \theta_{t} \; \Psi^{na}_{t}}{T^{w}+\theta_{t}}.
\end{equation}

\end{enumerate}
\end{prop}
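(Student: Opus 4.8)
The plan is to treat the three parts by a common two-step argument. For a given outcome of the initial chance node I first determine the worker's equilibrium share of the match surplus by solving the corresponding non-cooperative game in Figure~\ref{fig:bargaining_protocol}, and then I convert that sharing rule into a wage using the asset equations \eqref{eq:un_value}--\eqref{eq:value_job_firm} together with the free-entry condition $\pi_{V_{t}}=0$. Part~(iii) then follows at once by averaging the two wages with the law-of-large-numbers weights $\theta_{t}/(T^w+\theta_{t})$ and $T^w/(T^w+\theta_{t})$ obtained from the exponential race $T(\theta)=\min\{T^w(\theta),T^F(\theta)\}$.

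Part~(i) is the branch $T(\theta)=T^w(\theta)$. Because all firms are identical, a worker who finds a new partner faces a clone of the incumbent, so the ``switch'' move is payoff-irrelevant and the game reduces to the bilateral alternating-offers game of \citeasnoun{rubinstein1982perfect}, with the firm proposing first, response times $\Delta$ (worker) and $\gamma^f\Delta$ (firm), and both parties discounting match continuation values at the rate $r_{t}\equiv\rho+\alpha\,\dot{m}^{*}_{t}-g+\lambda_{t}$ that governs \eqref{eq:job_value_worker}--\eqref{eq:value_job_firm}. The standard subgame-perfect characterization---each proposer offers the responder precisely her rejection value---gives the first mover the share $(1-e^{-r_{t}\Delta})/(1-e^{-r_{t}(1+\gamma^f)\Delta})\to 1/(1+\gamma^f)$ as $\Delta\to0$, so the worker's surplus share is $\Gamma^{na}=\gamma^f/(1+\gamma^f)$, i.e.\ $\phi_{L_{t}}-\phi_{U_{t}}=\Gamma^{na}\big[(\phi_{L_{t}}-\phi_{U_{t}})+\pi_{L_{t}}\big]$. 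Feeding in the stationary forms of \eqref{eq:un_value}--\eqref{eq:value_job_firm}---namely $(r_{t}+f(\theta_{t}))(\phi_{L_{t}}-\phi_{U_{t}})=w_{t}-b_{t}$ and, after imposing $\pi_{V_{t}}=0$ in \eqref{eq:value_job_firm}, $r_{t}\,\pi_{L_{t}}=y_{L_{t}}-w_{t}$---and eliminating $\phi_{L_{t}}-\phi_{U_{t}}$ and $\pi_{L_{t}}$ yields $w^{na}_{t}=b_{t}+\Psi^{na}_{t}(y_{L_{t}}-b_{t})$ with the stated $\Psi^{na}_{t}$; this is a one-line linear elimination.

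Part~(ii) is the branch $T(\theta)=T^F(\theta)$, where the firm has a live outside option: it can abandon the current match and, after an exponential delay of mean $\Delta/q(\theta_{t})$, restart bargaining with a fresh worker. This is the outside-option game of \citeasnoun{shaked1984involuntary} drawn in the right-hand branch of Figure~\ref{fig:bargaining_protocol}, and I would solve it by the usual immediate-agreement route: conjecture a stationary pair of proposals, impose the firm's and the worker's indifference conditions between accepting and rejecting, and replace the firm's post-rejection continuation value by the value of the ``switch'' node---equal to $q(\theta_{t})$ times its surplus in a fresh match plus $1-q(\theta_{t})$ times its reject-and-counteroffer value---before sending $\Delta\to0$. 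This collapses the worker's surplus share to $\Gamma^{nb}_{t}=\gamma^f(1-q(\theta_{t}))/\big(1+\gamma^f+q(\theta_{t})(1-\gamma^f)\big)$, which correctly interpolates between $\Gamma^{na}$ at $q(\theta_{t})=0$ (no replacement feasible) and $0$---the competitive wage---at $q(\theta_{t})=1$. Converting to $w^{nb}_{t}=b_{t}+\Psi^{nb}_{t}(y_{L_{t}}-b_{t})$ is then verbatim the elimination of part~(i) with $\Gamma^{na}$ replaced by $\Gamma^{nb}_{t}$.

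For part~(iii), conditional on the chance move a mass $\theta_{t}/(T^w+\theta_{t})$ of new matches is governed by case~(i) and a mass $T^w/(T^w+\theta_{t})$ by case~(ii); since $w^{na}_{t}$ and $w^{nb}_{t}$ are affine in $y_{L_{t}}-b_{t}$ with the common intercept $b_{t}$, the average wage is $w^{n}_{t}=b_{t}+\Psi^{n}_{t}(y_{L_{t}}-b_{t})$ with $\Psi^{n}_{t}=(T^w\Psi^{nb}_{t}+\theta_{t}\Psi^{na}_{t})/(T^w+\theta_{t})$. The step I expect to be the main obstacle is part~(ii): proving existence and uniqueness of the subgame-perfect equilibrium of the frictional outside-option game and tracking the replacement probability $q(\theta_{t})$ correctly through the $\Delta\to0$ limit so as to recover exactly the stated $\Gamma^{nb}_{t}$, including its boundary values, which must reproduce both the pure-Rubinstein and the competitive outcomes. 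The Rubinstein step~(i), the averaging in~(iii), and the two surplus-to-wage conversions are routine.
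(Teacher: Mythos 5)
Your overall architecture coincides with the paper's: treat the branch $T(\theta)=T^{w}/f(\theta)$ as a Rubinstein game (worker's switch to an identical firm is payoff-irrelevant, so $\Gamma^{na}=\gamma^{f}/(1+\gamma^{f})$), treat the branch $T(\theta)=1/q(\theta)$ as the Shaked--Sutton outside-option game, convert each surplus share into a wage via the stationary asset equations with $\pi_{V_t}=0$, and average with the exponential-race weights. Parts (i) and (iii) and the surplus-to-wage elimination are exactly what the paper does and are correct as you state them (one can check that $\phi_{L}-\phi_{U}=\frac{\Gamma}{1-\Gamma}\,\frac{y_{L}-w}{r_t}=\frac{w-b}{r_t+f(\theta)}$ indeed yields $\Psi=\Gamma(r_t+f)/(r_t+\Gamma f)$).

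The genuine gap is in part (ii), and it is not merely a deferred computation: the solution mechanism you describe would give the wrong answer. You propose to ``replace the firm's post-rejection continuation value by $q(\theta_t)$ times its surplus in a fresh match plus $1-q(\theta_t)$ times its reject-and-counteroffer value.'' If you impose that per-period mixture in a stationary immediate-agreement conjecture, the fresh-match value and the continuation value coincide in equilibrium (the new worker is a clone of the old one), so the mixture collapses to $\delta^{f}\mathcal{M}$ and the fixed point $\mathcal{M}=S^{n}(1-\delta^{w})+\delta^{w}\delta^{f}\mathcal{M}$ returns the plain Rubinstein share $1/(1+\gamma^{f})$, with no dependence on $q(\theta_t)$ at all. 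In the paper, $q$ enters through the \emph{horizon}, not through a per-period probability: the firm can only switch at a designated switch point reached after $T(\theta)=1/q(\theta)$ alternating-offer rounds, the suprema satisfy $\mathcal{M}_{0}=\max\{\delta^{f}(S^{n}(1-\delta^{w})+\delta^{w}\mathcal{M}_{0}),\,\mathcal{M}\}$, a contradiction argument shows the max is attained at $\mathcal{M}$ (the switch option binds for the firm, unlike for the worker in part (i)), and backward induction over the $T(\theta)$ rounds plus an application of L'H\^opital's rule as $\Delta\to 0$ delivers $\pi_{L}(w)=S^{n}\frac{T(\theta)+1}{T(\theta)(1+\gamma^{f})+(1-\gamma^{f})}$, whence $\Gamma^{nb}_{t}=\gamma^{f}(1-q)/\bigl(1+\gamma^{f}+q(1-\gamma^{f})\bigr)$. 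You would need to (a) set up the finite-horizon backward induction between switch points, and (b) resolve the max operator at the switch node, before taking the limit; checking the boundary values $q=0$ and $q=1$ is necessary but not sufficient to pin down the interior formula.
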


In all cases, $\Gamma^{(\cdot)}_{t}$ and $\Psi^{(\cdot)}_{t}$ depict the \emph{intrinsic} and the \emph{actual} bargaining power of labor, with $ \Psi^{na}_{t} \geq \Psi^{n}_{t} \geq \Psi^{nb}_{t}$ for all $\theta \geq 0$.   The importance of Proposition \ref{prop:non_union_wage} can be well understood by studying how  worker power  changes  with variations in the labor market, the relative mobility of labor ($T^{w}$), the pace of automation, and the labor-augmenting  technical progress.  This is summarized in the following corollary.

\begin{corollary}\label{coro:barg_power} Suppose that the assumptions in Proposition \ref{prop:non_union_wage} hold. 

\begin{enumerate}[label=(\emph{\roman*})]

\item (\emph{Loose labor market}) If $\theta \rightarrow 0$, then
\begin{equation*}
\begin{cases}
\Psi^{na}_{t} \rightarrow \Psi^{n}_{t}  \rightarrow \Psi^{nb}_{t}  \rightarrow 0. \\
    \Gamma^{na} \rightarrow \Gamma^{nb}_{t}  \rightarrow 0.
  \end{cases}
\end{equation*}

\item (\emph{Tight labor market}) If  $\theta \rightarrow \infty$,  then

\begin{equation*}
\begin{cases}
\Psi^{nb}_{t} \rightarrow \Psi^{n}_{t}  \rightarrow \Psi^{na}_{t}  \rightarrow 1. \\
    \Gamma^{nb}_{t} \rightarrow  \Gamma^{na} <0.5 
  \end{cases}
\end{equation*}

\item (\emph{Relative mobility of labor}) A lower relative mobility of labor ($T^{w} \uparrow$) reduces the power of workers. That is, 

\begin{equation*}
\frac{\partial \Psi^{n}_{t}}{\partial T^{w}} = \frac{1}{T^{w} +\theta}\;  \big[\Psi^{nb}_{t}-\Psi^{n}_{t}\big] \leq 0 \quad  \text{for all } \theta \geq 0.
\end{equation*}

\item (\emph{Automation}) Suppose that mechanizing tasks is feasible.   Then 

\begin{equation*}
\frac{\partial \Psi^{n}_{t}}{\partial \dot{m}^{*}_{t}} >0 \;\;\;\;\;  \text{if    }  \left|\frac{\partial \lambda_{t}}{\partial \dot{m}^{*}_{t}}  \right| > \alpha. 
\end{equation*}

\item (\emph{Labor-augmenting  technical progress}) A higher equilibrium rate of growth always increases the bargaining power of labor if $\dot{M}_{t}>0$, i.e., $\partial \Psi^{n}_{t}/\partial  \dot{M}_{t} >0$ for all $\sigma >0$ and $\dot{M}_{t}>0$. Particularly, the following is true:

\begin{equation*}
\left.\frac{\partial \Psi^{n}_{t}}{\partial \dot{M}_{t}}\right\vert_{\sigma >1}  >\; \left.\frac{\partial \Psi^{n}_{t}}{\partial \dot{M}_{t}}\right\vert_{\sigma \in (0,1)}  >   \; \left.\frac{\partial \Psi^{n}_{t}}{\partial \dot{M}_{t}}\right\vert_{\sigma \in (0,1),  \dot{M}_{t}<0} \; \lesseqqgtr \; 0 . 
\end{equation*}

\end{enumerate}

\end{corollary}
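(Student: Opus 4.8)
The plan is to funnel all five parts through two scalar functions. Set $f_t:=f(\theta_t)$, $q_t:=q(\theta_t)$, abbreviate the modified discount rate $R_t:=\rho+\alpha\dot m^*_t-g+\lambda_t$, and define $\varphi(\Gamma;R,f):=\Gamma(R+f)/(R+\Gamma f)$ and $\psi(q):=\gamma^f(1-q)/\big[(1+\gamma^f)+q(1-\gamma^f)\big]$. By Proposition \ref{prop:non_union_wage}, $\Psi^{na}_t=\varphi(\Gamma^{na};R_t,f_t)$, $\Psi^{nb}_t=\varphi(\Gamma^{nb}_t;R_t,f_t)$ with $\Gamma^{nb}_t=\psi(q_t)$, and $\Psi^n_t$ is the convex combination of the two with weights $\theta_t/(T^w+\theta_t)$ and $T^w/(T^w+\theta_t)$. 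I would first record the elementary facts, valid under the maintained restrictions keeping $R_t>0$ (so \eqref{eq:un_value}--\eqref{eq:value_job_firm} are well posed): on $\Gamma\in[0,1]$, $\partial\varphi/\partial\Gamma=R(R+f)/(R+\Gamma f)^2>0$, $\partial\varphi/\partial f=\Gamma R(1-\Gamma)/(R+\Gamma f)^2\ge0$, $\partial\varphi/\partial R=\Gamma f(\Gamma-1)/(R+\Gamma f)^2\le0$, $\varphi(0;\cdot)=0$, $\varphi(1;\cdot)=1$; and $\psi'(q)=-2\gamma^f/\big[(1+\gamma^f)+q(1-\gamma^f)\big]^2<0$ with $\psi(0)=\Gamma^{na}=\gamma^f/(1+\gamma^f)$, $\psi(1)=0$. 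These give $0\le\Gamma^{nb}_t\le\Gamma^{na}<1$, hence $\Psi^{nb}_t\le\Psi^{na}_t$ (monotonicity of $\varphi$ in $\Gamma$) and the sandwich $\Psi^{na}_t\ge\Psi^n_t\ge\Psi^{nb}_t$ quoted after the proposition.

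Parts (i)--(ii) then follow by inserting the matching-function boundary behavior $f(0)=0$, $q(0)=1$, $q(\infty)=0$, $f(\infty)=\infty$. As $\theta\to0$, $q_t\to1$ and $f_t\to0$ force $\Gamma^{nb}_t=\psi(q_t)\to0$ and $\Psi^{nb}_t=\varphi(\Gamma^{nb}_t;R_t,f_t)\to0$, while the weight $\theta_t/(T^w+\theta_t)$ on $\Psi^{na}_t$ vanishes, so $\Psi^n_t\to\Psi^{nb}_t\to0$ ($\Psi^{na}_t$ itself tends to the constant $\Gamma^{na}$). As $\theta\to\infty$, $q_t\to0$ gives $\Gamma^{nb}_t\to\psi(0)=\Gamma^{na}$, $f_t\to\infty$ makes $\varphi(\Gamma;R_t,f_t)\to1$ for any fixed $\Gamma\in(0,1]$, and the weight on $\Psi^{na}_t$ tends to $1$, so all three powers tend to $1$; the bound $\Gamma^{na}<1/2$ is precisely $\gamma^f<1$, i.e.\ that the firm's response delay is the shorter one.

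For (iii) I would differentiate the convex combination with $\theta_t$ held fixed; since neither $\Psi^{na}_t$ nor $\Psi^{nb}_t$ depends on $T^w$, the quotient rule gives $\partial\Psi^n_t/\partial T^w=\theta_t(\Psi^{nb}_t-\Psi^{na}_t)/(T^w+\theta_t)^2$, and the identity $\Psi^{nb}_t-\Psi^n_t=\theta_t(\Psi^{nb}_t-\Psi^{na}_t)/(T^w+\theta_t)$ rewrites this as $(\Psi^{nb}_t-\Psi^n_t)/(T^w+\theta_t)\le0$ by the sandwich. For (iv)--(v) the organizing observation is that, with $\theta_t$ (hence $f_t,q_t,\Gamma^{nb}_t$) and the weights fixed, $\dot m^*_t$ and $\dot M_t$ reach $\Psi^n_t$ only through $R_t$---via the explicit $\alpha\dot m^*_t-g$ with $g=\alpha\dot M_t$ (Lemma \ref{lemma:management_text}) and via $\lambda_t=\lambda_0+\partial U^A_t/\partial L_t$ (Lemma \ref{lemma:tech_unemployment})---so $\partial\Psi^n_t/\partial x=(\partial\Psi^n_t/\partial R_t)(\partial R_t/\partial x)$ with $\partial\Psi^n_t/\partial R_t<0$ inherited from $\partial\varphi/\partial R\le0$. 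For (iv), differentiating \eqref{eq:tech_unemployment} gives $\partial\lambda_t/\partial\dot m^*_t=-\alpha(\sigma-1)e^{\alpha(\sigma-1)(\dot M_t-\dot m^*_t)}/(e^{\alpha(\sigma-1)m^*_t}-1)$, which is negative for every $\sigma>0$ (the factors $\sigma-1$ and $e^{\alpha(\sigma-1)m^*_t}-1$ always share sign), consistent with Table \ref{table:tech_unemployment}; hence $\partial R_t/\partial\dot m^*_t=\alpha+\partial\lambda_t/\partial\dot m^*_t<0$ exactly when $|\partial\lambda_t/\partial\dot m^*_t|>\alpha$, giving $\partial\Psi^n_t/\partial\dot m^*_t>0$ under that condition. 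For (v), evaluating at a steady state ($\dot m^*_t=0$) collapses \eqref{eq:tech_unemployment} to $\partial U^A_t/\partial L_t=1-e^{\alpha(\sigma-1)\dot M_t}$, so $\partial R_t/\partial\dot M_t=-\alpha\big[1+(\sigma-1)e^{\alpha(\sigma-1)\dot M_t}\big]$; for $\dot M_t>0$ the bracket is positive for every $\sigma>0$ (it exceeds $1$ when $\sigma>1$; when $\sigma\in(0,1)$ the exponent is negative, so $(\sigma-1)e^{\alpha(\sigma-1)\dot M_t}\in(\sigma-1,0)\subset(-1,0)$), whence $\partial\Psi^n_t/\partial\dot M_t>0$, while for $\dot M_t<0$ with $\sigma\in(0,1)$ the bracket can vanish or turn negative, which is the ``$\lesseqqgtr0$''. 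The ordering of the three magnitudes then reduces to comparing $\partial R_t/\partial\dot M_t$ across regimes (most negative for $\sigma>1$, $\dot M_t>0$; intermediate for $\sigma\in(0,1)$, $\dot M_t>0$; least negative, possibly positive, for $\sigma\in(0,1)$, $\dot M_t<0$), together with checking that the accompanying shifts in the level of $R_t$---hence in $|\partial\varphi/\partial R|$ evaluated there---reinforce rather than reverse that pattern.

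I expect this last point to be the only genuine obstacle: the strict ordering in (v) is not decided by $\partial R_t/\partial\dot M_t$ alone, because across the three regimes the steady-state $\lambda_t$, and therefore $R_t$ itself and the curvature $\partial\varphi/\partial R$ at that $R_t$, also move, so one must confirm those level effects do not dominate. A secondary nuisance is bookkeeping: making explicit the parameter restrictions under which $R_t>0$, since when $\sigma>1$ and growth is fast $\lambda_t=\lambda_0+1-e^{\alpha(\sigma-1)\dot M_t}$ can fall below $\lambda_0$ (in principle below $0$), and the sign arguments above tacitly use $R_t>0$ and $R_t+f_t>0$.
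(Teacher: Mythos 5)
Your proof is correct and takes essentially the same route as the paper's: for (iv)--(v) the paper differentiates directly to obtain $\partial\Psi^{ni}/\partial x=\frac{\partial(\rho+\alpha\dot m^{*}-g+\lambda)/\partial x}{\rho+\alpha\dot m^{*}-g+\lambda+\Gamma^{ni}f(\theta)}\big[\Gamma^{ni}-\Psi^{ni}\big]$, which is exactly your chain-rule factorization through $R_t$ once one notes $\Gamma^{ni}-\Psi^{ni}=\Gamma^{ni}f(\Gamma^{ni}-1)/(R_t+\Gamma^{ni}f)$, and it then passes to $\Psi^{n}$ as a convex combination, while parts (i)--(iii) are simply declared ``straightforward.'' The two caveats you flag --- that $\Psi^{na}_{t}\to\Gamma^{na}\neq 0$ as $\theta\to 0$ (so the first limit in (i) holds only in the chain reading), and that the strict ordering of magnitudes in (v) is not settled by the sign of $\partial R_t/\partial\dot M_t$ alone --- are both genuine, and the paper's own proof addresses neither.
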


The results in Corollary \ref{coro:barg_power} are quite intuitive and easy to understand. For instance, the model makes it clear that loose labor markets work as  endogenous mechanisms that  reduce the bargaining power of labor.  Conversely, a tight labor market empowers workers, though  it has  a limited impact on $\Gamma^{na}$ and $\Gamma^{nb}_{t}$  if firms always make the first offer. A relative reduction in the mobility of labor lowers  $\Psi^{n}_{t}$  by increasing the probability that workers will have to compete for each available vacancy. Finally,  extending on  the results of \citeasnoun{aghion1994growth} and \citeasnoun{acemoglu2018race}, technology can have two opposing effects on worker power.  On one hand,  higher automation is expected  to  weaken workers  when the increase in technological unemployment surpasses the reduction in the effective discount rate generated by the rise in the value of capital per  unit of time.  On the other hand,  labor power will generally benefit from a higher productivity growth  through the well-known  \emph{capitalization effect}

\subsubsection{Collective Bargaining } Similar to \citeasnoun{taschereau2020union}, collective  bargaining is modeled as a Nash bargaining problem between the firm and all its workers. If an agreement is reached, workers receive the net reward from employment and  the firm receives the corresponding equilibrium  value derived from the Hamilton-Jacobi-Bellman equation. Otherwise, the firm loses all its workers and  has to rehire its entire   workforce the following period. 

The next proposition presents the solution of the Nash bargaining problem in the left-hand side of Figure \ref{fig:bargaining_protocol}.

\begin{prop}\label{prop:union_wage}   The real wage under collective bargaining is given by

\begin{equation}\label{eq:union_wages}
w^{u}_{t}  = b_{t}  + \Psi^{u}_{t}  \Big[y_{L_{t}} -b_{t} +  \frac{\rho+  \alpha\;  \dot{m}^{*}_{t} -g +\lambda_{t}}{\rho+\alpha\;  \dot{m}^{*}_{t} -g} \big( \hat{y}_{t} -y_{L_{t}}\big) \Big] 
\end{equation}

with $ \Psi^{u}_{t}  = \frac{\Gamma^{u} [\rho+ \alpha\;  \dot{m}^{*}_{t}  -g +\lambda_{t} + f(\theta_{t})]}{\rho+ \alpha\;  \dot{m}^{*}_{t}  -g +\lambda_{t} +  \Gamma^{u} f(\theta_{t})}$.
\end{prop}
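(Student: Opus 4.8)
The plan is to solve the generalized Nash bargaining problem displayed in the ``Collective'' branch of Figure~\ref{fig:bargaining_protocol}---splitting the joint match surplus so that the workforce obtains the share $\Gamma^{u}$---after substituting the closed forms of the two surpluses implied by the value functions of Section~\ref{subsection:value_functions} together with the free-entry condition. I work in the steady state, so $\dot{\phi}_{L_{t}}=\dot{\phi}_{U_{t}}=0$; off a steady state one carries the $\dot{\phi}$ terms through with no change in structure.

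First I would pin down the worker surplus. Subtracting~\eqref{eq:un_value} from~\eqref{eq:job_value_worker} gives
\[
\phi_{L_{t}}-\phi_{U_{t}}=\frac{w^{u}_{t}-b_{t}}{\rho+\alpha\,\dot{m}^{*}_{t}-g+\lambda_{t}+f(\theta_{t})},
\]
so the collective worker surplus entering the Nash product is $L_{t}(\phi_{L_{t}}-\phi_{U_{t}})$; this denominator is exactly what deposits $f(\theta_{t})$ into $\Psi^{u}_{t}$. Next I would rewrite the firm surplus $\pi^{L}$ from the notes to Figure~\ref{fig:bargaining_protocol} in the same variables: put $y_{t}=\hat{y}_{t}L_{t}$, use the identity $\hat{y}_{t}-y_{L_{t}}=\hat{k}_{t}\hat{y}_{\hat{k}_{t}}$ recorded below~\eqref{eq:value_job_firm} (a consequence of the linear homogeneity of~\eqref{eq:agg_prod_text} in $(K_{t},L_{t})$), and eliminate the hiring-cost term using the free-entry condition $\pi_{V_{t}}=0$, which applied to~\eqref{eq:value_vacancy_firm}--\eqref{eq:value_job_firm} delivers $\xi_{t}/q(\theta_{t})=(y_{L_{t}}-w^{u}_{t})/(\rho+\alpha\,\dot{m}^{*}_{t}-g+\lambda_{t})$. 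This leaves $\pi^{L}$ as an affine function of $w^{u}_{t}$ with coefficients assembled only from $\hat{y}_{t}$, $y_{L_{t}}$, $b_{t}$ and the rates $\rho+\alpha\,\dot{m}^{*}_{t}-g$, $\lambda_{t}$, $f(\theta_{t})$.

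The maximiser of $[L_{t}(\phi_{L_{t}}-\phi_{U_{t}})]^{\Gamma^{u}}(\pi^{L})^{1-\Gamma^{u}}$ obeys the sharing rule $(1-\Gamma^{u})\,L_{t}(\phi_{L_{t}}-\phi_{U_{t}})=\Gamma^{u}\,\pi^{L}$. Substituting the two expressions, cancelling $L_{t}$, and clearing denominators leaves a linear equation for $w^{u}_{t}$ whose coefficient on $w^{u}_{t}$ collapses to $(\rho+\alpha\,\dot{m}^{*}_{t}-g)\big[(\rho+\alpha\,\dot{m}^{*}_{t}-g+\lambda_{t})+\Gamma^{u}f(\theta_{t})\big]$; dividing through and regrouping the right-hand side as ``$b_{t}$ plus a multiple of $y_{L_{t}}-b_{t}+\frac{\rho+\alpha\,\dot{m}^{*}_{t}-g+\lambda_{t}}{\rho+\alpha\,\dot{m}^{*}_{t}-g}(\hat{y}_{t}-y_{L_{t}})$'' reproduces~\eqref{eq:union_wages} with the stated $\Psi^{u}_{t}$.

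I expect the firm-surplus step to be the delicate one, for two reasons. The hiring-cost term $\lambda_{t}\xi_{t}L_{t}/q(\theta_{t})$ depends on $w^{u}_{t}$ through free entry, so the substitution must be made consistently before solving; equivalently, one should verify that, once free entry is imposed, the Pareto frontier in $\big(L_{t}(\phi_{L_{t}}-\phi_{U_{t}}),\,\pi^{L}\big)$-space has unit slope, which is what legitimises treating $\Gamma^{u}$ as the worker's surplus share rather than merely the exponent of the Nash product. Second, one has to recognise that the collective surplus is built on the \emph{full} per-worker output $\hat{y}_{t}$, not on the marginal product $y_{L_{t}}$ that governs the individual filled-job value~\eqref{eq:value_job_firm}; this is the channel through which the rent term $\frac{\rho+\alpha\,\dot{m}^{*}_{t}-g+\lambda_{t}}{\rho+\alpha\,\dot{m}^{*}_{t}-g}(\hat{y}_{t}-y_{L_{t}})$ enters~\eqref{eq:union_wages}, and it is the substantive difference from Proposition~\ref{prop:non_union_wage}. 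Everything else is routine algebra.
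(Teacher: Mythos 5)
Your proposal is correct and takes essentially the same route as the paper's own proof: the paper likewise reduces the firm's side to $\pi^{L}=L\big(\tfrac{\hat{y}-y_{L}}{\rho+\alpha\dot{m}^{*}-g}+\tfrac{y_{L}-w}{\rho+\alpha\dot{m}^{*}-g+\lambda}\big)$ via the HJB equation with a zero marginal vacancy value and $\pi_{L}=\xi/q(\theta)$, pairs this with the worker surplus, imposes the sharing rule $\Gamma^{u}\pi^{L}/L=(1-\Gamma^{u})(\phi_{L}-\phi_{U})$, and solves the resulting linear equation for $w^{u}$. The only differences are cosmetic: the paper substitutes the equilibrium value of $(\rho+\alpha\dot{m}^{*}-g)\phi_{U}=b+f(\theta)(\phi_{L}-\phi_{U})$ after, rather than before, writing the first-order condition, and suppresses the $\alpha\dot{m}^{*}_{t}$ term in the discount rate purely for notational convenience.
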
 

The  solution in equation \eqref{eq:union_wages} is  similar to  the real wage  under individual bargaining, with the notable difference that the former introduces an additional component representing the  benefit that workers can extract from the  increase in the aggregate surplus.

\subsection{Labor Market Equilibrium}  Appendix  \ref{appendix:protocl_decision} presents a game-theoretic model determining the probability $P(\mathcal{U}=1|\cdot)$ that workers will choose a   collective bargaining strategy in the first node of Figure \ref{fig:bargaining_protocol}. This probability is a function of   the perceptions, attitudes, and biases that workers have when sharing  economic outcomes, and   the  preferences for political support of the government. In the main text, however,   $ P(\mathcal{U}_{t}=1| \cdot )$ is a a known datum,  so that the aggregate  wage can be expressed as

\begin{equation}\label{eq:wage_agg}
w_{t} = w^{n}_{t} + P(\mathcal{U}_{t}=1|\cdot )\big( w^{u}_{t} -w^{n}_{t}\big).
\end{equation}

 This is an average of the individual and collective bargaining solution, weighted by the relative advantages of each bargaining protocol and the social factors influencing the workers' perceptions, attitudes and biases.

Combining \eqref{eq:wage_agg} with equations \eqref{eq:final_price} and \eqref{eq:value_job_firm},  we reach the main result of the section. 

\begin{prop}\label{prop:equil_labor} Suppose that Assumption  \ref{ass:task_function}  holds. If firms reserve the right to manage and aggregate wages satisfy \eqref{eq:wage_agg}, then there exists a unique pair $(\mu^{*}_{t}, \theta^{*}_{t})$ resulting from the labor market equilibrium. 
\end{prop}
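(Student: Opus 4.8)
The plan is to turn the simultaneous determination of $(\mu^{*}_t,\theta^{*}_t)$ into the intersection of two strictly monotone scalar curves in the $(\mu_t,\theta_t)$ plane and then apply an intermediate-value argument. The two curves come, respectively, from the firm's free-entry/job-creation condition and from the pricing relation \eqref{eq:final_price}, which residually ties down the markup; the bargaining solutions of Propositions \ref{prop:non_union_wage} and \ref{prop:union_wage}, aggregated through \eqref{eq:wage_agg}, enter both curves and supply the required monotonicity in $\theta_t$ by way of Corollary \ref{coro:barg_power}.

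First I would impose $\pi_{V_t}=0$ (free entry) in \eqref{eq:value_vacancy_firm}, obtaining $\pi_{L_t}=\xi_t/q(\theta_t)$, and substitute this into the filled-job value equation \eqref{eq:value_job_firm}:
\[
\bigl(\rho+\alpha\,\dot{m}^{*}_t-g+\lambda_t\bigr)\pi_{L_t}-\dot{\pi}_{L_t}=y_{L_t}-w_t,\qquad \pi_{L_t}=\frac{\xi_t}{q(\theta_t)} .
\]
Plugging the aggregate wage \eqref{eq:wage_agg} into the right-hand side makes $w_t$ a function of $\theta_t$ (through $f(\theta_t)$, $q(\theta_t)$ and hence the $\Psi^{(\cdot)}_t$'s) and of $\mu_t$ (through $y_{L_t}$, which moves with the capital-labor ratio and the automation threshold $J^{*}_t$, both responding to the wedge $1+\mu_t$ given $P^{k}_t/P_t=q^{-1}$). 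Because $q$ is decreasing, $f$ increasing, and $\Psi^{n}_t,\Psi^{u}_t$ are increasing in $\theta_t$ by Corollary \ref{coro:barg_power}, while a higher $\mu_t$ enlarges the firm's operating surplus $y_{L_t}-w_t$ (as $\Psi^{(\cdot)}_t<1$), this condition defines a continuous, strictly increasing relation $\theta_t=\Theta(\mu_t)$: a larger markup makes a filled job more valuable and therefore raises equilibrium tightness.

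Second I would read off $\mu_t$ from \eqref{eq:final_price} and the ideal price index condition, where $1+\mu_t$ equals the ratio of realized revenue to the cost of production; substituting \eqref{eq:wage_agg} once more, this markup condition defines $\mu_t$ implicitly, and I would show (i) for each fixed $\theta_t$ it has a unique solution in $\mu_t$ — the residual profit share is a monotone self-map of $\mu_t$ whose slope is strictly below one because $y_{L_t}$ responds to $\mu_t$ only through the $J^{*}_t$ channel — and (ii) the resulting map $\mu_t=\mathcal{M}(\theta_t)$ is continuous and strictly decreasing, since a tighter market raises $\Psi^{n}_t$ and $w^{u}_t$, compresses the firm's surplus, and hence lowers the equilibrium markup. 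Composing, a labor-market equilibrium is exactly a fixed point of $\theta_t\mapsto\Theta\bigl(\mathcal{M}(\theta_t)\bigr)$; since $\Theta$ is increasing and $\mathcal{M}$ decreasing, this composite map is strictly decreasing, so there is at most one fixed point, and the boundary behavior from Corollary \ref{coro:barg_power}(i)--(ii) — $\Psi^{(\cdot)}_t\to0$ and $w_t\to b_t$ as $\theta_t\to0$, so that a filled job is strictly profitable and $\mathcal{M}$ is bounded away from $0$ there, versus $\Psi^{(\cdot)}_t\to1$ and $w_t\to y_{L_t}$ as $\theta_t\to\infty$, so that $\mathcal{M}\to0$ and vacancy creation collapses — forces exactly one crossing by continuity; the equilibrium pair is $(\mu^{*}_t,\theta^{*}_t)=(\mathcal{M}(\theta^{*}_t),\theta^{*}_t)$.

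The main obstacle is the genuine simultaneity: one must verify that the markup condition really does determine $\mu_t$ monotonically once it is recognized that $y_{L_t}$, the automation margin $J^{*}_t$, and the wage $w_t$ all move with $\mu_t$, so that these feedbacks do not overturn the slope of $\mathcal{M}$ or of $\Theta$; pinning down the relevant elasticities — in particular that the $\mu_t$-response of $y_{L_t}$ through $J^{*}_t$ has the right sign and small enough magnitude — is the delicate step. Here the hypothesis that firms ``reserve the right to manage'' is essential: it legitimizes treating $w_t$ purely as the output of the bargaining stage while the firm unilaterally chooses employment, thereby decoupling wage determination from any additional marginal-product condition and leaving exactly the two equations above to pin down $(\mu^{*}_t,\theta^{*}_t)$.
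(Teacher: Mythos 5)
Your two-curve decomposition is a genuinely different strategy from the paper's, which instead builds the Gauss--Seidel-type iteration $\mu^{j+1}=\Phi(\mu^{j})$: solve the capital first-order condition for $\hat{k}(\mu^{j})$ by the intermediate value theorem, clear the labor market for $\theta(\hat{k}(\mu^{j}))$ by another intermediate-value argument on $w^{s}-w^{d}$, update $\mu^{j+1}$ as the realized surplus over the wage, and assert that $\Phi$ is a contraction ``for suitable values of $\mu$.'' Both arguments are heuristic at the decisive quantitative step, so the issue is not that you fail to verify an elasticity the paper verifies. The problem is that the sign you assign to $\Theta$ is the opposite of what the model delivers, and your uniqueness argument stands or falls on that sign. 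In this model $\mu_{t}$ reaches the real allocation only through the capital-demand margin: combining \eqref{eq:final_price} with the marginal-product conditions \eqref{eq:online_marginal} (equivalently, the stationary Euler condition $\hat{y}_{\hat{k}}q=\delta(1+\mu)$) shows that a higher markup forces a lower $\hat{k}$, and since $\partial(\hat{y}-\hat{k}\hat{y}_{\hat{k}})/\partial\hat{k}=-\hat{k}\hat{y}_{\hat{k}\hat{k}}>0$ this \emph{lowers} $y_{L_{t}}$, shrinks the match surplus $(1-\Psi_{t})(y_{L_{t}}-b_{t})$, and lowers the tightness consistent with free entry. Your claim that ``a higher $\mu_{t}$ enlarges the firm's operating surplus $y_{L_{t}}-w_{t}$'' implicitly holds $w_{t}$ fixed and leans on the accounting identity $y_{L_{t}}-w_{t}=w_{t}\mu_{t}$ from \eqref{eq:markup_power}; that is circular, because $\mu_{t}$ is precisely the residual being solved for. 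So $\Theta$ is decreasing, not increasing.

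With $\Theta$ and $\mathcal{M}$ both decreasing, the composite $\theta_{t}\mapsto\Theta(\mathcal{M}(\theta_{t}))$ is \emph{increasing}, and an increasing continuous map can have zero, one, or many fixed points: monotonicity alone no longer yields ``at most one crossing,'' and your boundary behavior from Corollary \ref{coro:barg_power} only secures existence. To rescue uniqueness along your route you would need a slope bound on the composite strictly below one, which is exactly the contraction-modulus condition the paper's own proof asserts for $\Phi$ without verification. The gap is therefore concrete: the stated monotonicity of $\Theta$ has the wrong sign, and once it is corrected the uniqueness half of the proposition --- its substantive content --- is no longer established by your argument.
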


            \begin{figure}
\begin{center}
\begin{tikzpicture}
     \begin{axis}[
        axis x line=middle,
                    axis y line=middle,
         name=plot1,
          title={\footnotesize Panel A}, 
              title style={at={(axis description cs:0.1,0.965)}, anchor=south} , 
 width=0.37*\textwidth, height=4.75 cm,
ymin=0.1, ymax = 1.099,
xmin=0, xmax=5.2,
xshift=-0.25cm,
 xlabel = {\scriptsize  $\mu$},
    ylabel = {$w$},
 ytick = \empty, 
  xtick = {1.6},
  xticklabels={$\mu^{*}$},
        every axis x label/.style={
    at={(ticklabel* cs:0.95)},
    anchor=north,
},
every axis y label/.style={
    at={(ticklabel* cs:0.95)},
    anchor=east,
}
]

    \addplot[domain=0.7:4,
  line width=0.75pt,
black
]
{1.5/(1+x) }node at (axis cs:0.95,0.935) {\tiny $w=\frac{y^{*}_{L}}{1+\mu}$};

\addplot[color = black, dotted, thin] coordinates {(1.6, 0) (1.6, 0.58) (8, 0.58)};

\addplot[color = black, mark = *, only marks, mark size = 1.5pt] coordinates {(1.6, 0.58)};

\end{axis}

 \begin{axis}[%
    axis x line=middle,
                    axis y line=middle,
    name=plot2,
     title={\footnotesize Panel B },
                title style={at={(axis description cs:0.1,0.965)}, anchor=south} , 
    at=(plot1.right of south east), anchor=left of south west,
 width=0.37*\textwidth, height=4.75 cm,
ymin=0.1, ymax = 1.1,
xmin=0, xmax=5.2,
xshift=-0.35 cm,
 xlabel = {\scriptsize  $\theta$},
    ylabel = {$w$},
  xtick = {1.5,2.4,3.2},
  xticklabels={\scriptsize $\theta^{u}$,$\theta^{*}$,$\theta^{n}$},
    ytick = {0.45,0.58,0.73},
  yticklabels={$w^{n}$,$w^{*}$,$w^{u}$},
    every axis x label/.style={
    at={(ticklabel* cs:0.95)},
    anchor=north,
},
every axis y label/.style={
    at={(ticklabel* cs:0.95)},
    anchor=east,
}
]

    \addplot[domain=0.7:4.35,
  line width=0.75pt,
black
]
{1.0-0.2*x+0.01*x^2 }node at (axis cs:0.65,0.92) {\tiny L.D.};

    \addplot[domain=0.7:4.35,
      dashdotted  ,
  line width=1pt,
gray
]
{0.6+0.1*x-0.01*x^2 }node [very near end, above] {\tiny $w^{u}$};

    \addplot[domain=0.9:4.35,
  line width=0.75pt,
black
]
{0.35+0.12*x-0.01*x^2 }node [very near end, above] {\tiny $w$};

    \addplot[domain=1.1:4.35,
        dashed,
  line width=0.75pt,
BrickRed
]
{0.15+0.13*x-0.01*x^2 }node [very near end, above] {\tiny $w^{n}$};

\addplot[color = black, dotted, thin] coordinates {(3.2, 0) (3.2, 0.45) (0, 0.45)};

\addplot[color = black, dotted, thin] coordinates {(2.4, 0) (2.4, 0.58) (0, 0.58)};

\addplot[color = black, dotted, thin] coordinates {(1.5, 0) (1.5, 0.73) (0, 0.73)};

\addplot[color = BrickRed, mark = *, only marks, mark size = 1.5pt] coordinates {(3.2, 0.46)};
\addplot[color = black, mark = *, only marks, mark size = 1.5pt] coordinates {(2.4, 0.58)};
\addplot[color = gray, mark = *, only marks, mark size = 1.5pt] coordinates {(1.5, 0.73)};

\draw[thick,MidnightBlue, decoration={brace,mirror, raise=4pt},decorate](2.22,0.4)--node[yshift=-15pt,right =-6 pt] {\scalebox{.5}{ $P(\mathcal{U}=1|\cdot)$}} (2.22,0.58);

    \end{axis}

 \begin{axis}[%
    axis x line=middle,
                    axis y line=middle,
    name=plot3,
     title={\footnotesize Panel C},
                title style={at={(axis description cs:0.4,0.965)}, anchor=south} , 
    at=(plot2.right of south east), anchor=left of south west,
 width=0.37*\textwidth, height=4.75 cm,
ymin=0, ymax = 1.1, 
xmin=-3, xmax=5.2,
xtick={0},
ytick={0},
xshift=+0.1 cm,
 xlabel = {\scriptsize $\mathcal{W}$},
    ylabel = {},
  xtick = {0,1},
  xticklabels={0,\scriptsize  $\mathcal{W}^{*}$},
    every axis x label/.style={
    at={(ticklabel* cs:0.95)},
    anchor=north,
},
every axis y label/.style={
    at={(ticklabel* cs:0.95)},
    anchor=east,
}
]
    
    \addplot[domain=-2.1:3.5,
  line width=0.75pt,
black,
]
{1/(1+exp(-(x-1.2)/0.6)) };

\draw[
 dashed,
 gray
]
(-2.9,1) node[gray,below, xshift=2.9em,font=\tiny]{$1$}--(4.7,1);

\draw[thick,MidnightBlue, decoration={brace,mirror, raise=4pt},decorate](0.7,0.01)--node[right =5 pt] {\tiny $P(\mathcal{U}=1|\cdot)$} (0.7,0.38);

    \end{axis}
    
\end{tikzpicture}
            \caption{ \scriptsize LABOR MARKET EQUILIBRIUM.\label{fig:labor_market_comp_stat}}
          \caption*{ \scriptsize \emph{Notes--- In Panel C, $\mathcal{W}$ is the  measure  of the economic gains  from collective bargaining.  In Panel B, L.D. is equation \eqref{eq:value_job_firm},  $\{w^{u}, w^{n}\}$ is the solution under collective and individual bargaining, and $w^{*}$ is the equilibrium  wage rate.}}
          \end{center}
            \end{figure}

The logic behind Proposition \ref{prop:equil_labor} is  captured in Figure \ref{fig:labor_market_comp_stat}. First, given the model in Appendix \ref{appendix:protocl_decision}, workers combine  their preferences and political views with the relative advantages of collective bargaining, and decide on a vote share $P(\mathcal{U}=1|\cdot)$. From this, the aggregate wage and labor market tightness is determined in Panel B using equations \eqref{eq:value_job_firm}  and \eqref{eq:wage_agg}. Lastly, given the equilibrium in the labor market, equations \eqref{eq:agg_prod_text} and \eqref{eq:final_price}  determine $\mu^{*}$ and $\hat{k}^{*}$ simultaneously.

The next corollary presents a simple expression of the  labor share in terms of the technology and the institutions determining the equilibrium rate of return.  
  
  \begin{corollary}\label{corollary:wage_share} In equilibrium, the labor share satisfies
  
  \begin{equation}\label{eq:labor_share}
  \Omega^{*}_{t} = \frac{1}{1+\mu^{*}_{t}}\times \Bigg[  1+ \Bigg(\frac{(1-m^{*}_{t})\alpha (\sigma-1)}{e^{\alpha (\sigma-1)m^{*}_{t}}-1}\Bigg)^{1/\sigma} \big(\hat{k}^{*}_{t}\big)^{\frac{\sigma-1}{\sigma}}\Bigg]^{-1}
  \end{equation}
 
  \end{corollary}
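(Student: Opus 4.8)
The plan is to reduce the claim to an accounting identity together with one derivative of the aggregate production function \eqref{eq:agg_prod_text}. First I would write the labor share as the ratio of the wage bill to nominal value added, $\Omega^{*}_{t}=W_{t}L_{t}/(P_{t}Y_{t})$, and pass to the stationary per-capita variables of Section~\ref{subsection:state_dynamics}: since $W_{t}=w_{t}P_{t}e^{\alpha(M_{t}-m^{*}_{t})}$ and $Y_{t}=\hat{y}_{t}L_{t}e^{\alpha(M_{t}-m^{*}_{t})}$, the price level and the growth factor $e^{\alpha(M_{t}-m^{*}_{t})}=e^{\alpha J^{*}_{t}}$ cancel, leaving $\Omega^{*}_{t}=w^{*}_{t}/\hat{y}^{*}_{t}$. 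I would then invoke the equilibrium pricing relation underlying Proposition~\ref{prop:equil_labor} (Panel~A of Figure~\ref{fig:labor_market_comp_stat}): because output is sold at the marked-up price $P_{t}=(1+\mu_{t})P^{c}_{t}$ in \eqref{eq:final_price} while factors inside task production are remunerated at their marginal products valued at the cost price $P^{c}_{t}$, the equilibrium wage equals the marginal product of labor deflated by the gross markup, $w^{*}_{t}=y^{*}_{L_{t}}/(1+\mu^{*}_{t})$, where $y_{L_{t}}=\hat{y}_{t}-\hat{k}_{t}\hat{y}_{\hat{k}_{t}}$ as in \eqref{eq:value_job_firm}. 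Combining the two gives $\Omega^{*}_{t}=\tfrac{1}{1+\mu^{*}_{t}}\bigl(1-\hat{k}_{t}\hat{y}_{\hat{k}_{t}}/\hat{y}_{t}\bigr)$, so it only remains to compute the capital income share $\hat{k}_{t}\hat{y}_{\hat{k}_{t}}/\hat{y}_{t}$.

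For that step I would put \eqref{eq:agg_prod_text} in intensive form by dividing through by effective employment $L_{t}e^{\alpha J^{*}_{t}}$ and using $M_{t}-m^{*}_{t}=J^{*}_{t}$, which collapses it to $\hat{y}_{t}=\bigl[(1-m^{*}_{t})^{1/\sigma}\hat{k}_{t}^{(\sigma-1)/\sigma}+\nu_{t}^{1/\sigma}\bigr]^{\sigma/(\sigma-1)}$, with $\hat{k}_{t}$ the effective capital-to-effective-labor ratio and $\nu_{t}=\int_{0}^{m^{*}_{t}}e^{\alpha(\sigma-1)j}\,\mathrm{d}j=(e^{\alpha(\sigma-1)m^{*}_{t}}-1)/(\alpha(\sigma-1))$ the CES weight on the labor aggregate obtained from carrying the task productivities $e^{\alpha j}$ through the constant-elasticity aggregation. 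Differentiating, $\hat{y}_{\hat{k}_{t}}=\hat{y}_{t}^{1/\sigma}(1-m^{*}_{t})^{1/\sigma}\hat{k}_{t}^{-1/\sigma}$, hence $\hat{k}_{t}\hat{y}_{\hat{k}_{t}}/\hat{y}_{t}=(1-m^{*}_{t})^{1/\sigma}\hat{k}_{t}^{(\sigma-1)/\sigma}/\hat{y}_{t}^{(\sigma-1)/\sigma}$; substituting the bracketed CES expression for $\hat{y}_{t}^{(\sigma-1)/\sigma}$ gives $1-\hat{k}_{t}\hat{y}_{\hat{k}_{t}}/\hat{y}_{t}=\bigl[1+\bigl((1-m^{*}_{t})/\nu_{t}\bigr)^{1/\sigma}\hat{k}_{t}^{(\sigma-1)/\sigma}\bigr]^{-1}$, and inserting $\nu_{t}$ produces exactly the bracketed factor in \eqref{eq:labor_share}. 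Multiplying by $1/(1+\mu^{*}_{t})$ from the first step finishes the proof. (Equivalently, the identity $1-\hat{k}_{t}\hat{y}_{\hat{k}_{t}}/\hat{y}_{t}=[1+(\text{capital/labor income ratio})]^{-1}$ is immediate, and by Euler's theorem on the constant-returns aggregate that ratio is a pure power of $\hat{k}^{*}_{t}$.)

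I expect the only delicate point to be the justification of $w^{*}_{t}=y^{*}_{L_{t}}/(1+\mu^{*}_{t})$ — keeping straight that revenue accrues at the marked-up price $P_{t}$ whereas factor payments inside task production are settled at the cost price $P^{c}_{t}$, so that $1+\mu^{*}_{t}$ enters the labor share multiplicatively as the factor $1/(1+\mu^{*}_{t})$; this is where the substantive content of Proposition~\ref{prop:equil_labor} is used, the rest being routine CES algebra. Two bookkeeping points should be watched: $\hat{k}^{*}_{t}$ must be the \emph{effective} ratio $A^{k}K_{t}/(L_{t}e^{\alpha J^{*}_{t}})$, so that the constant $A^{k}$ does not leak into the coefficient; and for $\sigma\in(0,1)$ the coefficient $(1-m^{*}_{t})\alpha(\sigma-1)/(e^{\alpha(\sigma-1)m^{*}_{t}}-1)$ remains positive (numerator and denominator change sign together), so that \eqref{eq:labor_share} is valid in both elasticity regimes.
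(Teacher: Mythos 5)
Your proof is correct and follows the route the paper itself intends (the corollary is left unproved in the appendix): the decomposition $\Omega^{*}_{t}=w^{*}_{t}/\hat{y}^{*}_{t}$ with $w^{*}_{t}=y^{*}_{L_{t}}/(1+\mu^{*}_{t})$ from \eqref{eq:final_price} and \eqref{eq:online_marginal}, followed by the standard CES computation of $1-\hat{k}_{t}\hat{y}_{\hat{k}_{t}}/\hat{y}_{t}$ as in \citeasnoun{acemoglu2018race}. Your two bookkeeping remarks are both on point: the labor weight must be $\int_{0}^{m^{*}_{t}}e^{\alpha(\sigma-1)j}\,\mathrm{d}j$ (the appendix form, not the exponent written in \eqref{eq:agg_prod_text}) for the stated coefficient to emerge, and $\hat{k}^{*}_{t}$ must be the effective ratio absorbing $A^{k}$.
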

  
 Given the results in Proposition \ref{prop:equil_labor}, the  first term in the right-hand side of \eqref{eq:labor_share}   provides  a link between  nonmarket mechanisms such as labor institutions and political preferences with worker power, and worker power with the rate of return of  capital.  The second component on the right-hand side  of \eqref{eq:labor_share}   is similar to the expression of the wage share obtained by \citeasnoun{acemoglu2018race}; the difference explained by the fact that here the rate of return  is not a cost of production.   Altogether, equation \eqref{eq:labor_share} can reconcile the literature on labor institutions and technological change by showing how each component can potentially affect the  labor share over time.

\section{Equilibrium and  Dynamics}\label{sec:equilibrium_dynamics}

This  section  presents the equilibrium conditions and the dynamic properties of the model.

  \subsection{Equilibrium Analysis}\label{subsection:equilibrium_analysis} Assuming that $\dot{m}^{*}_{t}$ is determined exogenously, Online Appendix \ref{subappendix:gen_equil}  shows that the equilibrium can be characterized by a system of four differential equations (in terms of $\{L_{t}, \theta_{t}, \hat{k}_{t}, \hat{c}_{t}\}$)  consistent with a BGP  with positive growth. This is summarized in the following result. 
  
  \begin{prop}\label{prop:gen_equilibrium} Suppose that Assumption  \ref{ass:task_function} holds.  The economy admits a unique and locally stable equilibrium BGP with positive growth\footnote{Figure \ref{fig:stability_condition_euler} in online Appendix \ref{appendix:robustness} shows that---with the exception of the early 1980s--- equation \eqref{eq:min_surplus}   is satisfied in the US.} 
  
  \begin{equation}\label{eq:harrod_growth}
  g = s^{*}_{t} (r^{*}_{t} - \chi^{*}_{t})
  \end{equation}
  
  if 
  
    \begin{equation}\label{eq:min_surplus}
\mu^{*}_{t} > \frac{g}{\delta} > \mu^{\text{min}}_{t}.
  \end{equation}
  
  Where $r^{*} = q \hat{y}^{*} \mu^{*}/(\hat{k}^{*}( 1+\mu^{*}))$  is the equilibrium rate of profit, $\chi^{*}= q (\hat{\xi}^{*} \; V^{*} + \hat{\tau}^{*})/\hat{k}^{*}$ is the equilibrium  sum of stationary taxes and  vacancy expenses  per unit of capital,  $s^{*} \in (0,1)$ is the equilibrium savings rate, and $\mu^{\text{min}}$ is the rate of return of capital for which $\hat{c}=0$ (see equation \eqref{eq:min_surplus2} in Online Appendix \ref{subappendix:gen_equil}). 
  \end{prop}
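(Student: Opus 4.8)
The plan is to argue in three stages: (i) reduce the equilibrium to the autonomous four–dimensional system in $(L_t,\theta_t,\hat{k}_t,\hat{c}_t)$ referred to just before the statement and recorded in Online Appendix \ref{subappendix:gen_equil}; (ii) locate its steady state, show it is unique, derive \eqref{eq:harrod_growth}, and identify the exact role of the two inequalities in \eqref{eq:min_surplus}; and (iii) linearize at that steady state and establish local (saddle–path) stability by an eigenvalue–counting argument.

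For (i), I would assemble the system from objects already in hand. Writing $V_t=\theta_t U_t=\theta_t(1-L_t)$, the employment law \eqref{eq:employment_dyn} gives $\dot L_t$ with $\lambda_t=\lambda_0+\partial U^A_t/\partial L_t$ made explicit through \eqref{eq:tech_unemployment}; the capital law \eqref{eq:capital_dyn} rewritten in the stationary per–capita normalization of the footnote (using $P^k_t/P_t=q^{-1}$) gives $\dot{\hat k}_t$; the capitalists' Euler equation coming out of the HJB block \eqref{eq:value_vacancy_firm}–\eqref{eq:value_job_firm} gives $\dot{\hat c}_t$; and the free–entry condition $\pi_{V_t}=0$ together with \eqref{eq:value_vacancy_firm}, \eqref{eq:value_job_firm} and the wage rule \eqref{eq:wage_agg} yields the law of motion of $\theta_t$. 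Assumption \ref{ass:task_function} enters through the closed forms \eqref{eq:agg_prod_text}, \eqref{eq:tech_unemployment} and Corollary \ref{corollary:wage_share}, which render every coefficient an explicit function of $(L_t,\theta_t,\hat k_t,\hat c_t)$ and the exogenous $\dot m^*_t$. For (ii), at a BGP all stationary variables are constant, so I set $\dot L=\dot\theta=\dot{\hat k}=\dot{\hat c}=0$ and $\dot m^*=0$ (hence $\delta_t=\delta$). Proposition \ref{prop:equil_labor} already delivers a unique $(\mu^*,\theta^*)$ from the labor–market block; $\dot L=0$ then pins down $L^*=f(\theta^*)/(f(\theta^*)+\lambda)$ uniquely, and $\dot{\hat k}=0$ pins down $\hat k^*$ uniquely because the implied return is strictly monotone in $\hat k$ through \eqref{eq:agg_prod_text}. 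Imposing goods–market clearing (output equals consumption plus investment plus vacancy costs plus taxes), substituting $g=g_K$ from Lemma \ref{lemma:management_text} into \eqref{eq:capital_dyn}, and collecting terms rearranges the steady–state capital identity into exactly \eqref{eq:harrod_growth} with $r^*,\chi^*,s^*$ as defined; positivity $g=\alpha\dot M>0$ follows from $\dot M>0$. The left inequality $\mu^*>g/\delta$ in \eqref{eq:min_surplus} is what makes the capitalist's program well posed (return above the growth rate, so the relevant transversality/finiteness condition on the value functions holds and the saddle path is selected), and chaining it with $g/\delta>\mu^{\min}$ gives $\mu^*>\mu^{\min}$, which by the definition of $\mu^{\min}$ (the return at which $\hat c=0$) and monotonicity of $\hat c$ in $\mu$ guarantees $\hat c^*>0$; together these yield existence and uniqueness of the BGP.

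For (iii), I would compute the Jacobian $\mathcal{J}$ of the four equations at $(L^*,\theta^*,\hat k^*,\hat c^*)$. Here $L_t$ and $\hat k_t$ are predetermined (stocks) while $\theta_t$ and $\hat c_t$ are jump variables (pinned by free entry and the Euler equation), so local stability amounts to showing $\mathcal{J}$ has exactly two eigenvalues with negative real part. The structural fact I would exploit is that the pair $(L_t,\theta_t)$ enters the $(\hat k_t,\hat c_t)$ block, but the feedback of $(\hat k_t,\hat c_t)$ into the $\theta_t$–equation is of a single sign (it acts only through $\hat y_t$ and $y_{L_t}$ in \eqref{eq:value_job_firm}), so $\mathcal{J}$ is block–triangular up to a sign–definite perturbation: the labor block $(L,\theta)$ is a standard search–and–matching saddle with one stable and one unstable root, and the accumulation block $(\hat k,\hat c)$ is a Ramsey–type saddle with one stable and one unstable root, the latter relying on $\mu^*>g/\delta$. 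The main obstacle is precisely this counting step — confirming that the off–diagonal coupling does not move a root across the imaginary axis and spoil the $(2,2)$ signature. I expect to settle it either by exhibiting an explicit recursive ordering (solve the labor block first along its saddle path, then the accumulation block conditional on it) or by a continuity/homotopy argument that scales the coupling to zero, invokes continuity of eigenvalues, and uses \eqref{eq:min_surplus} to preclude imaginary–axis crossings throughout the deformation; either route then delivers local stability and completes the proof.
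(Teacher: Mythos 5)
Your proposal is correct and follows essentially the same route as the paper's proof in Online Appendix \ref{subappendix:gen_equil}: reduction to the four-equation system \eqref{eq:syst_differential_eq}, a recursive block decomposition in which the labor-market pair is resolved first (the paper shows $\theta$ jumps immediately to $\theta^{*}$ because the linearized $\theta$-dynamics have $a(\theta^{*})<0$ and $b(\theta^{*})>0$) and the $(\hat{k},\hat{c})$ pair is then a standard Ramsey saddle, with \eqref{eq:harrod_growth} obtained from steady-state goods-market clearing and $\mu^{\text{min}}$ from setting $\hat{c}=0$. The coupling issue you flag as the main obstacle is handled in the paper exactly by your first proposed resolution---the recursive ordering, with $\hat{k}=\hat{k}^{*}$ taken as given from Proposition \ref{prop:equil_labor} when deriving the $\theta$-dynamics---rather than by any homotopy argument.
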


\begin{figure}
\begin{center}
 \begin{tikzpicture}    
  \begin{axis}[
        axis x line=middle,
                    axis y line=middle,
         name=plotramsey,
          title={}, 
              title style={at={(axis description cs:0.1,0.965)}, anchor=south} , 
 width=0.75\textwidth, height=5.5 cm,
ymin=0, ymax = 12,
xmin=0, xmax=18,
xshift=-0.5cm,
 xlabel = {$\hat{k}_{t}$},
    ylabel = {$\hat{c}_{t}$},
  xtick = {6,8.3,11.5,13,15.5},
  xticklabels={$\hat{k}^{*}$,$\hat{k}^{\text{gold}}$,$\textcolor{red}{\hat{k}^{\text{max,u}}}$,$\textcolor{red}{\hat{k}^{*u}_{t}}$,$\hat{k}^{\text{max}}$},
    ytick = {7.3,8.05},
      yticklabels={$\hat{c}^{*}$,$\hat{c}^{\text{gold}}$},
        every axis x label/.style={
    at={(ticklabel* cs:0.98)},
    anchor=north,
},
  dot/.style={circle,fill=black,minimum size=4pt,inner sep=0pt,
            outer sep=-0pt},
every axis y label/.style={
    at={(ticklabel* cs:0.98)},
    anchor=east,
}
]

\draw[thick,<->] (0,12) node[left]{$\hat{c}_{t}$}--(0,0) node[below right]{$0$}--(19,0) node[below]{$\hat{k}_{t}$};

\draw[thick] (6,0)--(6,10.5) node[above]{\scriptsize  $\dot{\hat{c}}_{t}=0$};
\draw[thick,dashed,red] (13,0)--(13,10) node[above,red]{\scriptsize  $\dot{\hat{c}}^{u}_{t}=0$ \tiny};

\draw[thick](1.5,0) ..controls (1.5,0) and (7.0,18) .. (15.5,0);    
\draw[thick,dashed,red](2.5,0) ..controls (2.5,1) and (9,8) .. (11.5,0);

    \draw (13,5) node[right]{\scriptsize  $\dot{\hat{k}}_{t}=0$};
        \draw (10,3.5) node[right,red]{\scriptsize  $\dot{\hat{k}}^{u}_{t}=0$};
      \draw (6,0) node[below]{$\hat{k}^{*}$};
     \draw[dotted] (0,7.3)--(6,7.3);
       \draw[dotted] (8.3,0)--(8.3,8.05);
         \draw (9,0) node[below]{$\hat{k}_{\text{gold}}$};
                  \draw (16,0) node[below]{$\hat{k}^{\text{max}}$};
       \draw (0,7.2) node[left]{$\hat{c}^*$};
         \draw[dotted] (0,8.05)--(8.3,8.05);
         \draw (0,8.6) node[left]{$\hat{c}_{\text{gold}}$};
         
\draw[thick,Aquamarine](3.5,2) ..controls (3.5,2) and (6,7.95) .. (8.5,11) node[color=Aquamarine,
	currarrow,
	pos=0.5, 
	xscale=1,
	sloped,
	scale=1]{};
	
	\draw[thick,Aquamarine](3.5,2) ..controls (3.5,2) and (6,7.95) .. (8.5,11)node[
	currarrow,
	color=Aquamarine,
	pos=0.85, 
	xscale=-1,
	sloped,
	scale=1]{};
	
		\draw[black](3.25,2.5) ..controls (3.25,2.5) and (5,7) .. (1,7.5)node[
	currarrow,
	pos=0.75, 
	xscale=-1,
	sloped,
	scale=1]{};

		\draw[black](3.25,2.5) ..controls (3.25,2.5) and (5,7) .. (1,7.5)node[
	currarrow,
	pos=0.25, 
	xscale=1,
	sloped,
	scale=1]{};
		\draw[black](4,1.5) ..controls (4,1.5) and (5,7) .. (13,0.5)node[
	currarrow,
	pos=0.75, 
	xscale=1,
	sloped,
	scale=1]{};
		\draw[black](4,1.5) ..controls (4,1.5) and (5,7) .. (13,0.5)node[
	currarrow,
	pos=0.25, 
	xscale=1,
	sloped,
	scale=1]{};
	
\addplot[color = Aquamarine, mark = *, only marks, mark size = 2pt] coordinates {(6, 7.26)};
\addplot[color = black, mark = *, only marks, mark size = 2pt] coordinates {(3.5, 2)};
	     
	        \end{axis}
	        
\end{tikzpicture}
\caption{\scriptsize DYNAMIC EQUILIBRIUM.\label{fig:ramsey_df}}
\caption*{\scriptsize \emph{Notes--- $\hat{k}^{\text{max}}$ is the  stationary  value of capital at $\mu^{\text{min}}$, $\hat{k}^{\text{gold}}$ is evaluated at $\mu=g/\delta$, and $\hat{k}^{*}$ is evaluated at $\mu^{*}$. The  dashed  lines represent an economically unstable system where $\mu^{*} < \mu^{\text{min}}$.}}
\end{center}
\end{figure}

The expression in \eqref{eq:harrod_growth} is   analogous to  Solow's fundamental equation under the assumption that all savings are made by firms and that capitalists have to pay taxes and vacancy expenses. The  novelty in Proposition \ref{prop:gen_equilibrium}  is that---because the return of capital is a surplus over costs of production---a BGP equilibrium with positive growth requires specific social and institutional arrangements allowing the existence of sufficiently large profits. For instance, given the structure in Appendix \ref{subappendix:arbitrage},  Figure \ref{fig:ramsey_df} shows that if  $\mu <\mu^{\text{min}}$, capitalists will be incapable of \emph{continuously} increasing capital outlays at a rate consistent with a  BGP, pay  taxes and vacancy expenses, and have a remnant for their own consumption, i.e., it is   an \emph{economically unfeasible growth path}.  From a political economy perspective, this implies that the support to workers is partly limited by the growth requirements of the system: very high growth probably requires a weak bargaining power of labor.\footnote{The causal relation need not hold in reverse order: low bargaining power of labor need not lead to high growth because, in a low productivity environment, the increase in the aggregate surplus acquired by capitalists will find little demand for additional units of productive capital.}

 \subsection{Transitional Dynamics}\label{subsection:comp_dynamics} This subsection studies the interrelations and dynamic implications of unanticipated and permanent changes in parameters related to technology and labor  institutions.

\begin{figure}
\begin{center}
\begin{tikzpicture}
\begin{axis}[
   axis x line=middle,
                    axis y line=middle,
 width=0.92\textwidth, height=7cm,
ymin=0.0, ymax = 1.1,
xmin=0, xmax=1,
xtick={0.32,0.42,0.52,0.62},
xticklabels={\scriptsize $\bar{q}(\mu^{(b)})$, \scriptsize $\textcolor{gray}{\bar{q}(\mu^{\text{min}})}$, \scriptsize $\bar{q}(\mu^{(a)})$, \scriptsize $\bar{q}(\mu^{(a)*})$},
ytick={0.01,1},
yticklabels={0,1},
extra x ticks={0.995}, extra x tick labels={$q$},
 extra y ticks={1.08}, extra y tick labels={$m$},
]

\path[fill=lightgray!30] (0.52,0.01) -- (0.52,1) -- (0.98,1) -- (0.98,0);

\draw[
 dashed,
 thick,
->
]
(0,1) node[above, xshift=9em,font=\scriptsize]{$m=1$}--(1,1);

\node[above, Aquamarine] at (0.15,1) { \scriptsize $q^{\text{min}}$};
\node[above, red] at (0.85,1) { \scriptsize $q^{\text{max}}$};

\addplot[color = black, mark = *, only marks, mark size = 2.5pt] coordinates {(0.47, 0.43)};
\node[below, black, xshift=0.5em] at (0.47, 0.43) { \scriptsize $(a)$};

\addplot[color = black, mark = o, only marks, mark size = 2.5pt] coordinates {(0.47, 0.67)};
\node[below, black] at (0.47, 0.67) { \scriptsize $(b)$};


\addplot[name path=A, domain=0.5195:0.85,
thick,
red,
]
{((x-0.52)^0.75)/(0.33^0.75)}
;

\addplot[name path=B, domain=0.15:0.521,
thick,
Aquamarine,
]
{((0.52-x)^0.6)/(0.37^0.6)}
;

\addplot[name path=C, domain=0.32:0.7,
thick,
dashed,
red,
]
{((x-0.32)^0.7)/(0.38^0.7)}
;

\addplot[name path=D, domain=0.08:0.32,
thick,
Aquamarine,
dashed
]
{((0.32-x)^0.8)/(0.24^0.8)}
node[Aquamarine,above, pos=0.6,xshift=-1.65em,yshift=-0.5em,font=\footnotesize]{$\bar{m}(q)$}
;

\addplot[name path=E, domain=0.6195:0.9,
very thick,
red,
dotted
]
{((x-0.62)^0.7)/(0.28^0.7)}
node[red,above, pos=0.1,xshift=2.35em,yshift=1.25em,font=\footnotesize]{$\tilde{m}(q)$}
;

\addplot[name path=F, domain=0.2:0.62,
very thick,
Aquamarine,
dotted
]
{((0.62-x)^0.65)/(0.42^0.65)}
;

\draw[
thick,
red,
->
]
(0.85,1)--(0.97,1);

\draw[
thick,
red,
dashed,
->
]
(0.7,1)--(0.97,1);

\draw[
thick,
red,
<-
]
(0.02,0.01)--(0.52,0.01);

\draw[
thick,
Aquamarine,
<-
]
(0.02,1)--(0.15,1);

\draw[
thick,
Aquamarine,
->
]
(0.52,0.01)--(0.97,0.01);

\draw[
thick,
Aquamarine,
dashed
]
(0.32,0.01)--(0.52,0.01);

\draw[
very thick,
Aquamarine,
dotted
]
(0.03,1)--(0.2,1);

\draw[
very thick,
red,
dotted
]
(0.52,0.01)--(0.62,0.01);


\node [right,font=\tiny,align=left] at (0.0, 0.12) {Region 1: $m^*=\bar{m}(q)$ \\
 $  w_{J}(m) < \frac{\delta}{A^{k} q}$};

\node [right,font=\tiny,align=left] at (0.65, 0.12) {Region 3: $m^*=m$\\
 $ w_{M}(m) > \frac{\delta}{ A^{k}q}$};

\node [right,font=\tiny,align=left] at (0.3, 0.9) {Region 2: $m^*=m$ \\
 $ w_{J}(m)>  \frac{\delta}{ A^{k} q } >  w_{M}(m) $};

\end{axis}
\end{tikzpicture}
\end{center}
\caption{\scriptsize AUTOMATION REGIONS.  \label{fig:automation_reg}}
\caption*{\scriptsize\emph{Notes---  By definition, $w_{J}(m) = \mathrm{lim}_{t \rightarrow \infty} W_{t}/(P_{t} e^{\alpha  J^{*}_{t}})$ and $w_{M}(m) = \mathrm{lim}_{t \rightarrow \infty} W_{t}/(P_{t} e^{\alpha  M_{t}})$. The solid lines represent the baseline scenario associated with point (a). The dashed lines represent a counterfactual  scenario with a higher $m$. The dotted lines illustrate a scenario similar to (a) but with   a higher rate of return.}}
\end{figure}

   The first important result is presented in Figure \ref{fig:automation_reg}, which depicts the main findings of Lemma \ref{lemma:lemma_2A_AR} in Appendix \ref{subappendix:Automation_regions}. Similar to \citeasnoun{acemoglu2018race}, the objective  is to illustrate how the effects of automation depend on the parameter  space defining  the behavior of the relative costs of labor and capital.  The three regions  in Figure \ref{fig:automation_reg} are determined by a critical value of the relative price of capital, $\bar{q}(\mu^{*})$, which is itself a function of the equilibrium rate of return. To the left of $\bar{q}(\cdot)$, there is a decreasing curve $\bar{m}(q)$ defined over $[q^{\text{min}}, \bar{q}(\cdot)]$ with $\bar{m}(\bar{q})=0$ and $\bar{m}(q^{\text{min}})=1$. Region 1 is the area of values  where  labor is relatively cheap, meaning that not all automated tasks will be produced with capital.  Correspondingly, there is an increasing curve $\tilde{m}(q)$ defined over $[\bar{q}(\cdot),q^{\text{max}}]$ with $\tilde{m}(\bar{q})=0$ and $\tilde{m}(q^{\text{max}})=1$. In the area of values with $m<\tilde{m}(q)$ we have that $w_{M}(m)>\delta/(A^{k} q)$, which implies that new tasks would not be adopted because they result in a reduction of  aggregate output. Finally, region 2 is the space where $m > \text{max}\big\{ \bar{m}(q), \tilde{m}(q)\big\}$, meaning that new tasks will raise aggregate output and will be immediately produced with capital.

  To understand the implications of this setting, consider the following three scenarios.\footnote{The notation $\partial m \rightarrow \partial \mu  \rightarrow \partial m $ reads: changes in the share of automation lead to changes in the rate of return and these lead to changes in the automation regions.}

  \begin{enumerate}[label=(\emph{\roman*})]
  
  \item  ($\partial \mu  \rightarrow \partial m $) Suppose  the economy is initially in point (a) of Figure \ref{fig:automation_reg} and  encounters  policy changes  lowering the power of   workers.  Given Proposition \ref{prop:equil_labor}, this   raises the rate of return to $\mu^{(a)^{*}}> \mu^{(a)}$  and the critical relative price of capital to $\bar{q}(\mu^{(a)*}) > \bar{q}(\mu^{(a)})$.\footnote{This is true because the relative price of capital when $m=0$ is $\bar{q}(\mu^{*})=\delta(1+\mu^{*})/A^{k}$; see equation \eqref{eq:online_marginal} in the main Appendix.} As a result,   the automation regions shift to the right (dotted lines) and we reach  a new equilibrium where  the weakening power of labor made machinery relatively superfluous. In this case not all tasks would be produced with capital  since  $w_{J}(m) < \delta/(A^{k}q)$.\footnote{History has many crude and vivid examples that help illustrate this scenario. For example, during the early stages of capitalism, child labor made the adoption of machinery relatively redundant in some tasks of manufacturing, mining and agriculture; see \citeasnoun[pp. 516-517]{marxcapital1}.}
  
    \item ($\partial m \rightarrow \partial \mu \rightarrow \partial m$) Suppose the  economy is initially in point (a) and moves to point (b) in Figure \ref{fig:automation_reg}. If the rise in $m$ is large enough, Proposition \ref{prop:comp_stat} says that $\mu^{(b)}$ can decrease so much that  $\mu^{(b)}<\mu^{\text{min}}$,  meaning that  the system can become unsustainable by an inadequate adoption of machinery. 
    
      \item ($\partial m \rightarrow \partial \mu \rightarrow \partial m$)  Suppose that the solid lines are now associated with point (b) and that there is an exogenous reduction in $m$ taking the system to point (a)  in Figure \ref{fig:automation_reg}. By Proposition \ref{prop:comp_stat}, this shifts the automation regions to the right (see dotted lines) by increasing the rate of return of capital. Thus, automation can lead to the paradoxical result of making machinery   relatively redundant by effectively reducing the relative cost of labor.
  \end{enumerate}

Given the  conclusions derived from Figure \ref{fig:automation_reg},   the  next proposition characterizes the economic implications  of  small unexpected changes in technology and labor institutions.
 
\begin{prop}\label{prop:comp_stat} Suppose that Assumption  \ref{ass:task_function}  holds and that the economy is initially in a  BGP with positive growth satisfying \eqref{eq:min_surplus}. Then, the dynamic equilibrium path converges  in finite time to a new BGP when there are  small unexpected changes in  technology and labor institutions. Particularly:

\begin{enumerate}[label=(\emph{\roman*})]
\item (\emph{Automation}) for $m > \text{max}\{\bar{m}(q), \tilde{m}(q)\}$ and $|\partial \lambda_{t}/\partial \dot{m}^{*}_{t}|>\alpha$, a small decrease in $m$ induces a two-stage transition.\footnote{The case where $|\partial \lambda_{t}/\partial \dot{m}^{*}_{t}|<\alpha$ is studied in Online Appendix \ref{subappendix:comp_stat} and illustrated in Figure \ref{fig:trans_dynamics}. The case where $m$ is either in region 1 or 3 in Figure \ref{fig:automation_reg} is  studied in \citeasnoun{acemoglu2018race}. } First, there is an initial shock $\dot{m}^{*}_{t}<0$ leading to a rise in $U_{t}$ and $\mu_{t}$, a decrease in  $\hat{k}_{t}/(\hat{y}_{t}q)$ and $\Omega_{t}$, and ambiguous effects on $\theta_{t}$ and $V_{t}$. Before the new steady-state is reached, the economy  moves to a new equilibrium with  $m'<m$ and $\dot{m}_{t}=0$. In the new BGP, $V$, $\theta$ and $\Omega$ are lower, whereas $\mu$, $U$ and $\hat{k}/(\hat{y} q)$ are higher for all $\sigma >0$.

\item (\emph{Labor-augmenting technical change}) a small increase in $\dot{M}$ lowers the asymptotic value of $\mu$,  and  raises the equilibrium labor share and capital-output ratio. If $\theta$ stays relatively constant, a small increase  in $\dot{M}$ raises the asymptotic values of $U$ and $V$ when $\sigma \in (0,1)$, and lowers  the  values of $U$ and $V$  when $\sigma>1$.

\item (\emph{Labor institutions}) a  permanent reduction in the support to labor---represented by, e.g.,  a higher $T^{w}$--- induces a new BGP  with lower asymptotic values of $\Omega$,   $\hat{k}/(q \hat{y})$ and $U$, and higher values of  $\mu$, $\theta$ and $V$,  for all $\sigma >0$.  
\end{enumerate}

\end{prop}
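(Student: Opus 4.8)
The plan is to exploit the block-recursive structure of the equilibrium built up in Sections \ref{sec:barg_protocol}--\ref{sec:equilibrium_dynamics}. A BGP is determined in two stages: (i) the labor-market block of Proposition \ref{prop:equil_labor} delivers $(\mu^{*},\theta^{*})$ as continuous functions of the state $(m^{*},\dot m^{*},g)$ and of the institutional parameters through the wage schedule \eqref{eq:wage_agg}, the firm's value condition \eqref{eq:value_job_firm}, and the pricing identity \eqref{eq:final_price}; (ii) given $\mu^{*}$, the accumulation block of Proposition \ref{prop:gen_equilibrium}, namely $g=s^{*}(r^{*}-\chi^{*})$ together with the aggregate production function \eqref{eq:agg_prod_text}, pins down $\hat k^{*}$, hence $\hat c^{*}$, the capital--output ratio, and the labor share via Corollary \ref{corollary:wage_share}. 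Since \eqref{eq:min_surplus} is strict at the initial BGP and the steady state depends continuously on parameters, a sufficiently small perturbation preserves \eqref{eq:min_surplus}, so the perturbed economy again has a unique, locally stable BGP; convergence along the saddle path in \emph{finite} time then follows from the piecewise-linear structure of the reduced dynamic system in Online Appendix \ref{subappendix:gen_equil} and the phase-diagram analysis underlying Figure \ref{fig:ramsey_df}. This reduces each part to a comparative-statics computation on the two blocks plus a sign check on the transition path.

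For part (iii), I would start from Corollary \ref{coro:barg_power}(iii), $\partial\Psi^{n}_{t}/\partial T^{w}\le 0$, and note the same sign carries to $\Psi^{u}_{t}$ (Proposition \ref{prop:union_wage}) and hence to the aggregate worker power entering \eqref{eq:wage_agg}, so the wage schedule $w^{*}(\theta)$ shifts down; intersecting with the downward-sloping L.D. locus \eqref{eq:value_job_firm} (Panel B of Figure \ref{fig:labor_market_comp_stat}) moves the equilibrium to higher $\theta^{*}$ and lower $w^{*}$. The free-entry condition \eqref{eq:value_vacancy_firm}--\eqref{eq:value_job_firm} then gives higher $V^{*}$, and the steady-state employment identity from \eqref{eq:employment_dyn}, $L=f(\theta)/(\lambda+f(\theta))$ with $U=1-L$, gives lower $U^{*}$ (note $\lambda$ is unchanged because $m^{*}=m$ in Region $2$ and $\dot m^{*}=0$). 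The pricing block \eqref{eq:final_price} translates the lower $w^{*}$ into a higher $\mu^{*}$; plugging this into Corollary \ref{corollary:wage_share} (the prefactor $1/(1+\mu^{*})$ falls, and the lower equilibrium capital intensity $\hat k^{*}$ forced by cheaper labor in \eqref{eq:agg_prod_text} reinforces the drop) yields a lower $\Omega^{*}$, and $\hat k^{*}/(q\hat y^{*})=\big(\mu^{*}/(1+\mu^{*})\big)/(g/s^{*}+\chi^{*})$ read off $r^{*}$ and \eqref{eq:harrod_growth} falls once the decline in $\hat k^{*}$ (dominating, via concavity of the production block) and the rise in vacancy costs $\chi^{*}$ are accounted for.

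Part (ii) runs along the same chain with $\dot M$ as the shifter: Lemma \ref{lemma:management_text} gives $g=\alpha\dot M$, so a higher $\dot M$ raises $g$ and, by the capitalization effect isolated in Corollary \ref{coro:barg_power}(v), raises $\Psi^{n}_{t}$ and hence $w^{*}$ for every $\sigma>0$ when $\dot M>0$; by the argument above this lowers the asymptotic $\mu^{*}$ and, through \eqref{eq:labor_share} and the accumulation block, raises both $\Omega^{*}$ and $\hat k^{*}/(q\hat y^{*})$. Holding $\theta$ roughly constant, the response of $U$ and $V$ is read directly from Lemma \ref{lemma:tech_unemployment} and Table \ref{table:tech_unemployment}: $\partial U^{A}_{L_{t}}/\partial\dot M_{t}<0$ for $\sigma>1$ and $>0$ for $\sigma\in(0,1)$, which moves $\lambda_{t}$, hence steady-state $U$ and $V$, in the stated directions.

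Part (i) is the delicate one and I would organize it exactly as the two announced stages. Because $m$ lies in Region $2$ of Figure \ref{fig:automation_reg}, Lemma \ref{lemma:lemma_2A_AR} forces $m^{*}=m$, so a downward jump in $m$ is realized as a time interval with $\dot m^{*}_{t}<0$. \textbf{Stage 1:} with $|\partial\lambda_{t}/\partial\dot m^{*}_{t}|>\alpha$, Table \ref{table:tech_unemployment} ($\partial U^{A}_{L_{t}}/\partial\dot m^{*}_{t}<0$) gives a rise in $\lambda_{t}$ and in $U_{t}$; Corollary \ref{coro:barg_power}(iv) gives $\partial\Psi^{n}_{t}/\partial\dot m^{*}_{t}>0$, so $\dot m^{*}_{t}<0$ weakens workers and, through \eqref{eq:wage_agg}--\eqref{eq:final_price}--\eqref{eq:value_job_firm}, raises $\mu_{t}$ and lowers $\Omega_{t}$; the accumulation equation \eqref{eq:capital_dyn} with $\dot m^{*}_{t}/(1-m^{*}_{t})<0$ shifting $\delta_{t}$ and $\hat k$ falling delivers the claimed drop in $\hat k_{t}/(\hat y_{t}q)$, while the net effect on $\theta_{t}$ and $V_{t}$ stays ambiguous since the higher $\mu_{t}$ (pro-vacancy) competes with the higher separation rate. \textbf{Stage 2:} once $m=m'<m$ and $\dot m=0$, I re-run the steady-state block with the smaller $m'$; in \eqref{eq:labor_share} and \eqref{eq:agg_prod_text} the dependence on $m^{*}$ is monotone when $\dot m^{*}=0$, so a smaller $m^{*}$ raises $\mu^{*}$, lowers $\Omega^{*}$, $\theta^{*}$, $V^{*}$, and raises $U^{*}$ and the capital--output ratio for every $\sigma>0$. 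The main obstacle is precisely this bookkeeping in part (i): the perturbation displaces the labor-market and accumulation blocks simultaneously, in the same direction on $\mu$ but in opposing directions on $\theta$ and $V$ during the transition, so signing the transition cleanly requires the monotonicity of the ``$\dot{\hat c}=0$'' and ``$\dot{\hat k}=0$'' loci in $\mu$ (Figure \ref{fig:ramsey_df}) and of $\Psi^{n}_{t}$ in its arguments, together with the feasibility bound \eqref{eq:min_surplus} to exclude the unstable configuration discussed around Figure \ref{fig:automation_reg}; I would also invoke the piecewise-linear reduced dynamics to upgrade ``convergence'' to ``convergence in finite time.''
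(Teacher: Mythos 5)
Your proposal is correct and follows essentially the same route as the paper: comparative statics on the labor-market block via the derivatives of $\Psi^{(\cdot)}_{t}$ from Corollary \ref{coro:barg_power} and the shift of the labor-demand schedule \eqref{eq:value_job_firm}, then the arbitrage block via the condition $\hat{y}_{\hat{k}}q=\delta(1+\mu^{*})$, with the same two-stage decomposition for the automation shock and the same use of Lemma \ref{lemma:tech_unemployment} to sign the separation rate. Two small slips: the effect of $T^{w}$ does not ``carry to $\Psi^{u}_{t}$'' (Proposition \ref{prop:union_wage} shows $\Psi^{u}_{t}$ is independent of $T^{w}$; the aggregate wage in \eqref{eq:wage_agg} still falls because only the $w^{n}_{t}$ component moves), and the sign of $\hat{k}^{*}/(q\hat{y}^{*})$ in part (\emph{iii}) is cleaner read off diminishing returns in \eqref{eq:online_marginal} --- a higher $\mu^{*}$ forces a higher marginal product and hence a lower $\hat{k}^{*}$ and $\hat{k}^{*}/\hat{y}^{*}$ --- than off the Harrod identity, where numerator and denominator move in offsetting directions.
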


  Figure  \ref{fig:trans_dynamics} illustrates the dynamic responses  associated with the three shocks    in Proposition \ref{prop:comp_stat}.\footnote{The results in Proposition \ref{prop:comp_stat} and Figure \ref{fig:trans_dynamics} are broadly aligned  with---but can also be used to extend---the empirical findings of \citeasnoun{bergholt2022decline}.} Starting with   Figure  \ref{fig:trans_dynamics}, Panel A, the  initial stage of the transition-- represented over the interval $[t', t'']$--- features a decrease in $\dot{m}^{*}_{t}$ that gives rise to a higher rate of unemployment and an ambiguous effect on vacancies. The intuition  is that the automation shock    moves   labor demand \eqref{eq:value_job_firm} and labor supply \eqref{eq:wage_agg}  in the same direction by lowering the effective discount rate and by raising the Poisson probability of unemployment. As a consequence, though it is generally not possible to determine how $\theta$ will change,  it can be deduced that the rate of unemployment will increase given that the Beveridge curve moves outwards with the rise of technological unemployment; see Lemma \ref{lemma:tech_unemployment}. 

  If $|\partial \lambda_{t}/\partial \dot{m}^{*}_{t}| > \alpha$, the increase in $U^{A}_{L_{t}}$  will outweigh  the capitalization effect, which will move  the  labor demand  and labor supply  schedules downwards, and lead to an immediate   increase in the equilibrium rate of return by Proposition \ref{prop:equil_labor}.  Using  \eqref{eq:labor_share}, this translates into a lower labor share, as depicted by the green solid line in the lower panel of Figure \ref{fig:trans_dynamics}. The  polar case is obtained when $|\partial \lambda_{t}/\partial \dot{m}^{*}_{t}| < \alpha$, in which case the dominance of the capitalization effect   moves the  labor share upwards as represented by the orange dashdotted line in Figure \ref{fig:trans_dynamics}, Panel A.  The resulting variations in the equilibrium rate of return explain the different trajectories of the capital-output ratio over $[t', t'']$  in Figure  \ref{fig:trans_dynamics},  Panel A,  since  $\hat{k}/q\hat{y}$ will tend to move in the opposite direction of $\mu$ given the principle of diminishing marginal returns.

            \begin{figure}
\begin{center}
\begin{tikzpicture}
\matrix {
     \begin{axis}[
        axis x line=middle,
                    axis y line=middle,
                       title style={align=left},
          title={\scriptsize Panel A. Higher  automation \\ }, 
              title style={at={(axis description cs:0.3,0.9)}, anchor=south} , 
 width=0.45*\textwidth, height=3.75cm,
ymin=0.1, ymax = 1.099,
xmin=0, xmax=5.2,
xshift=-0.5cm,
    ylabel = {\scriptsize $V$},
 ytick = \empty, 
   xtick = \empty, 
        every axis x label/.style={
    at={(ticklabel* cs:0.95)},
    anchor=north,
},
every axis y label/.style={
    at={(ticklabel* cs:0.95)},
    anchor=east,
}
]

\path[fill=OliveGreen!20] (1,0.5) -- (1.52,0.57) -- (1.52,0.75) -- (1.0,0.65);

\draw [thick,OliveGreen,->] plot [smooth, tension=0.1] coordinates { (1,0.58) (1.5,0.67) (1.6,0.53) (2,0.4) (5,  0.4) };

\addplot[color = gray,dotted, thin] coordinates {(0.0, 0.58) (5.0, 0.58)};

\addplot[color = black, solid, thin] coordinates {(0.0, 0.58) (1.0, 0.58)};

\addplot[color = black, dotted, thin] coordinates {(1.0, 0) (1.0, 0.58)};

\addplot[color = black, dotted, thin] coordinates {(1.6, 0) (1.6, 0.58)};

\addplot[color = black, mark = o, only marks, mark size = 1.5pt] coordinates {(1.0, 0.58)};

;\end{axis}
& 
 \begin{axis}[%
    axis x line=middle,
                    axis y line=middle,
                    title style={align=left},
     title={\scriptsize Panel B. Lower insitutional support  \\ 
     \scriptsize  and higher labor-augmenting t.c.},
                title style={at={(axis description cs:0.4,0.9)}, anchor=south} , 
 width=0.45*\textwidth, height=3.75cm,
ymin=0.1, ymax = 1.1,
xmin=0, xmax=5.2,
xshift=-1 cm,
 ytick = \empty, 
  xtick = \empty, 
    every axis x label/.style={
    at={(ticklabel* cs:0.95)},
    anchor=north,
},
every axis y label/.style={
    at={(ticklabel* cs:0.95)},
    anchor=east,
}
]

\path[fill=Plum!15] (1,0.5) -- (2.5,0.57) -- (2.5,0.75) -- (1.0,0.65);
\path[fill=Plum!15] (2.5,0.57) -- (4.2,0.57) -- (4.2,0.75) -- (2.5,0.75);
\draw [very thick,Plum,dashed,->] plot [smooth, tension=0.4] coordinates { (1,0.58) (2.5,0.65) (4.2,  0.65) } node[Plum,right,font=\tiny]{$\sigma<1$};

\path[fill=BrickRed!15] (1,0.5) -- (2.5,0.42) -- (2.5,0.59) -- (1.0,0.65);
\path[fill=BrickRed!15] (2.5,0.59) -- (4.2,0.59) -- (4.2,0.42) -- (2.5,0.42);
\draw [thick,BrickRed,->] plot [smooth, tension=0.4] coordinates { (1,0.58) (2.5,0.5) (4.2,  0.5) } node[BrickRed,right,font=\tiny]{$\sigma>1$};

\draw [very thick,Plum,dashed,->] plot [smooth, tension=0.4] coordinates { (1,0.58) (2.5,0.65) (4.2,  0.65) } ;

\draw [thick,MidnightBlue,->] plot [smooth, tension=0.4] coordinates { (1,0.58) (1.0,0.95) (1.5,0.9) (2.5,0.85) (5,  0.85) };

\node[pin={[pin edge={Plum,->}, black, pin distance=0.55 cm]-95:{\tiny Labor-augmenting t.c.}}] at (axis cs:3.5,0.7 ) {};

\node[pin={[pin edge={BrickRed,->}, black, pin distance=0.4 cm]-70:{\tiny }}] at (axis cs:1.6,0.57 ) {};

\draw [thick,MidnightBlue,->] plot [smooth, tension=0.4] coordinates { (1,0.58) (1.0,0.95) (1.5,0.9) (2.5,0.85) (5,  0.85) };

\node[pin={[pin edge={MidnightBlue,->}, MidnightBlue, pin distance=0.7 cm]-1:{\tiny Institutions}}] at (axis cs:1.6,0.95 ) {};

\addplot[color = gray,dotted, thin] coordinates {(0.0, 0.58) (5.0, 0.58)};

\addplot[color = black, solid, thin] coordinates {(0.0, 0.58) (1.0, 0.58)};

\addplot[color = black, dotted, thin] coordinates {(1.0, 0) (1.0, 0.58)};

\addplot[color = black, mark = o, only marks, mark size = 1.5pt] coordinates {(1.0, 0.58)};

    \end{axis}
    \\
      \begin{axis}[
        axis x line=middle,
                    axis y line=middle,
              title style={at={(axis description cs:0.1,0.965)}, anchor=south} , 
 width=0.45*\textwidth, height=3.75cm,
ymin=0.1, ymax = 1.099,
xmin=0, xmax=5.2,
xshift=-0.5cm,
    ylabel = {\scriptsize $U$},
 ytick = \empty, 
  xtick = \empty, 
        every axis x label/.style={
    at={(ticklabel* cs:0.95)},
    anchor=north,
},
every axis y label/.style={
    at={(ticklabel* cs:0.95)},
    anchor=east,
}
]



\path[fill=OliveGreen!20] (1.6,0.68) -- (2,0.66) -- (2,0.84);

\path[fill=OliveGreen!20] (2.0,0.66) -- (5,0.66) -- (5,0.84)--(2,0.84);

\draw [thick,OliveGreen,->] plot [smooth, tension=0.1] coordinates { (1,0.58) (1.5,0.67) (1.6,0.68) (2,0.75) (5,  0.75) };

\addplot[color = gray,dotted, thin] coordinates {(0.0, 0.58) (5.0, 0.58)};

\addplot[color = black, solid, thin] coordinates {(0.0, 0.58) (1.0, 0.58)};

\addplot[color = black, dotted, thin] coordinates {(1.0, 0) (1.0, 0.58)};

\addplot[color = black, dotted, thin] coordinates {(1.6, 0) (1.6, 0.58)};

\addplot[color = black, mark = o, only marks, mark size = 1.5pt] coordinates {(1.0, 0.58)};

\end{axis}
& 
 \begin{axis}[%
    axis x line=middle,
                    axis y line=middle,
                title style={at={(axis description cs:0.1,0.965)}, anchor=south} , 
  width=0.45*\textwidth, height=3.75cm,
ymin=0.1, ymax = 1.1,
xmin=0, xmax=5.2,
xshift=-1 cm,
  xtick = \empty, 
   ytick = \empty, 
    every axis x label/.style={
    at={(ticklabel* cs:0.95)},
    anchor=north,
},
every axis y label/.style={
    at={(ticklabel* cs:0.95)},
    anchor=east,
}
]

\path[fill=Plum!15] (1,0.5) -- (2.5,0.57) -- (2.5,0.75) -- (1.0,0.65);
\path[fill=Plum!15] (2.5,0.57) -- (4.2,0.57) -- (4.2,0.75) -- (2.5,0.75);
\draw [very thick,Plum,dashed,->] plot [smooth, tension=0.4] coordinates { (1,0.58) (2.5,0.65) (4.2,  0.65) } node[Plum,right,font=\tiny]{$\sigma<1$};

\path[fill=BrickRed!15] (1,0.5) -- (2.5,0.42) -- (2.5,0.59) -- (1.0,0.65);
\path[fill=BrickRed!15] (2.5,0.59) -- (4.2,0.59) -- (4.2,0.42) -- (2.5,0.42);
\draw [thick,BrickRed,->] plot [smooth, tension=0.4] coordinates { (1,0.58) (2.5,0.5) (4.2,  0.5) } node[BrickRed,right,font=\tiny]{$\sigma>1$};

\draw [very thick,Plum,dashed,->] plot [smooth, tension=0.4] coordinates { (1,0.58) (2.5,0.65) (4.2,  0.65) } ;

\draw [thick,MidnightBlue,->] plot [smooth, tension=0.4] coordinates { (1,0.58)  (1.5,0.44) (2.5,0.35) (5,  0.35) };

\node[pin={[pin edge={BrickRed,->}, black, pin distance=0.55 cm]100:{\tiny Labor-augmenting t.c.}}] at (axis cs:3.5,0.5 ) {};
\node[pin={[pin edge={Plum,->}, black, pin distance=0.35 cm]120:{\tiny }}] at (axis cs:3.5,0.63 ) {};

\node[pin={[pin edge={MidnightBlue,->}, MidnightBlue, pin distance=0.45 cm]190:{\tiny Institutions}}] at (axis cs:2.5,0.35 ) {};

\addplot[color = gray,dotted, thin] coordinates {(0.0, 0.58) (5.0, 0.58)};

\addplot[color = black, solid, thin] coordinates {(0.0, 0.58) (1.0, 0.58)};

\addplot[color = black, dotted, thin] coordinates {(1.0, 0) (1.0, 0.58)};

\addplot[color = black, mark = o, only marks, mark size = 1.5pt] coordinates {(1.0, 0.58)};

    \end{axis}
    \\
      \begin{axis}[
        axis x line=middle,
                    axis y line=middle,
              title style={at={(axis description cs:0.1,0.965)}, anchor=south} , 
 width=0.45*\textwidth, height=3.75cm,
ymin=0.1, ymax = 1.099,
xmin=0, xmax=5.2,
xshift=-0.5cm,
    ylabel = {\footnotesize $\frac{\hat{k}}{q\hat{y}}$},
 ytick = \empty, 
  xtick = \empty, 
        every axis x label/.style={
    at={(ticklabel* cs:0.95)},
    anchor=north,
},
every axis y label/.style={
    at={(ticklabel* cs:0.95)},
    anchor=east,
}
]

\draw [very thick,BurntOrange,dashed,->] plot [smooth, tension=0.1] coordinates { (1,0.58)  (1.6,0.65) (2,0.73) (3,0.8) (5,  0.8) };
\node[pin={[pin edge={BurntOrange,->}, BurntOrange, pin distance=0.6 cm]178:{\tiny $|\frac{\partial \lambda_{t}}{\partial \dot{m}^{*}_{t}}| < \alpha$}}] at (axis cs:3.1,0.81 ) {};

\draw [thick,OliveGreen,->] plot [smooth, tension=0.4] coordinates { (1,0.58) (1.5,0.52) (2,0.6) (3,0.7) (5,  0.7) };

\node[pin={[pin edge={OliveGreen,->}, OliveGreen, pin distance=0.7cm]-17:{\tiny $|\frac{\partial \lambda_{t}}{\partial \dot{m}^{*}_{t}}| > \alpha$}}] at (axis cs:2,0.58) {};

\addplot[color = gray,dotted, thin] coordinates {(0.0, 0.58) (5.0, 0.58)};

\addplot[color = black, solid, thin] coordinates {(0.0, 0.58) (1.0, 0.58)};

\addplot[color = black, dotted, thin] coordinates {(1.0, 0) (1.0, 0.58)};

\addplot[color = black, dotted, thin] coordinates {(1.6, 0) (1.6, 0.58)};

\addplot[color = black, mark = o, only marks, mark size = 1.5pt] coordinates {(1.0, 0.58)};

\end{axis}
& 
 \begin{axis}[%
    axis x line=middle,
                    axis y line=middle,
                title style={at={(axis description cs:0.1,0.965)}, anchor=south} , 
 width=0.45*\textwidth, height=3.75cm,
ymin=0.1, ymax = 1.1,
xmin=0, xmax=5.2,
xshift=-1 cm,
  xtick = \empty, 
   ytick = \empty, 
    every axis x label/.style={
    at={(ticklabel* cs:0.95)},
    anchor=north,
},
every axis y label/.style={
    at={(ticklabel* cs:0.95)},
    anchor=east,
}
]
    
\draw [thick,BrickRed,->] plot [smooth, tension=0.4] coordinates { (1,0.58)  (1.5,0.64) (2.5,0.75) (5,  0.75) };

\draw [thick,MidnightBlue,->] plot [smooth, tension=0.4] coordinates { (1,0.58)  (1.5,0.44) (2.5,0.35) (5,  0.35) };

\node[pin={[pin edge={MidnightBlue,->}, MidnightBlue, pin distance=0.45 cm]190:{\tiny Institutions}}] at (axis cs:2.5,0.35 ) {};
\node[pin={[pin edge={BrickRed,->}, BrickRed, pin distance=0.45 cm]168:{\tiny Labor-augmenting t.c.}}] at (axis cs:4,0.77 ) {};

\addplot[color = gray,dotted, thin] coordinates {(0.0, 0.58) (5.0, 0.58)};

\addplot[color = black, solid, thin] coordinates {(0.0, 0.58) (1.0, 0.58)};

\addplot[color = black, dotted, thin] coordinates {(1.0, 0) (1.0, 0.58)};

\addplot[color = black, mark = o, only marks, mark size = 1.5pt] coordinates {(1.0, 0.58)};

    \end{axis}
    \\
      \begin{axis}[
        axis x line=middle,
                    axis y line=middle,
              title style={at={(axis description cs:0.1,0.965)}, anchor=south} , 
 width=0.45*\textwidth, height=3.75cm,
ymin=0.1, ymax = 1.099,
xmin=0, xmax=5.2,
xshift=-0.5cm,
    ylabel = {\scriptsize $\Omega$},
 ytick = \empty, 
 xlabel = {\scriptsize $t$},
  xtick = {1.0,1.6},
  xticklabels={\scriptsize $t'$,$t''$},
        every axis x label/.style={
    at={(ticklabel* cs:0.95)},
    anchor=north,
},
every axis y label/.style={
    at={(ticklabel* cs:0.95)},
    anchor=east,
}
]

\draw [thick,OliveGreen,->] plot [smooth, tension=0.5] coordinates { (1,0.65) (1,0.5)  (1.5,0.55) (1.6,0.38) (2,0.3) (3,0.23) (4,  0.23) } node[OliveGreen,right,font=\tiny]{$\sigma>1$};

\node[pin={[pin edge={OliveGreen,->}, OliveGreen, pin distance=0.5cm]180:{\tiny $|\frac{\partial \lambda_{t}}{\partial \dot{m}^{*}_{t}}| > \alpha$}}] at (axis cs:2.3,0.25) {};



\draw [very thick,BurntOrange,loosely dashdotted,->] plot [smooth, tension=0.5] coordinates { (1,0.65) (1,0.8)  (1.5,0.9) (1.6,0.4) (2,0.45) (3,0.5) (4,  0.5) } node[BurntOrange,right,font=\tiny]{$\sigma<1$};
\node[pin={[pin edge={BurntOrange,->}, BurntOrange, pin distance=0.6 cm]-1:{\tiny $|\frac{\partial \lambda_{t}}{\partial \dot{m}^{*}_{t}}| < \alpha$}}] at (axis cs:1.7,0.8 ) {};

\addplot[color = gray,dotted, thin] coordinates {(0.0, 0.65) (5.0, 0.65)};

\addplot[color = black, solid, thin] coordinates {(0.0, 0.65) (1.0, 0.65)};

\addplot[color = black, dotted, thin] coordinates {(1.0, 0) (1.0, 0.65)};

\addplot[color = black, dotted, thin] coordinates {(1.6, 0) (1.6, 0.65)};

\addplot[color = black, mark = o, only marks, mark size = 1.5pt] coordinates {(1.0, 0.65)};

\end{axis}
&
 \begin{axis}[%
    axis x line=middle,
                    axis y line=middle,
                title style={at={(axis description cs:0.1,0.965)}, anchor=south} , 
 width=0.45*\textwidth, height=3.75cm,
ymin=0.1, ymax = 1.1,
xmin=0, xmax=5.2,
xshift=-1 cm,
 ytick = \empty, 
 xlabel = {\scriptsize $t$},
  xtick = {1.0},
  xticklabels={\scriptsize $t'$},
    every axis x label/.style={
    at={(ticklabel* cs:0.95)},
    anchor=north,
},
every axis y label/.style={
    at={(ticklabel* cs:0.95)},
    anchor=east,
}
]

\draw [thick,BrickRed,->] plot [smooth, tension=0.4] coordinates { (1,0.58)  (1,0.8)  (1.5,0.74) (2.5,0.68) (4,  0.68) } node[BrickRed,right,font=\tiny]{$\sigma > 1$};

\draw [thick,MidnightBlue,->] plot [smooth, tension=0.4] coordinates { (1,0.58)  (1,0.3)  (1.5,0.35) (2.5,0.4) (4,  0.4) } node[MidnightBlue,right,font=\tiny]{$\sigma > 1$};

\node[pin={[pin edge={MidnightBlue,->}, MidnightBlue, pin distance=0.5 cm]0:{\tiny Institutions}}] at (axis cs:1.1,0.3 ) {};
\node[pin={[pin edge={BrickRed,->}, BrickRed, pin distance=0.5 cm]0:{\tiny Labor-augmenting t.c.}}] at (axis cs:1.1,0.82 ) {};

\addplot[color = gray,dotted, thin] coordinates {(0.0, 0.58) (5.0, 0.58)};

\addplot[color = black, solid, thin] coordinates {(0.0, 0.58) (1.0, 0.58)};

\addplot[color = black, dotted, thin] coordinates {(1.0, 0) (1.0, 0.58)};

\addplot[color = black, mark = o, only marks, mark size = 1.5pt] coordinates {(1.0, 0.58)};

    \end{axis}
    \\
    };
\end{tikzpicture}
            \caption{\scriptsize TRANSITIONAL DYNAMICS. \label{fig:trans_dynamics}}
            \caption*{\scriptsize \emph{Notes--- The graphs illustrate the \emph{qualitative}  changes of economic variables over time. The colored areas represent the cases where the direction of the variable cannot be determined a priori.}}
                \end{center}
            \end{figure}

At $t''$, the effects of $\dot{m}^{*}_{t}$  disappear and the economy moves to a new equilibrium with a lower $m$. Similar to \citeasnoun{acemoglu2018race}, this reduces the effective wage paid in the least complex task produced with labor and lowers the vacancy-unemployment ratio. In time, the reduction in $m$ moves the capital-output ratio upwards because, by assumption, automated tasks raise aggregate output and are immediately produced with capital. Moreover, the negative shock on wages is such that in the long-run the labor share always decreases regardless on the value of $\sigma$ and the strength of the initial capitalization effect.

The effects of a reduction in the   support to labor and of  a permanent rise in productivity growth are shown in Panel B of Figure \ref{fig:trans_dynamics}. Focusing first on the labor-augmenting technological change, we find that---thanks to the  capitalization effect---the labor share increases over time for any  $\sigma >0$. Similarly, since  higher effective wages reduce the equilibrium rate of return of capital,  higher labor productivity growth also raises the capital-output ratio. The  result on vacancies and unemployment is ambiguous and depends on the elasticity of substitution parameter. Particularly, if  $\theta$ remains more or less constant with an increase in $g$, the effects of a higher growth rate  are entirely determined by the relation of technological unemployment with $\dot{M}$. As shown in Lemma \ref{lemma:tech_unemployment}, higher growth reduces $U^{A}_{L_{t}}$ if $\sigma >1$, which explains the behavior of $U$ and $V$ depicted by the red solid lines in Figure \ref{fig:trans_dynamics}. The opposite is expected to happen when $\sigma \in (0,1)$, since in this case $\partial U^{A}_{L_{t}}/\partial \dot{M}_{t} >0$.

Finally,  lower  support to workers  moves the wage-curve in \eqref{eq:wage_agg} downwards. As a result, there is a simultaneous increase in the vacancy-unemployment ratio and in  the equilibrium rate of return of capital, which reduces the labor share and lowers the capital-output ratio over time.

\section{Empirical Analysis}\label{sec:empirics} This section evaluates some of the different channels through which technology and labor institutions have impacted the US economy. To do so,   Section \ref{subsection:long_run_empirical} applies an approximate calibration of the model  and compares the predicted paths   with their empirical counterparts.  Section \ref{subsection:cross_validation} extends the analysis and presents a cross-validation exercise that examines the  consistency of the rolling estimates of the model with  historical information of labor institutions  in the US. 

\subsection{Approximate calibration to the US economy}\label{subsection:long_run_empirical}  To get a sense of the effects of power relations and technical change in the US economy, I employ a parsimonious calibration strategy where  $T^{w}$ and $m^{*}_{t}$ are the only parameters targeting specific data. The relative mobility of workers is set to   match the efficient  unemployment rate of \citeasnoun{michaillat2021beveridgean}, which is the amount minimizing the nonproductive  labor time used in jobseeking and recruiting.\footnote{Particularly, I employ $U^{*}=\sqrt{U_{t} V_{t}}$. Online Appendix \ref{appendix:robustness} shows that similar results are obtained by employing the NAIRU as the equilibrium rate of unemployment.}  The automation measure is estimated using equation \eqref{eq:online_marginal} by  solving
 
 \begin{equation}\label{eq:empirical_automation}
 1-m^{*}_{t}  =  \frac{K_{t}}{q Y_{t}} {A^{k}q }^{1-\sigma} \Big( \delta^{\text{BEA-BLS}} (1+\mu_{t}^{\text{BEA-BLS}})\Big)^{\sigma}.
 \end{equation}

Here $q$, $\sigma$, and  $A^{k}$ are set as in Table \ref{table:calibration_steady_state1}, and $\mu_{t}^{\text{BEA-BLS}}$ and $\delta^{\text{BEA-BLS}}$ are  obtained from the  BEA-BLS integrated data; see Online Appendix \ref{appendix:data} for details. All other parameters are either calibrated to roughly describe some basic facts of the US  or are directly obtained from macro data.

 The first block of numbers in  Table \ref{table:calibration_steady_state1}  presents the time-varying values in the calibration obtained from direct data sources.  The  probability of collective bargaining is measured using union membership data from \citeasnoun{farber2021unions}, the  growth rate of average labor productivity is obtained from the Penn World Table \cite{feenstra2015next}, and the  opportunity cost of employment is calculated based on equation (20) of \citeasnoun{chodorow2016cyclicality}.

  \begin{table}\caption{Baseline calibration}
  \centering
\resizebox{\textwidth}{!}
 { \begin{tabular}{l c l l}
  \hline \hline
  Parameter & Average & Description & Target/source\\
    \hline 
Time-varying values  & & &\\
     $P(\mathcal{U}=1|\cdot)$ &0.25& Union membership: Gallup+BLS  & \citeasnoun{farber2021unions} \\
    $ g$ & 0.17$\%$  & Labor productivity growth  &   2$\%$ annual rate/\citeasnoun{feenstra2015next} \\
   $b$ &0.06& Opportunity cost of employment & \citeasnoun{chodorow2016cyclicality} \\
   $1-m^{*}$ & 0.12 & Automation measure & Equation \eqref{eq:empirical_automation}/ BEA-BLS integrated data\\
     Technology  & & &\\
      $\delta $ & 0.056$\%$ &  Depreciation rate & 7$\%$ annual rate/\citeasnoun{barkai2020declining} \\
  $\sigma $ & 0.6 &  Elasticity of substitution &  Standard calibration  \\
              $A^{k}$ & 0.022 & Capital-augmenting  technology  & $ w  \approx 1.5 \big(\delta/(q A^{k})\big)$/ \citeasnoun{moll2022uneven} \\ 
                          $\alpha$ & 1.4 & Labor-augmenting parameter &  $\Omega \approx 0.63$/ Standard calibration \\ 
                                      $q$ & 0.35 & Relative price of capital  & Annual $K/(q Y) \approx 1.5$/ BEA-BLS integrated data \\ 
                          Preferences  & & &\\
                           $\rho $ & 2.22$\%$ &  Discount rate  &    30$\%$ annual rate/  \citeasnoun[p. 3346] {andreoni2012estimating}\\
                           $\gamma^{f}$ & 0.45 &  Response time of firms &   $\Gamma^{na} \approx 0.31$/ Within  standard calibrations \\
                               Search and matching  & & &\\
          $\iota$ & 1.25 & Matching function parameter & \citeasnoun{petrosky2021unemployment}\\
    $\lambda_{0}$ & 0.02 & Separation rate & $ V \approx 3\%$ \\
        $\xi$ & 8 & Vacancy costs  & \citeasnoun{merz2007labor}\\
    \hline 
  \end{tabular}}
   \caption*{\scriptsize \emph{Notes---  All parameters are calibrated at a monthly frequency.}} 
\label{table:calibration_steady_state1}
  \end{table}

  In the second block,  I set $\delta$  close to the  average of the time-varying depreciation rate in \citeasnoun{barkai2020declining}. The elasticity of substitution parameter follows the literature and is set  at $\sigma=0.6$; Figure \ref{fig:results_G1} in  online  Appendix \ref{appendix:robustness} presents the results with $\sigma =1.2$ and shows that the conclusions are roughly equal. Consistent with \citeasnoun{moll2022uneven}, $A^{k}$  is calibrated so that  labor is about  50$\%$ more costly than capital in automated tasks.\footnote{ \citeasnoun{moll2022uneven} set labor 30$\%$ more costly than capital. The difference is explained by the fact that they include the rate of profit as a cost of production.} The  relative price of capital is fixed at 0.35 so that the equilibrium annual capital-output ratio is on average close to 1.5, which is close to the average in Figure \ref{fig:counterfactual_test}  and  Figure \ref{fig:capital_outputUS} in online Appendix \ref{appendix:data}.\footnote{Using Lemma \ref{lemma:lemma_2A_AR}, Figure \ref{fig:automation_regions_empirical} in online Appendix \ref{appendix:robustness} shows that the automation measure in Table \ref{table:calibration_steady_state1} is in Region 2 of Figure \ref{fig:automation_reg} and $q < \bar{q}(\mu^{*}_{t})$, meaning that automated tasks always raise aggregate output and are immediately produced with capital.}
  
  The monthly subjective discount rate  is consistent with the   experimental data of \citeasnoun{andreoni2012estimating}, who find  annual rates  between 0.2 and 0.4.  The matching parameter $\iota$ is set as in \citeasnoun{petrosky2018endogenous}. The job separation rate is between the estimates of \citeasnoun{shimer2005cyclical} and  \citeasnoun{hobijn2009job}, and is consistent with average vacancy rate of about 3$\%$. Lastly,  the value of $\xi$ implies that vacancy costs are   about 2 quarters of wage payments,  similar to \citeasnoun{merz2007labor}.

\subsubsection*{Results}  Figure \ref{fig:counterfactual_test} depicts the predicted paths of the labor share, capital profitability,  the capital-output ratio, and the measures of automation  along with  their empirical counterparts.\footnote{The model is solved using Julia's NLboxsolve.jl. The code is in the Supplementary Material.}   Figure \ref{fig:counterfactual_test}, Panel A shows that the  predictions of the technical change and institutions-driven stories match remarkably well different  measures of the labor share: the  technical change predicted path match the Penn World Table data, while the predictions based on  changes in labor institutions follow closely  the BEA-BLS data.   Panel B, however, demonstrates that the technical change hypothesis cannot account for the fall in the rate of return before the 1980s and its steady recovery afterwards. Similarly, it shows that the institutions hypothesis alone underestimates the fall in the rate of profit from the 1950s to the late 1970s. These results---as illustrated by the magenta lines in Figure \ref{fig:counterfactual_test}, Panel B--- suggest that an adequate understanding of the behavior of capital profitability    requires combining  the technical change and the institutions-driven stories.

                \begin{figure}
\begin{center}
\pgfplotstableread[col sep=comma,]{steady.csv}\datatable
\pgfplotstableread[col sep=comma,]{dt_mech_model2.csv}\datamechmodel
\pgfplotstableread[col sep=comma,]{dt_mech_Mann.csv}\dataMann
\pgfplotstableread[col sep=comma,]{dt_mech_Dechezlepretre.csv}\dataDechezlepretre
  \begin{tikzpicture}
  \begin{axis}[
  name=plot1,
  width=0.5*\textwidth, height=5.1 cm,
    y label style={at={(axis description cs:-0.065,.5)},rotate=90,anchor=south},
  xmin=1947,
  xmax=2010,
  ymin=0.52,
  ymax=0.7,
  legend to name=widelegend,
    legend style={font=\tiny,legend columns=3},
      axis x line*=bottom,
axis y line*=left,
  tick label style={font=\tiny},
  title={\scriptsize Panel A.  Labor share},
    title style={at={(axis description cs:0.18,1)}, anchor=south} , 
  xticklabel style={/pgf/number format/set thousands separator={}},
  xtick={1950,1960,1970,...,2010},
  ytick={0.50,0.54,...,0.76}  ]  

     \node[pin={[pin edge={gray,->}, gray, pin distance=0.45cm]0:{\tiny BEA-BLS}}] at (axis cs:1983,0.67) {};
       
             \node[pin={[pin edge={gray,->}, gray, pin distance=0.65cm]270:{\tiny Penn World Table}}] at (axis cs:1984,0.60) {};
             
   \addplot [mark=*, mark size = 1pt,mark options={fill=white}, draw=gray,line width= 0.7, smooth]  table[x index = {0}, y index = {1},
  each nth point={2}]{\datatable};
  
     \addplot [mark=*, mark size = 1pt,mark options={fill=gray}, draw=gray,line width= 0.7, smooth]  table[x index = {0}, y index = {2},
  each nth point={2}]{\datatable};

    \addplot [dashed, draw=Periwinkle ,line width= 2, smooth] table[x index = {0}, y index = {3},
  each nth point={2}]{\datatable};
  
    \addplot [dotted, draw=Green,line width= 1.5, smooth] table[x index = {0}, y index = {4},
  each nth point={2}]{\datatable};
    \addplot [draw=Magenta,line width= 1, smooth] table[x index = {0}, y index = {5},
  each nth point={2}]{\datatable};
   \legend{, , Technical change alone,Institutions alone, Technical change and institutions};
   
  \end{axis}

   \begin{axis}[%
  name=plot2,
  width=0.5*\textwidth, height=5.1 cm,
      at=(plot1.right of south east), anchor=left of south west,
    y label style={at={(axis description cs:-0.065,.5)},rotate=90,anchor=south},
  xmin=1947,
  xmax=2010,
  ymin=0.1,
  ymax=0.54,
  xshift=-0.1 cm,
        axis x line*=bottom,
axis y line*=left,
  tick label style={font=\tiny},
  legend pos=south west,
  legend style={draw=none,font=\tiny},
   legend cell align={left}, 
  title={\scriptsize Panel B. Capital profitability},
    title style={at={(axis description cs:0.27,0.975)}, anchor=south} , 
  xticklabel style={/pgf/number format/set thousands separator={}},
  xtick={1950,1960,1970,...,2010},
  ytick={0.12,0.2,...,0.6} ]

    \node[pin={[pin edge={gray,->}, gray, pin distance=0.75cm]0:{\tiny Rate of return $\mu_{t}$}}] at (axis cs:1953,0.47) {};
       
             \node[pin={[pin edge={gray,->}, gray, pin distance=0.2cm]0:{\tiny Rate of profit $r_{t} \equiv \mu_{t} Y_{t}/(P^{k}_{t} K_{t})$}}] at (axis cs:1951,0.265) {};

 \addplot  [mark=*, mark size = 1pt,mark options={fill=white}, draw=gray,line width= 0.7, smooth] table[x index = {0}, y index = {22},
  each nth point={2}]{\datatable};
  
   \addplot  [mark=*, mark size = 0.75pt,mark options={fill=gray}, draw=gray,line width= 0.7, smooth] table[x index = {0}, y index = {26},
  each nth point={2}]{\datatable};

      \addplot [ dashed, draw=Periwinkle,line width= 2, smooth] table[x index = {0}, y index = {23},
  each nth point={2}]{\datatable};
  
        \addplot [ dashed, draw=Periwinkle,line width= 2, smooth] table[x index = {0}, y index = {27},
  each nth point={2}]{\datatable};

    \addplot [dotted, draw=Green,line width= 1.5, smooth] table[x index = {0}, y index = {24},
  each nth point={2}]{\datatable};
      \addplot [dotted, draw=Green,line width= 1.5, smooth] table[x index = {0}, y index = {28},
  each nth point={2}]{\datatable};

    \addplot [ draw=Magenta,line width= 1, smooth] table[x index = {0}, y index = {25},
  each nth point={2}]{\datatable};
      \addplot [ draw= Magenta,line width= 1, smooth] table[x index = {0}, y index = {29},
  each nth point={2}]{\datatable};
 
    \end{axis}

   \begin{axis}[%
  name=plot3,
  width=0.5*\textwidth, height=5.1 cm,
    at=(plot1.below south east), anchor=above north east, 
    y label style={at={(axis description cs:-0.065,.5)},rotate=90,anchor=south},
  xmin=1947,
  xmax=2010,
  ymin=0.8,
  ymax=2,
  xshift=-0.1 cm,
        axis x line*=bottom,
axis y line*=left,
  tick label style={font=\tiny},
  legend pos=south west,
  legend style={draw=none,font=\tiny},
   legend cell align={left}, 
  title={\scriptsize Panel C. Capital-output ratio},
    title style={at={(axis description cs:0.28,0.975)}, anchor=south} , 
  xticklabel style={/pgf/number format/set thousands separator={}},
  xtick={1950,1960,1970,...,2010},
  ytick={0.4,0.6,...,2.2} ]  

 \addplot [mark=*, mark size = 1pt,mark options={fill=white}, draw=gray,line width= 1, smooth] table[x index = {0}, y index = {6},
  each nth point={2}]{\datatable};

      \addplot [dashed, draw=Periwinkle,line width= 2, smooth] table[x index = {0}, y index = {7},
  each nth point={2}]{\datatable};

    \addplot [dotted, draw=Green,line width= 1.5, smooth] table[x index = {0}, y index = {8},
  each nth point={2}]{\datatable};

    \addplot [ draw=Magenta,line width= 1, smooth] table[x index = {0}, y index = {9},
  each nth point={2}]{\datatable};

    \end{axis}

   \begin{axis}[%
  name=plot4,
  width=0.5*\textwidth, height=5.1 cm,
       at=(plot3.right of north east), anchor=left of north west,
    y label style={at={(axis description cs:-0.065,.5)},rotate=90,anchor=south},
  xmin=1947,
  xmax=2010,
ymin=0.08,
  ymax=0.18,
        axis x line*=bottom,
axis y line*=left,
  xshift=-0cm,
  tick label style={font=\tiny},
  legend pos=south west,
  legend style={draw=none,font=\tiny},
   legend cell align={left}, 
  title={\scriptsize Panel D. Automation measures},
    title style={at={(axis description cs:0.28,0.975)}, anchor=south} , 
  xticklabel style={/pgf/number format/set thousands separator={}},
  y tick label style={/pgf/number format/fixed,
      /pgf/number format/1000 sep = \thinspace},
  xtick={1950,1960,1970,...,2010},
  ytick={0.08,0.1,...,0.2} ]

                   \node[pin={[pin edge={gray,->}, gray, pin distance=0.75cm]120:{\tiny \citeasnoun{mann2021benign}}}] at (axis cs:1995,0.125) {};
       
       \node[pin={[pin edge={gray,->}, gray, pin distance=0.3cm]271:{{\fontsize{5.5}{4}\selectfont \citeasnoun{Dechezlepretre2019}}}}] at (axis cs:1970,0.115) {};

   \addplot [dashed, draw=Periwinkle,line width= 2, smooth] table[x index = {0}, y index = {1},
  each nth point={1}]{\datamechmodel};
  
     \addplot[dotted, draw=Green,line width= 1.5, smooth] table[x index = {0}, y index = {2},
  each nth point={1}]{\datamechmodel};

   \addplot [ draw=Magenta,line width= 1, smooth]  table[x index = {0}, y index = {1},
  each nth point={2}]{\datamechmodel};
   \addplot [mark=*, mark size = 1.25pt,mark options={fill=white!20}, draw=gray,line width= 1, smooth] table[x index = {0}, y index = {1},
  each nth point={2}]{\dataMann};
    \addplot [mark=triangle*, mark size = 1.5pt,mark options={fill=gray!10}, draw=gray,line width= 1, smooth] table[x index = {0}, y index = {1},
  each nth point={2}]{\dataDechezlepretre};
    \end{axis}
    
    \end{tikzpicture}
    \ref{widelegend}
  \end{center}
  \caption{\scriptsize  EQUILIBRIUM PATHS. \label{fig:counterfactual_test}} 
  \caption*{ \scriptsize  \emph{Notes---  Real data is represented with gray lines.   Panels A, B, C and D use    BLS-BEA  data; see Online Appendix \ref{appendix:data}. The purple dashed lines represent the paths where only  $g_{t}$ and $m_{t}$  vary in time. The green dotted lines represent the paths where only $T^{w}$, $b_{t}$ and $P(\mathcal{U}_{t}=1|\cdot)$ vary in time. The magenta  lines allow all the previous parameters to change in time. In panel D, the initial value of the data is normalized with respect to the current value of $1-m^{*}_{t}$.}}
  \end{figure}
  
The data of the capital-output ratio in  Figure \ref{fig:counterfactual_test}, Panel C,  is matched completely   by introducing changes in the automation of tasks, but is inconsistent with the predictions of  the   institutions-driven hypothesis.  This conclusion is supported in Figure \ref{fig:counterfactual_test}, Panel D,  by noting that the  estimated value of the automation measure based on  \eqref{eq:empirical_automation} is well aligned with the   time series of the automation share constructed by  \citeasnoun{Dechezlepretre2019}  and  \citeasnoun{mann2021benign} using US patent data.

                \begin{figure}
\begin{center}
\pgfplotstableread[col sep=comma,]{steady.csv}\datatable
  \begin{tikzpicture}
  \begin{axis}[
  name=plot1,
  width=0.5*\textwidth, height=5.5 cm,
    y label style={at={(axis description cs:-0.065,.5)},rotate=90,anchor=south},
  xmin=1947,
  xmax=2010,
  ymin=0,
  ymax=10,
  legend to name=widelegend,
    legend style={font=\tiny,fill=white,draw=black, legend columns=3},
      axis x line*=bottom,
axis y line*=left,
  tick label style={font=\tiny},
  title={\scriptsize Panel A.  Unemployment and vacancies},
    title style={at={(axis description cs:0.35,1)}, anchor=south} , 
  xticklabel style={/pgf/number format/set thousands separator={}},
  xtick={1950,1960,1970,...,2010},
  ytick={0,2,...,12},
 yticklabel={$\pgfmathprintnumber{\tick}\%$} ]  

       \node[pin={[pin edge={gray,->}, gray, pin distance=0.75cm]180:{\tiny Unemployment}}] at (axis cs:1985,9) {};
       
             \node[pin={[pin edge={gray,->}, gray, pin distance=0.3cm]270:{\tiny Vacancy}}] at (axis cs:1991,2.5) {};
             
                 \node[pin={[pin edge={Magenta,->}, Magenta, pin distance=1.1cm]90:{\tiny $U^{*}=\sqrt{U_{t} V_{t}}$}}] at (axis cs:1994,4) {};

   \addplot [mark=*, mark size = 1pt,mark options={fill=white}, draw=gray,line width= 0.7, smooth]  table[x index = {0}, y index = {11},
  each nth point={1}]{\datatable};
  
   \addplot [mark=*, mark size = 0.7 pt,mark options={fill=gray}, draw=gray,line width= 0.7, smooth]  table[x index = {0}, y index = {14},
  each nth point={1}]{\datatable};

    \addplot [loosely dashed, draw=Periwinkle,line width= 2, smooth] table[x index = {0}, y index = {13},
  each nth point={2}]{\datatable};
    \addplot [dotted, draw=Green,line width= 2, smooth] table[x index = {0}, y index = {12},
  each nth point={1}]{\datatable};
    \addplot [ draw=Magenta,line width= 1, smooth]  table[x index = {0}, y index = {12},
  each nth point={2}]{\datatable};
  
      \addplot [loosely dashed, draw=Periwinkle,line width= 2, smooth] table[x index = {0}, y index = {15},
  each nth point={2}]{\datatable};
    \addplot [dotted, draw=Green,line width= 2, smooth] table[x index = {0}, y index = {16},
  each nth point={2}]{\datatable};
    \addplot [ draw=Magenta,line width= 1, smooth]  table[x index = {0}, y index = {17},
  each nth point={2}]{\datatable};
  
   \legend{,,Technical change alone,Institutions alone, Technical change and institutions, , , };
   
  \end{axis}

   \begin{axis}[%
  name=plot2,
  width=0.5*\textwidth, height=5.5 cm,
      at=(plot1.right of south east), anchor=left of south west,
    y label style={at={(axis description cs:-0.065,.5)},rotate=90,anchor=south},
  xmin=1947,
  xmax=2010,
  ymin=0.1,
  ymax=1.6,
  xshift=-0.1 cm,
        axis x line*=bottom,
axis y line*=left,
  tick label style={font=\tiny},
  legend pos=south west,
  legend style={draw=none,font=\tiny},
   legend cell align={left}, 
  title={\scriptsize Panel B. Labor market tightness},
    title style={at={(axis description cs:0.3,0.975)}, anchor=south} , 
  xticklabel style={/pgf/number format/set thousands separator={}},
  xtick={1950,1960,1970,...,2010},
  ytick={0,0.2,...,2} ]  

 \addplot [mark=*, mark size = 1pt,mark options={fill=white}, draw=gray,line width= 0.7, smooth] table[x index = {0}, y index = {18},
  each nth point={1}]{\datatable};
  
     \addplot [loosely dashed, draw=Periwinkle,line width= 2, smooth] table[x index = {0}, y index = {19},
  each nth point={2}]{\datatable};
    \addplot [dotted, draw=Green,line width= 2, smooth] table[x index = {0}, y index = {20},
  each nth point={2}]{\datatable};
    \addplot [ draw=Magenta,line width= 1, smooth]  table[x index = {0}, y index = {21},
  each nth point={2}]{\datatable};
    \end{axis}
    
    \end{tikzpicture}
    \ref{widelegend}
  \end{center}
  \caption{\scriptsize LABOR MARKET EQUILIBRIUM PATHS \label{fig:counterfactual_test2}}
\caption*{ \scriptsize  \emph{Notes--- The data is from \protect \citeasnoun{petrosky2021unemployment}.}}
  \end{figure}

  Figure \ref{fig:counterfactual_test2}, in turn,  reveals  that the variations in the labor market  cannot be matched by changes in the rate of automation or the rate of productivity growth. By contrast,  the  predicted paths associated with changes in labor institutions are perfectly   consistent with the behavior of the efficient unemployment rate (by construction), and with the time series  of the vacancy rate and  labor market tightness.

In sum,  the calibration exercise shows that, while it is unlikely that the trends in the US  economy can all be adequately explained by relying on one hypothesis alone,  the fluctuations in worker power induced by variations in labor institutions are probably the major structural changes  given their capacity to explain the behavior of the labor share, the rate of return of capital, and the dynamics of the labor market  throughout the postwar period. This conclusion finds additional support  in the following subsection by showing  that the predicted paths of worker power are consistent with  the behavior of important  labor institutions  in the US.

  \subsection{Worker Power and Labor Institutions}\label{subsection:cross_validation}    Figure \ref{fig:inverse_inference} compares the  inferred time series of $T^{w}_{t}$  with   popular measures of  the institutional support to labor.\footnote{Figures \ref{fig:results_NAIRU} and \ref{fig:results_G1} in online Appendix \ref{appendix:robustness} show the predicted paths of $T^{w}$ are  robust to alternative model calibrations.}  The rolling estimates of the relative mobility of labor indicate that during the period of the New Deal Order capitalists probably lost power over labor given the increasing difficulty of finding new workers willing to accept lower wages.\footnote{Similar to \citeasnoun{gerstle2022}, a political order is defined as  a time period where governments are compelled to follow a specific set of policy rules irrespective of their own ideological affiliation.}  The rise of the federal real minimum wage and the high levels of union membership over this period are some of the institutional changes which support this hypothesis,  given that---by legislative and political action---they helped mitigate the capacity of firms to lower wages through the force of competition.

               \begin{figure}
\begin{center}
\pgfplotstableread[col sep=comma,]{steady.csv}\datatable
\pgfplotstableread[col sep=comma,]{dt_communist_th.csv}\datacom

  \begin{tikzpicture}

  \begin{axis}[
           axis y line*=left,
scale only axis,
  width=0.87*\textwidth, height=7 cm,
  xmin=1947,
  xmax=2010,
  ymin=0,
    ymax=50,
  xticklabel style={/pgf/number format/set thousands separator={}},
  xtick={1950,1955,1960,...,2015},
  ytick={0,10,...,60},
ylabel={Institutional support to labor},
 yticklabel={$\pgfmathprintnumber{\tick}\%$},
y label style={rotate=-0, at={(ticklabel cs:0.25)}, font=\tiny} ]

\path[fill=lightgray!20] (1949,2) -- (1949,49) -- (1977,49) -- (1977,2);

       \addplot[mark=square*, mark size = 2pt,mark options={solid,fill=BrickRed},  draw=BrickRed!80,line width=1, smooth] table[x index = {0}, y index = {1},  each nth point={2}]{\datacom}node[BrickRed,above,font=\tiny, xshift=2.5em]{Communist threat};
       
       \node[pin={[pin edge={gray,->}, gray, pin distance=0.75cm]90:{\tiny Federal real minimum wage}}] at (axis cs:2001,26) {};

       \node[pin={[pin edge={gray,->}, gray, pin distance=0.4cm]180:{\tiny Top marginal 
       income tax rate}}] at (axis cs:1993,18) {};
       
        \node[pin={[pin edge={lightgray,->}, gray, pin distance=2.6cm]90:{\tiny Union membership}}] at (axis cs:2003,21) {};
       
         \node[pin={[pin edge={white,->}, black, pin distance=0cm]-90:{\footnotesize Neoliberal Order}}] at (axis cs:1991,50) {};
         
                   \node[pin={[pin edge={white,->}, black, pin distance=0cm]-90:{\footnotesize New Deal Order}}] at (axis cs:1962,50) {}; 
       
                  \addplot [mark=square* , mark size = 1pt,mark options={solid,fill=lightgray},  draw=lightgray,line width=1, smooth] table[x index = {0}, y index = {30}, each nth point={1}]{\datatable};
                     \addplot [mark=diamond* , mark size = 2pt,mark options={solid,fill=lightgray},  draw=lightgray,line width=1, smooth] table[x index = {0}, y index = {31}, each nth point={1}]{\datatable};
                           \addplot [mark=* , mark size = 2pt,mark options={solid,fill=white},  draw=lightgray,line width=1, smooth] table[x index = {0}, y index = {32}]{\datatable};

\end{axis}

\begin{axis}[
scale only axis,
  axis y line*=right,
axis x line=none,
  width=0.87*\textwidth, height=7 cm,
  xmin=1947,
  xmax=2010,
  ymin=-0.1,
    ymax=8.1,
ylabel style = {align=center},
legend columns = 3,
    legend style = {at={(0.5, -0.13)}, anchor=north, inner sep=1pt, style={column sep=0.15cm}},       
    legend cell align=left,
  xtick={},
  ytick={0,1,...,9},
  ylabel={$T^{w}$},
y label style={rotate=-90, xshift=-0.35em, font=\tiny} ]

    



        
    

        \addplot [loosely dashed, draw=Periwinkle,line width=2, smooth]table[x index = {0}, y index = {33}]{\datatable};

     \addplot [dotted, draw=Green,line width=1.5, smooth] table[x index = {0}, y index = {34}]{\datatable};

                        \addplot [draw= Magenta,line width=1, smooth] table[x index = {0}, y index = {35}]{\datatable}; 
                        



        \legend{\tiny $T^{w}$: Technical change,\tiny $T^{w}$: Institutions alone, \tiny $T^{w}$: Technical change and  institutions,,,};
        
\end{axis}

    \end{tikzpicture}
  \end{center}
  \caption{ \scriptsize WORKER POWER AND LABOR INSTITUTIONS. \label{fig:inverse_inference}}
  \caption*{\scriptsize \emph{Notes---The top marginal income tax rate, the federal real minimum wage, and union membership are normalized so that in 1980 they are equal to the value of the Communist threat. The top marginal income tax rate is obtained from  \citeasnoun{alvaredo2018world} and the Communist threat is  the relative real GDP per capita between the Soviet Union and the US  obtained from   \citeasnoun{bolt2020maddison}. }}
  \end{figure}

  By the mid-1970s,   the  political order supporting  labor lost momentum  and the US found itself in a new era with  declining  real minimum wages, lower union memberships, and falling top marginal income tax rates.\footnote{Given that production workers are not normally in the top of the income distribution,  lower marginal tax rates probably shift the bargaining scale against labor and in favor of capital. \citeasnoun{piketty2014optimal} reach a  similar conclusion in relation to  the relative  pay of CEOs.} These institutional changes coincide with the  fall of the relative mobility of labor, which can account for  the decline of the labor share in the mid-1970s, the fall of the equilibrium rate of unemployment, and the  steady (or even rising) vacancy rates over the 1970s and 1980s. 
  
  But, what explains the rise and fall of the institutional support to labor? And why is  worker power partly  captured by $T^{w}$?   The answer to the second question is  that $T^{w}$  defines the  probability that firms will match with a new worker in the bargaining process of wages; see Proposition \ref{prop:non_union_wage} above. Thus,  as $T^{w}$ gets bigger, firms gain a hiring advantage  by  increasing  the competition among workers for each available vacancy. In this respect, it is reasonable that $T^{w}$ will decrease with institutional changes like higher real minimum wages or   higher union memberships given that these  restrict the capacity of firms to lower wages through competition.

A tentative  answer to the  first question is found in Figure \ref{fig:inverse_inference} by following Gerstle's \citeyear{gerstle2022} argument that  much of the changes in the institutional support to labor can  be attributed to  the Communist threat---which refers to the class compromise between capital and labor induced by the fear that communism could challenge capitalism as the dominant economic system. By this logic, it was in the interest of capitalists and the government to compromise by enhancing social programs for the poor, putting forward legislative actions favoring a bigger welfare state, and addressing the international embarrassment of white supremacy in the southern states.\footnote{Paradoxically, the need to demonstrate  a commitment to dismantling segregation  in the US was accompanied by a decreasing support among radically conservative whites to the  Democratic Party and the legislation favoring the construction of a welfare state \cite{kuziemko2018did}. Notwithstanding these issues related to racism, Appendix \ref{appendix:protocl_decision} presents a  formalism of  Figure \ref{fig:inverse_inference} using   a simple game-theoretic model linking the Communist threat with the institutional support to labor.}  In the mid-1970s, however,   the political pressure to comply with the requirements of a strong welfare state vanished with the decay of the Soviet Union's economy, as illustrated by the simultaneous decline of the Communist threat and the  institutional support  to labor in the US.\footnote{This analysis can benefit by using  microdata that reveals how the public's perception of communism affected their support to a welfare state, and how this perception  impacted public policies.  This is flagged as a fruitful area for future research.}

\section{Further Economic Implications}\label{sec:extensions} 

Extending on the results of the previous section, next I show some connections of  the  worker power hypothesis with  the wage-premium and the association of  market power with increasing markups.

\subsection{Institutions, profitability,  and the wage-premium} Figure \ref{fig:inst_prof_wagep} depicts two important findings which highlight   the  predictive capacity of the worker power hypothesis. The first is that the equilibrium path of the  rate of return of capital obtained by allowing changes in labor institutions matches remarkably well the   behavior of the average  markup in the US. Particularly, both the model and the data show a declining trend in business profitability between the  1960s and the late  1970s, and a steady recovery since the early 1980s (see Figures   \ref{fig:profitability_US}  and \ref{fig:profitability_wp} in Online Appendix \ref{appendix:data} for additional evidence).  This contrasts with the predicted path obtained by only allowing changes in technology, where the model and the data move in polar directions. Thus,  Figures \ref{fig:inverse_inference} and \ref{fig:inst_prof_wagep} present clear evidence---based on solid theoretical foundations---showing that there is a redistribution of the production surplus from labor to capital with a weakening   power of workers.

               \begin{figure}
\begin{center}
\pgfplotstableread[col sep=comma,]{dt_return.csv}\datareturn
\pgfplotstableread[col sep=comma,]{dt_wagepremium.csv}\dtwp
\pgfplotstableread[col sep=comma,]{dt_deloecker.csv}\dtloecker

  \begin{tikzpicture}

  \begin{axis}[
           axis y line*=left,
scale only axis,
  width=0.85*\textwidth, height=6 cm,
  xmin=1947,
  xmax=2010,
  ymin=0.34,
    ymax=0.75,
  xticklabel style={/pgf/number format/set thousands separator={}},
  xtick={1950,1955,1960,...,2005},
   ytick={0.2,0.25,...,1},
   legend columns = 1,
    legend style = {draw=none, at={(0.34, 0.93)}, anchor=north, inner sep=1pt, style={column sep=0.15cm}},   
    legend cell align={left},    
ylabel={Markup and wage-premium},
y label style={rotate=-0, at={(ticklabel cs:0.25)}, font=\tiny} ]

                         \addplot [mark=diamond* , mark size = 2pt,mark options={solid,fill=white},  draw=lightgray,line width=1, smooth] table[x index = {0}, y index = {1}, each nth point={2}]{\dtwp};

                        \addplot [mark=* , mark size = 2pt,mark options={solid,fill=lightgray!20},  draw=black,line width=0.75, smooth] table[x index = {0}, y index = {5}, each nth point={2}]{\datareturn};
                        
                              \addplot [mark=square* , mark size = 2pt,mark options={solid,fill=white},  draw=black,line width=0.75, smooth] table[x index = {0}, y index = {2}, each nth point={2}]{\dtloecker};

                           \legend{\tiny Wage-premium (CPS March), \tiny  Wage-premium (BEA-BLS), \tiny Average  markup \cite{de2020rise}};
             
\end{axis}

\begin{axis}[
scale only axis,
  axis y line*=right,
axis x line=none,
  width=0.85*\textwidth, height=6 cm,
  xmin=1947,
  xmax=2010,
  ymin=0.25,
    ymax=0.51,
ylabel style = {align=center},
legend columns = 3,
    legend style = {at={(0.5, -0.13)}, anchor=north, inner sep=1pt, style={column sep=0.15cm}},       
    legend cell align=left,
 ytick={0.1,0.13,...,0.7},
    legend style = {at={(0.5, -0.13)}, anchor=north, inner sep=1pt, style={column sep=0.15cm}},       
    legend cell align=left,
  xtick={},
  ylabel={$\mu$},
y label style={rotate=-90, xshift=-0.5em, font=\scriptsize} ]

             
      \addplot [dashed, draw=Periwinkle,line width= 2, smooth] table[x index = {0}, y index = {4},
  each nth point={2}]{\datareturn};
    \addplot [dotted, draw=Green,line width= 1.5, smooth] table[x index = {0}, y index = {2},
  each nth point={2}]{\datareturn};
    \addplot [draw=Magenta,line width= 1, smooth] table[x index = {0}, y index = {3},
  each nth point={2}]{\datareturn};
          \legend{\tiny  $\mu$: Technical change alone  ,\tiny  $\mu$: Institutions alone, \tiny $\mu$: Technical change and  institutions};
       
\end{axis}

    \end{tikzpicture}
  \end{center}
  \caption{\scriptsize  PROFITABILITY AND WAGE-PREMIUM.\label{fig:inst_prof_wagep}}
  \caption*{ \scriptsize \emph{Notes--- The average  markup and the wage-premium (BEA-BLS) are normalized so that in 1981 they are equal to the value of the wage-premium (CPS March) of \protect \citeasnoun{autor2008trends}.}}
  \end{figure}

The second important finding in Figure \ref{fig:inst_prof_wagep} is that the  wage-premium is positively correlated  with business profitability.  A possible interpretation of this result   is that the growing surplus going from labor to  capital can filtrate to different types of workers  depending on the role they play in the production process.  For example, managers and executives---who are high up in the scale of skilled workers \cite[p. 18]{autor2015there}---   have profited from the decline of union membership by removing the influence of production workers on executive pay \cite{dinardo2000unions, rosenfeld2006widening}. Additionally, they  have probably  benefited from lower  minimum  wages and declining top marginal income tax rates  given that part of their pay is directly tied to bonuses and stock options---both of which are not necessarily influenced by their own performance,  but are rather determined  by external circumstances related to the profitability of businesses \cite{piketty2003income,piketty2014optimal, acemoglu2022eclipse}.\footnote{It goes without saying that the association of worker power with the wage-premium does not rule out the possibility that skills and education play an important part in the determination of wages. However,  given that the demand for high-skilled labor is probably associated with business profitability, it is unlikely that the bias in skilled labor  is  an exogenous factor causing the sharp increase in the wage-premium.}

\subsection{Concentration and markups} Many works attempting to explain  the fall in the labor share since the 1980s  base their  analyzes  on the principle that large firms  can pay workers below their marginal productivity, such that (in the text's notation):

               \begin{figure}
\begin{center}
\pgfplotstableread[col sep=comma,]{dt_concentration.csv}\dataconcent
\pgfplotstableread[col sep=comma,]{dt_return.csv}\dtreturn
\pgfplotstableread[col sep=comma,]{dt_deloecker.csv}\dtloecker
\pgfplotstableread[col sep=comma,]{steady.csv}\datatable
  \begin{tikzpicture}

  \begin{axis}[
           axis y line*=left,
scale only axis,
  width=0.85*\textwidth, height=6 cm,
  xmin=1945,
  xmax=2010,
  ymin=0.09,
    ymax=0.6,
  xticklabel style={/pgf/number format/set thousands separator={}},
  legend columns = 1,
    legend style = {draw=none, at={(0.32, 0.93)}, anchor=north, inner sep=1pt, style={column sep=0.15cm}},   
    legend cell align={left},    
  xtick={1950,1955,1960,...,2015},
  ytick={0,0.1,...,1.1},
ylabel={Rate of return of capital},
y label style={rotate=-0, at={(ticklabel cs:0.25)}, font=\tiny} ]

\path[fill=lightgray!20] (1982,0.11) -- (1982,0.58) -- (2008,0.58) -- (2008,0.11);

                  \addplot [mark=square* , mark size = 2pt,mark options={solid,fill=lightgray!20},  draw=gray,line width=1, smooth] table[x index = {0}, y index = {1}, each nth point={2}]{\dtloecker};
             
                                    \addplot [mark=* , mark size = 2pt,mark options={solid,fill=white},  draw=lightgray,line width=1, smooth] table[x index = {0}, y index = {1}, each nth point={2}]{\dtreturn};
                                    
                                       \addplot  [mark=diamond* , mark size = 2pt,mark options={solid,fill=lightgray!20}, draw=black, line width=1, smooth] table[x index = {0}, y index = {26},
  each nth point={2}]{\datatable};
  
                                       \legend{ \tiny Average net markup, \tiny Rate of return (BEA-BLS), \tiny Rate of profit (BEA-BLS)};    
             
\end{axis}

\begin{axis}[
scale only axis,
  axis y line*=right,
axis x line=none,
  width=0.86*\textwidth, height=6 cm,
  xmin=1945,
  xmax=2010,
  ymin=0.35,
    ymax=1,
ylabel style = {align=center},
legend columns = 2,
    legend style = {at={(0.5, -0.13)}, anchor=north, inner sep=1pt, style={column sep=0.15cm}},       
    legend cell align=left,
  xtick={},
  ytick={0,0.2,...,1.2},
  ylabel={Market Concentration},
y label style={rotate=-180,at={(1.05,0.7)}, font=\tiny} ]

         \addplot [mark=* , mark size = 2pt,mark options={solid,fill=white},  draw=black,line width=1, smooth] table[x index = {0}, y index = {1}, each nth point={2}]{\dataconcent};

\end{axis}

    \end{tikzpicture}
  \end{center}
  \caption{\scriptsize PROFITABILITY AND MARKET CONCENTRATION.\label{fig:concentration_hyp}}
   \caption*{\scriptsize  \emph{Notes--- The average net markup is the natural logarithm of the markup in \protect \citeasnoun{de2020rise}.  The black line with white circles is the share of corporate  assets accounted for by the top 0.1$\%$ \protect \cite{kwon2023}.}}
  \end{figure}

\begin{equation}\label{eq:markup_power}
y_{L_{t}} = w_{t} (1+\mu_{t})
\end{equation}

The problem, as previously noted by \citeasnoun{stansbury2020declining}, is that there is essentially no way to distinguish between the rise in $\mu$ as a result of an increasing concentration in markets or  a fall in worker power using  equation \eqref{eq:markup_power} alone. 

Figure \ref{fig:concentration_hyp} helps solve this identification problem by directly comparing different measures of capital profitability  with the concentration of markets on large firms;  Figure \ref{fig:profitability_wp} in online Appendix \ref{appendix:data} presents additional evidence.  The key takeaway   is that the association between market concentration and higher markups is only clear after 1982, which is the   period commonly studied  in the  papers  defending the market power hypothesis \cite[e.g.,][]{autor2020fall,barkai2020declining}. Between the 1950s and the late 1970s,  by contrast, market concentration and  business profitability move in polar directions,  while---as shown in Figure \ref{fig:inst_prof_wagep}---the latter is always   consistent with the behavior of worker power induced by the institutional changes in the US.

\section{Conclusions}\label{sec:conclusions}

The article has proposed a novel approach showing how  politico-economic variables can intervene in macroeconomic outcomes by directly affecting the power of  labor. In this environment, labor institutions define the ``playing field'' in the bargaining process of wages, which is instrumental for determining the equilibrium rate of unemployment and the rate of return of capital. Moreover, the  surplus realized by capitalists in the bargaining process is central in the model by establishing the funds for a continuous reproduction of the   economy at an increasing scale, and by  defining the  regions for which it is profitable for firms to substitute capital for labor.

Empirically, the model offers a plausible explanation for the long-run behavior of the labor share, capital profitability, the capital-output ratio, the rate of unemployment, and the vacancy rate,  based on a combination of institutional and technological changes over the postwar period.  In addition, the analysis helps narrow down the multidimensionality of institution-driven stories of the fall in the labor share over the past half-century to specific policy changes which include---but are not necessarily limited to---the variations in union membership, unemployment benefits,    real minimum wages, and geopolitical threats.  In this respect, the model opens up the traditional  framework  by showing how the political economy of income distribution, labor institutions, and  political preferences is not a mere complement to, but rather a vital part of,   macroeconomic  analysis. 



\appendix

\numberwithin{equation}{section}
\numberwithin{lemma}{section}
\numberwithin{assm}{section}
\numberwithin{defn}{section}
\renewcommand{\theequation}{\thesection\arabic{equation}}
\renewcommand{\thelemma}{\thesection\arabic{lemma}}
\renewcommand{\thedefn}{\thesection\arabic{defn}}
\renewcommand{\theassm}{\thesection\arabic{assm}}
\renewcommand{\thesubsection}{\thesection.\arabic{subsection}}

\section{Main Appendix}\label{appendix:AppendixA}

\subsection{Model with investment-specific technological change}\label{subappendix:genmodelsec1}
The analysis in the text was carried out under  Assumption \ref{ass:task_function} and the principle that $q_{t}=q$ for any $t$. This section introduces a generalization of the model in the text by replacing    Assumption \ref{ass:task_function} for

  \begin{assm}\label{ass:management} $A^{k}_{t}=A^{k} D(h_{t})^{-a_{0}}$ and $A^{l}_{t}(j)=e^{\alpha j} D(h_{t})^{a_{1}}$, with $D'(h_{t})>0$ and $a_{0}, a_{1} >0$.
 \end{assm}

  Assumption \ref{ass:management} follows \citeasnoun{grossman2017balanced} by positing a relation between the management effort of firms---here denoted as $h_{t}$---and the   disembodied technology functions. Intuitively, the assumption says that  firms can raise the productivity of labor at the expense of increasing the relative supply of effective capital, which  tilts the unit isoquants and leads to a technological change which is both labor saving and capital using.

 Using Assumption  \ref{ass:management}  and following similar steps as those outlined in \citeasnoun{acemoglu2018race}, the aggregate output of the economy can be written as

\begin{equation*}
Y_{t} = \Bigg[(1-m^{*}_{t})^{1/\sigma}\big(K_{t} A^{k}D(h_{t})^{-a_{0}}\big)^{\frac{\sigma-1}{\sigma}} + \Big(\frac{e^{c(\sigma-1)m^{*}_{t}} -1}{\alpha (\sigma-1)}\Big)^{1/\sigma} \Big(A^{l}_{t}(J^{*}) L_{t}\Big)^{\frac{\sigma-1}{\sigma}}\Bigg]^{\frac{\sigma}{\sigma-1}}.
\end{equation*}

Given the ideal price index condition, the  partial derivatives of $Y_{t}$ with respect to $K_{t}$ and $L_{t}$ satisfy

\begin{equation}\label{eq:online_marginal}
\begin{split}
Y_{K_{t}} & = \Bigg(\frac{Y_{t}}{K_{t}}\Bigg)^{1/\sigma} (1-m^{*}_{t})^{1/\sigma}{A^{h}}^{\frac{\sigma-1}{\sigma}}D(h_{t})^{-a_{0}\frac{(\sigma-1)}{\sigma}}=\frac{\delta P^{k}_{t}}{P^{c}_{t}}\\
Y_{L_{t}} & =  \Bigg(\frac{Y_{t}}{e^{\alpha J^{*}_{t}} D(h_{t})^{a_{1}} L_{t}}\Bigg)^{1/\sigma} \Big(\frac{e^{\alpha (\sigma-1)m^{*}_{t}} -1}{\alpha (\sigma-1)}\Big)^{1/\sigma}  e^{c\alpha ^{*}_{t}} D(h_{t})^{a_{1}}  = \frac{W_{t}}{P^{c}_{t}}
\end{split}
\end{equation}

 To further simplify the analysis,  let $Y_{t}$ be expressed as

\begin{equation*}
Y_{t} = e^{\alpha J^{*}_{t}}D(h_{t})^{a_{1}} L_{t} \Bigg[(1-m^{*}_{t})^{1/\sigma} Z_{t} + \Bigg(\frac{e^{\alpha (\sigma-1)m^{*}_{t}} -1}{\alpha (\sigma-1)}\Bigg)^{1/\sigma} \Bigg]^{\frac{\sigma}{\sigma-1}}.
\end{equation*}

Where $Z_{t} = \Big( (K_{t}/L_{t}) D(h_{t})^{-(a_{0}+a_{1})} e^{-\alpha J^{*}_{t}}\Big)^{\frac{\sigma-1}{\sigma}}$. Denoting $A= a_{1}/(a_{0}+a_{1})$, the aggregate production function  can be expressed as 

\begin{equation*}
Y_{t} = \Big(L_{t} e^{\alpha  J^{*}_{t}}\Big)^{1-A} K^{A}_{t} Z^{\frac{-A \sigma}{\sigma-1}}_{t} \Bigg[(1-m^{*}_{t})^{1/\sigma} Z_{t} + \Bigg(\frac{e^{\alpha (\sigma-1)m^{*}_{t}} -1}{\alpha(\sigma-1)}\Bigg)^{1/\sigma} \Bigg]^{\frac{\sigma}{\sigma-1}}
\end{equation*}

which is a  Cobb-Douglas function  with possible shifts in the factor share parameters.  The next lemma presents a generalization of Lemma \ref{lemma:management_text} in the text.

   \begin{lemma}\label{lemma:management} Suppose that Assumption \ref{ass:management} holds. If firms choose the management effort to maximize output, then in any BGP:  
 
 \begin{itemize}
 \item $g_{K} = g_{Y} +g_{q}$.
 \item $g_{Y} =g_{C}=g=  \alpha  \dot{M} + a_{1}\; g_{q}/a_{0}$.
 \item $\frac{D'(h)}{D(h)}\dot{h} = g_{q}/a_{0}$.
 \end{itemize}

 \end{lemma}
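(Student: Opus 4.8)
The plan is to use the firm's management-effort decision to replace $h_t$ by a time-invariant effective capital--labor ratio, and then to read the growth rates off the Cobb--Douglas representation together with the capital-accumulation identity. Write $\Lambda(m)=\big((e^{\alpha(\sigma-1)m}-1)/(\alpha(\sigma-1))\big)^{1/\sigma}$, which is positive for every $\sigma>0$, $\sigma\neq1$. First I would hold $(K_t,L_t,J^*_t,m^*_t)$ fixed and regard $Y_t$ as a function of $D_t\equiv D(h_t)$ through
\begin{equation*}
Y_t=\Big[(1-m^*_t)^{1/\sigma}\big(A^k D_t^{-a_0}K_t\big)^{\frac{\sigma-1}{\sigma}}+\Lambda(m^*_t)\big(e^{\alpha J^*_t}D_t^{a_1}L_t\big)^{\frac{\sigma-1}{\sigma}}\Big]^{\frac{\sigma}{\sigma-1}}.
\end{equation*}
Since $D'(h_t)>0$, the stationarity condition $\partial Y_t/\partial h_t=0$ is equivalent to $\partial\ln Y_t/\partial\ln D_t=0$; carrying out this derivative and writing $A=a_1/(a_0+a_1)$, it reduces to
\begin{equation*}
\frac{(1-m^*_t)^{1/\sigma}Z_t}{(1-m^*_t)^{1/\sigma}Z_t+\Lambda(m^*_t)}=A,\qquad\text{hence}\qquad Z_t=\frac{a_1}{a_0}\,\frac{\Lambda(m^*_t)}{(1-m^*_t)^{1/\sigma}}.
\end{equation*}
Because $m^*_t$ is constant along a BGP, $Z_t\equiv Z^*$ is a time-invariant constant.

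Second, differentiating the definition $Z_t=\big((K_t/L_t)D(h_t)^{-(a_0+a_1)}e^{-\alpha J^*_t}\big)^{(\sigma-1)/\sigma}$ in logs, and using that along a BGP employment is stationary ($g_L=0$) and $J^*_t=M_t-m^*_t$ with $m^*_t$ constant (so $\dot J^*_t=\dot M_t$), the constancy of $Z_t$ yields
\begin{equation*}
g_K=(a_0+a_1)\,\frac{D'(h)}{D(h)}\,\dot h+\alpha\dot M.
\end{equation*}
Next, substituting $Z_t=Z^*$, $L_t$ constant and $m^*_t$ constant into the Cobb--Douglas form $Y_t=(L_te^{\alpha J^*_t})^{1-A}K_t^{A}\,Z_t^{-A\sigma/(\sigma-1)}\big[(1-m^*_t)^{1/\sigma}Z_t+\Lambda(m^*_t)\big]^{\sigma/(\sigma-1)}$ and taking logarithmic derivatives gives $g_Y=(1-A)\alpha\dot M+A\,g_K$. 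From the capital-accumulation equation \eqref{eq:capital_dyn} with $\dot m^*_t=0$ (so $\delta_t=\delta$), together with the fact that investment measured in output units, $I_t/q_t$, is a constant fraction of $Y_t$ along the BGP, one obtains $g_I=g_q+g_Y$; since $\dot K_t+\delta K_t=I_t$ forces $g_K=g_I$, this is the first bullet $g_K=g_Y+g_q$. Substituting it into $g_Y=(1-A)\alpha\dot M+A\,g_K$ and using $A/(1-A)=a_1/a_0$ gives $g_Y=\alpha\dot M+(a_1/a_0)\,g_q$, while the aggregate resource constraint $Y_t=C_t+I_t/q_t+\xi_tV_t+\tau_t$ forces every component to grow at $g_Y$ on the BGP, so $g_C=g_Y=:g$, the second bullet. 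Finally, inserting $g_K=g_Y+g_q=\alpha\dot M+\tfrac{a_0+a_1}{a_0}\,g_q$ into the displayed expression for $g_K$ gives $(a_0+a_1)\tfrac{D'(h)}{D(h)}\dot h=\tfrac{a_0+a_1}{a_0}\,g_q$, i.e.\ $\tfrac{D'(h)}{D(h)}\dot h=g_q/a_0$, the third bullet; setting $g_q=0$ recovers Lemma \ref{lemma:management_text}.

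The step I expect to be the main obstacle is the first one: showing that the management-effort optimum exists, is interior, and pins the effective capital--labor ratio $Z_t$ to a value depending only on $m^*_t$. This requires (i) assembling the aggregate production function from the task structure exactly as in \citeasnoun{acemoglu2018race} under Assumption \ref{ass:management}; (ii) checking the second-order condition, which confirms a genuine maximum of output in the empirically relevant range $\sigma\in(0,1)$, while for $\sigma>1$ the first-order condition is better read as the cost-minimizing choice of technology along the Assumption \ref{ass:management} menu; and (iii) tracking the signs of the $(\sigma-1)$ exponents throughout, although the algebra is the same for all $\sigma>0$, $\sigma\neq1$. Once $Z_t$ is known to be constant, the remaining steps are routine logarithmic differentiations of a BGP system that is exactly determined, yielding $g_K$, $g_Y$, $g_C$ and $(D'/D)\dot h$ in terms of the primitives $\dot M$ and $g_q$.
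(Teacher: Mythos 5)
Your proposal is correct and follows essentially the same route as the paper's own proof: the first-order condition for management effort pins down the effective ratio $Z_t$ as a constant along the BGP, log-differentiation of $Z_t$ and of the Cobb--Douglas representation yields $g_K=(a_0+a_1)\tfrac{D'(h)}{D(h)}\dot h+\alpha\dot M$ and $g_Y=A g_K+(1-A)\alpha\dot M$, and the resource constraint plus capital accumulation delivers $g_K=g_Y+g_q$ and $g_C=g_Y$, from which the three bullets follow by substitution. The only differences are cosmetic: you write the first-order condition with the correct $(1-m^*)^{1/\sigma}$ factor and $1/\sigma$ exponent (the paper's displayed version drops these, though its conclusion that $Z_t$ is constant is unaffected), and you justify $g_K=g_Y+g_q$ via a constant investment share where the paper manipulates the budget constraint, which amounts to the same standard BGP argument.
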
 
 
 The proof of Lemma \ref{lemma:management} is shown in Online Appendix \ref{appendix:proofs}. For now, the main argument is that the model in the text can be easily generalized to incorporate investment-specific technological change.

\section{Auxiliary results} 

 \subsection{Decision over bargaining strategies}\label{appendix:protocl_decision}  Here  I propose a game-theoretic model determining the probability that workers will choose a collective bargaining strategy  in Figure \ref{fig:bargaining_protocol}.  The multidimensionality in the preferences  of workers under collective bargaining is expressed as:

\begin{equation*}
U^{i,1}_{W}=\omega_{i0} + \omega_{1}L^{u} + \omega_{2} w^{u} - \omega^{3}(\mathcal{R}-\bar{\mathcal{R}}_{i} )^{2} - \omega_{4} (\mathcal{Q}-\bar{\mathcal{Q}})^{2}
\end{equation*}

 with   $\omega_{j} \geq 0$  for $j \in \{ i0,1,2,3,4\}$, $\omega_{10}>\omega_{20}$ and $\mathcal{\bar{R}}_{1} <\mathcal{\bar{R}}_{2}$. The first term  $\omega_{0i}$ is a proxy of the government's support to labor. The second term is a Stone-Geary type utility function describing the wage-employment gains associated with participating in a collective bargaining protocol \cite{lee2006racism}. The third term represents the workers' view on identity issues. For example, a higher $\mathcal{R}$ can represent a higher degree of racism among workers; whereas a lower $\bar{\mathcal{R}} $ may represent a greater government support to minorities. The last term is meant to represent the workers' view on ``social justice,'' where $\mathcal{Q}$ is a measure of economic equality and  $\bar{\mathcal{Q}}$ is the perceived ideal level of inequality by the typical worker \cite{alesina2011preferences}.  The utility of workers under individual bargaining is simply  $U^{i,2}_{W}= \omega_{1}L^{n} + \omega_{2} w^{n}$ for $i\in \{1,2\}$.

 \begin{table}
 \centering
\resizebox{0.75\textwidth}{!}{
 \begin{tabular}{ l  c  c | c  }
 \hline \hline 
 & \vline & Collective bargaining & Individual individual \\
 \hline 
High political  support & \vline & $U^{1,1}_{W}, U^{1,1}_{G}$ & $U^{1,2}_{W}, U^{1,2}_{G}$   \\
 \hline 
Low political support & \vline &  $U^{2,1}_{W}, U^{2,1}_{G}$ &  $U^{2,2}_{W}, U^{2,2}_{G}$ \\
 \hline \hline \\
 \end{tabular}}
\caption{Payoff table.}\label{table:conflict_problem}
 \end{table}

For conceptual simplicity,  I  assume that the government is exclusively interested in maximizing its vote share. In each scenario, the government gets

\begin{equation*}
\begin{split}
U^{1,j}_{G}&=\mathcal{V}_{1,j}+ \mathcal{V}_{3}  \varphi, \;\;  \text{with   } \mathcal{V}_{3} >0. \\
U^{2,j}_{G}&=\mathcal{V}_{2,j}, \;\;  \text{with   }  j \in \{1,2\}.
\end{split}
\end{equation*}

Here $ \varphi$ is the measure of the ``Communist threat'' and $ \mathcal{V}_{i,j} $ is an autonomous component capturing the public's preference in each possible scenario.  Surely this is over simplistic, but it helps illustrate  how  the Communist threat can induce  the government to favor a bigger welfare state to avoid losing public support.\footnote{This is well represented in a letter of president Eisenhower to his brother in the early 1950s, where he stated: ``Should any political party attempt to abolish social security, unemployment insurance, and eliminate labor laws... you would not hear of that part again in our political history'' \cite[p. 45]{gerstle2022}.}

                \begin{figure}
\begin{center}
\pgfplotstableread[col sep=comma,]{dt_quantal.csv}\dataquantal
  \begin{tikzpicture}
  \begin{axis}[
  name=plot1,
  width=0.45*\textwidth, height=3.6 cm,
    y label style={at={(axis description cs:-0.065,.5)},rotate=90,anchor=south},
  xmin=-0.16,
  xmax=0.21,
  ymin=0.0,
  ymax=0.6,
   legend pos=north east,
  legend style={draw=none,font=\tiny},
   legend cell align={left}, 
      axis x line*=bottom,
axis y line*=left,
    tick label style={font=\tiny,/pgf/number format/fixed},
  tick label style={font=\tiny},
  title={\tiny Panel A. $P(\mathcal{U}=1|\cdot) P^{G}(\mathcal{S}=1|\cdot)$ },
    title style={at={(axis description cs:0.26,1)}, anchor=south} , 
  xticklabel style={/pgf/number format/set thousands separator={}},
  xtick={-0.2,-0.1,...,0.3},
  ytick={0,0.1,...,1}  ]

   \addplot [mark=square*, mark size = 1pt,mark options={fill=black!30}, draw=black,line width= 1, smooth] table[x index = {0}, y index = {1},
  each nth point={4}]{\dataquantal};
        
  \end{axis}

   \begin{axis}[%
  name=plot2,
  width=0.45*\textwidth, height=3.6 cm,
      at=(plot1.right of south east), anchor=left of south west,
    y label style={at={(axis description cs:-0.065,.5)},rotate=90,anchor=south},
  xmin=-0.16,
  xmax=0.21,
  ymin=0.1,
  ymax=0.4,
  xshift=-0.1 cm,
        axis x line*=bottom,
axis y line*=left,
  tick label style={font=\tiny},
  legend pos=south west,
  legend style={draw=none,font=\tiny},
   legend cell align={left}, 
  title={\tiny Panel B.    $P(\mathcal{U}=2|\cdot) P^{G}(\mathcal{S}=1|\cdot)$},
    tick label style={font=\tiny,/pgf/number format/fixed},
    title style={at={(axis description cs:0.36,0.975)}, anchor=south} , 
  xticklabel style={/pgf/number format/set thousands separator={}},
  xtick={-0.2,-0.1,...,0.3},
  ytick={0,0.1,...,1} ]  
   \addplot [mark=square*, mark size = 1pt,mark options={fill=black!30}, draw=black,line width= 1, smooth] table[x index = {0}, y index = {2},
  each nth point={4}]{\dataquantal};
    \end{axis}
    
    \begin{axis}[
  name=plot3,
  width=0.45*\textwidth, height=3.6 cm,
   at=(plot1.below south east), anchor=above north east, 
    y label style={at={(axis description cs:-0.065,.5)},rotate=90,anchor=south},
  xmin=-0.16,
  xmax=0.21,
  ymin=0.1,
  ymax=0.4,
       axis x line*=bottom,
axis y line*=left,
  tick label style={font=\tiny,/pgf/number format/fixed},
  title={\tiny Panel C. $P(\mathcal{U}=1|\cdot) P^{G}(\mathcal{S}=2|\cdot)$  },
      title style={at={(axis description cs:0.35,1)}, anchor=south} , 
  xticklabel style={/pgf/number format/set thousands separator={}},
  xtick={-0.2,-0.1,...,0.3},
  ytick={0,0.1,...,1}  ]  
  
   \addplot [mark=square*, mark size = 1pt,mark options={fill=black!30}, draw=black,line width= 1, smooth] table[x index = {0}, y index = {3},
  each nth point={4}]{\dataquantal};
    \end{axis}

    \begin{axis}[
  name=plot4,
  width=0.45*\textwidth, height=3.6 cm,
 at=(plot3.right of north east), anchor=left of north west,
    y label style={at={(axis description cs:-0.065,.5)},rotate=90,anchor=south},
  xmin=-0.16,
  xmax=0.21,
  ymin=0,
  ymax=0.5,
    xshift=-0.1cm,
         axis x line*=bottom,
axis y line*=left,
  tick label style={font=\tiny,/pgf/number format/fixed},
  title={\tiny Panel D.  $P(\mathcal{U}=2|\cdot) P^{G}(\mathcal{S}=2|\cdot)$},
      title style={at={(axis description cs:0.3,1)}, anchor=south} , 
  xticklabel style={/pgf/number format/set thousands separator={}},
    ytick={0,0.1,...,1},
  xtick={-0.2,-0.1,...,0.3}  ]  

     \addplot [mark=square*, mark size = 1pt,mark options={fill=black!30}, draw=black,line width= 1, smooth] table[x index = {0}, y index = {4},
  each nth point={4}]{\dataquantal};
    \end{axis}
  
    \end{tikzpicture}
  \end{center}
  \caption{ \scriptsize EQUILIBRIUM POLITICAL STATE AND THE COMMUNIST THREAT.\label{fig:quantal}}
    \caption*{ \scriptsize \emph{Notes--- $U^{1,1}_{W}=1+0.75 \varphi$,   $U^{1,2}_{W}=U^{2,1}_{W}=U^{2,2}_{W}=1$ and $\lambda^{W}=6$. Correspondingly,  $U^{1,1}_{G}=0.5 + 0.5 \varphi$, $U^{1,2}_{G}=0.3 + 0.5 \varphi$, $U^{2,1}_{G}=0.3$, $U^{2,2}_{G}=0.5$ and $\lambda^{G}=11$.}}
  \end{figure}

If  workers and  the government maximize the expected payoff associated with each strategy  in Table \ref{table:conflict_problem}, subject to an entropy constraint, there will exist  (given the appropriate regularity conditions)  a unique   Nash equilibrium with mixed strategies  \cite{mackowiak2023rational}

\begin{equation}\label{eq:quantal_worker}
P(\mathcal{U}=i|\cdot) = \frac{e^{\lambda^{W} \sum_{j=1}^{2} P^{G}(S=j|\cdot) U^{j,i}_{W}} }{\sum_{j'=1}^{2} e^{\lambda^{W} \sum_{j=1}^{2} P^{G}(S=j|\cdot) U^{j,j'}_{W}} }
\end{equation}

and 

\begin{equation}\label{eq:quantal_gov}
P^{G}(S=j|\cdot) = \frac{e^{\lambda^{G} \sum_{i=1}^{2} P(\mathcal{U}=i|\cdot) U^{ji}_{G}} }{\sum_{j'=1}^{2} e^{\lambda^{G} \sum_{j=1}^{2} P(\mathcal{U}=j|\cdot) U^{j'j}_{G}} }
\end{equation}

Here $P(\mathcal{U}=1|\cdot)$ denotes the probability of collective bargaining and $P^{G}(S=1|\cdot) $ is the probability that the government  provides high institutional support to labor.  The key feature of  \eqref{eq:quantal_worker} and \eqref{eq:quantal_gov} is that by introducing some ``randomness'' in the  behavior of workers and the government (represented by $\lambda^{W}$ and $\lambda^{G}$), both equations  capture the  complexity of aggregating over  heterogeneous  individuals with limited information-processing capacities. 

Figure \ref{fig:quantal} illustrates the basic argument of the decision model by associating each equilibrium outcome with  the proxy of the Communist threat. For instance, the model shows that the probability of an equilibrium with high institutional support to labor and  high collective bargaining increases with a rise  in $\varphi$---as illustrated  in the data of Figure \ref{fig:inverse_inference}, where   the surge in the relative real GDP per capita of the Soviet Union was accompanied by a rise in the institutional support to labor in the US. Correspondingly,   Figure \ref{fig:quantal}  shows that a decrease in $\varphi$ can raise the probability of an equilibrium with low institutional support to labor and a higher density of individual bargaining, as it happened in the US following the mid-1970s. 

These results do not substitute, but rather complement the existing studies associating factors like racism and the ``American exceptionalism" with the public's support to welfare \cite{lee2006racism,alesina2011preferences}. In fact, this is a potentially  fruitful area for future research   since it can help disentangle the causes determining the political state of society and thus the factors which shape the  power of labor. 

\subsection{Auxiliary results to Section \ref{sec:equilibrium_dynamics}} This subsection presents the theoretical structure for Figures \ref{fig:ramsey_df} and \ref{fig:automation_reg}.

 \subsubsection{Arbitrage Condition}\label{subappendix:arbitrage} Assume the existence of a representative capitalist consumer  looking to maximize\footnote{To save notation I  assume that $q_{t}=q$ and $a_{0}=a_{1}=0$, as in the text.}

  \begin{equation*}
   \int_{0}^{\infty} e^{ -(\mu^{*}_{t}\delta_{t} -\epsilon g_{t}) t}  \;   \frac{C_{t}^{1-\epsilon} -1}{1-\epsilon} \mathrm{d}t \quad  \text{s.t.  }
\eqref{eq:capital_dyn}
 \end{equation*}
 
  Capitalist consumption is  $C_{t}$,  $\epsilon >0$ is  the intertemporal elasticity of substitution, and $g_{t} = \alpha (\dot{M}_{t} -\dot{m}^{*}_{t}) + \dot{L}_{t}/L_{t}$ is the actual rate of growth.  Here the discount rate, or the  \emph{competitive opportunity cost} faced by the representative consumer,   is divided in two complementary parts. The first  is $\mu^{*}_{t}\delta_{t}$, which---similar to \citeasnoun{abel1989assessing}---represents   the equilibrium marginal net rate of return per unit of capital.\footnote{Intuitively, $ \mu^{*}_{t}\delta_{t} $  can be interpreted as the equilibrium return that a typical capitalist can expect to receive in a competitive environment  from an additional unit of productive capital. This takes the place of the \emph{required rate of return} commonly used in the literature.}   The second element  $(\epsilon g_{t})$  states that regardless of the activity chosen by the capitalist, it will always expect a diminishing marginal utility of consumption resulting  from the  expansion of the economy.

Expressing the results using stationary per-capita variables:
 
 \begin{equation}\label{eq:euler}
\frac{ \dot{\hat{c}}_{t}}{\hat{c}_{t}} = \frac{1}{\epsilon} \Big[ \hat{y}_{\hat{k}_{t}} q -\delta_{t} (1+\mu^{*}_{t}) \Big]
 \end{equation}

   \begin{equation}\label{eq:transversality}
\underset{t \rightarrow \infty}{\mathrm{lim}}\;  \hat{k}_{t} \; e^{-\int_{0}^{t} (\mu_{t'} \delta_{t'}  - g_{t'})\mathrm{d}t'} = 0.
 \end{equation}
 
 Equation \eqref{eq:euler} is meant to create an \emph{analogy} of the social conditions of  arbitrage characterizing the tendency towards an equilibrium rate of return. This is  clear if we use equation \eqref{eq:online_marginal},  in which case \eqref{eq:euler} is reduced to $\dot{\hat{c}}_{t}/\hat{c}_{t} = \delta (\mu_{t} - \mu^{*}_{t})/\epsilon$. By this logic, there is a flat consumption profile, $\dot{\hat{c}}=0$, when $\mu_{t}=\mu^{*}_{t}$,  indicating that there  are no net advantages for changes in the use of capital. However, if $\mu_{t} > \mu^{*}_{t}$, capitalists  will be willing to sacrifice some consumption today for consumption tomorrow given that current capital inflows will be rewarded above its equilibrium level. 
 
 Equation \eqref{eq:transversality} shows that  in a dynamic efficient equilibrium the marginal net return per unit of capital must be greater than the equilibrium growth rate of the economy.

  \subsubsection{Automation Regions}\label{subappendix:Automation_regions}
 
The next lemma is a modified version of Lemma A2 in \citeasnoun{acemoglu2018race}.

\begin{lemma}\label{lemma:lemma_2A_AR} Suppose that Assumption \ref{ass:management} holds and that the economy is initially in a BGP with positive growth satisfying \eqref{eq:min_surplus}. Then, for a given $\mu^{*}$, there exist $q^{\text{min}} < \bar{q} < q^{\text{max}}$ such that:

\begin{enumerate}[label=(\emph{\roman*})]
\item If $q \in [q^{\text{min}}, \bar{q}]$, there is a decreasing function $\bar{m}(q): q \in [q^{\text{min}}, \bar{q}]  \rightarrow (0,1)$ such that for all $ m > \bar{m}(q)$, we have $w_{J}(m) > D(h_{t})^{a_{0}} \delta/(A^{k} q) < w_{M}(m)$ and $D(h_{t})^{a_{0}} \delta/(A^{k} q) = w_{M}(\bar{m}(q))$. Moreover, $\bar{m}(q^{\text{min}})=1$ and $\bar{m}(\bar{q})=0$.  

\item If $q \in [\bar{q},q^{\text{max}}]$, there is an increasing function $\tilde{m}(q): q \in [\bar{q},q^{\text{max}}]  \rightarrow (0,1)$ such that for all $ m > \tilde{m}(q)$, we have $w_{J}(m) > D(h_{t})^{a_{0}}  \delta/(A^{k} q) < w_{M}(m)$ and $D(h_{t})^{a_{0}}  \delta/(A^{k} q) = w_{J}(\tilde{m}(q))$. Moreover, $\tilde{m}(q^{\text{max}})=1$ and $\tilde{m}(\bar{q})=0$.  
\end{enumerate}
 
The case where $q>q^{\text{max}}$ and $q<q^{\text{min}}$ is analogous to cases (iii) and (iv) in \citeasnoun[p. 1531]{acemoglu2018race}.
\end{lemma}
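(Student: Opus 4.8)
The plan is to follow \citeasnoun[Lemma~A2]{acemoglu2018race}, with two modifications forced by the present environment: the equilibrium rate of return $\mu^{*}$ now enters the effective wages (it is a surplus over, not a component of, unit cost), and the management technology of Assumption~\ref{ass:management} contributes the factors $D(h_{t})^{a_{0}}$ and $D(h_{t})^{a_{1}}$. First I would fix the BGP (it exists by hypothesis, with \eqref{eq:min_surplus} ensuring positive growth) together with the associated $\mu^{*}$. For $x\in(0,1)$, let $w_{M}(x)$ and $w_{J}(x)$ be the effective wages at the frontier task $M$ and at task $M-x$ in the candidate allocation in which the $x$ lowest-comparative-advantage tasks are produced by capital, i.e. $J^{*}=M-x$ (these are the limits in the definitions preceding the lemma). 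Using the labor first-order condition in \eqref{eq:online_marginal} together with $P_{t}=(1+\mu_{t})P^{c}_{t}$ and the ideal price index, each becomes an explicit function of $x$, $\mu^{*}$ and $q$, and by construction $w_{J}(x)=e^{\alpha x}w_{M}(x)\ge w_{M}(x)$. Write $\kappa(q):=D(h)^{a_{0}}\delta/(A^{k}q)$ for the effective unit cost of capital read off the capital first-order condition in \eqref{eq:online_marginal}; it is continuous, strictly decreasing, and sweeps out $(0,\infty)$ as $q$ runs over $(0,\infty)$. With this notation, ``task $j$ is cheaper with capital'' means exactly ``the effective wage at $j$ exceeds $\kappa(q)$'', so the three regions of Figure~\ref{fig:automation_reg} are simply the three possible orderings of $\kappa(q)$ relative to $w_{M}(m)\le w_{J}(m)$.

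The technical core is the monotonicity of these effective wages in $m$ at fixed $\mu^{*}$: I would show that $m\mapsto w_{M}(m)$ is continuous and strictly decreasing on $(0,1)$, hence $m\mapsto w_{J}(m)=e^{\alpha m}w_{M}(m)$ is continuous and strictly increasing. The derivation eliminates the stationary capital intensity $\hat k$ between the two first-order conditions in \eqref{eq:online_marginal}, substitutes back into the labor condition, differentiates in $m$, and uses the ideal price index identity to fix the sign unambiguously. This is the step I expect to be the main obstacle, for two reasons: $w_{M}$ depends on $m$ both explicitly (through $e^{\alpha(M-m)}$ and through the task weight $e^{\alpha(\sigma-1)m}-1$) and implicitly through the whole equilibrium allocation, so the total derivative is not transparent; and $\mu^{*}$, jointly determined with $m$ in the full model through Proposition~\ref{prop:equil_labor}, must be held fixed throughout this comparative-statics argument, which has to be justified rather than assumed.

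Given the monotonicity, the rest is bookkeeping. Extend $w_{M}$ and $w_{J}$ to $m\in\{0,1\}$ by continuity, noting $w_{J}(0)=w_{M}(0)$ and $w_{J}(1)>w_{M}(1)$. Define $q^{\min}$ and $q^{\max}$ by $\kappa(q^{\min})=w_{J}(1)$ and $\kappa(q^{\max})=w_{M}(1)$, and $\bar q$ by $\kappa(\bar q)=w_{J}(0)=w_{M}(0)$; evaluating the capital condition at $m=0$ identifies $\bar q=\delta(1+\mu^{*})D(h)^{a_{0}}/A^{k}$, the value quoted in the text, and strict monotonicity of $\kappa$ and of $w_{J},w_{M}$ gives $q^{\min}<\bar q<q^{\max}$. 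For $q\in[q^{\min},\bar q]$, $\kappa(q)$ lies in the range of $w_{J}(\cdot)$, so the intermediate value theorem plus strict monotonicity of $w_{J}$ yield a unique $\bar m(q)\in[0,1]$ with $w_{J}(\bar m(q))=\kappa(q)$; the upward monotonicity of $w_{J}$ and the downward monotonicity of $w_{M}$ then give, for every $m>\bar m(q)$, the inequality pattern of Region~2 in Figure~\ref{fig:automation_reg}; implicit differentiation of $w_{J}(\bar m(q))=\kappa(q)$ with $\kappa'<0$ and $\partial w_{J}/\partial m>0$ gives $\bar m'(q)<0$; and $\bar m(q^{\min})=1$, $\bar m(\bar q)=0$ follow from the definitions of $q^{\min}$ and $\bar q$. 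The range $q\in[\bar q,q^{\max}]$ is symmetric with $w_{M}$ in place of $w_{J}$ (the binding effective wage switches because $\kappa(q)$ is now small), producing an increasing $\tilde m(q)$ with $w_{M}(\tilde m(q))=\kappa(q)$, $\tilde m(\bar q)=0$ and $\tilde m(q^{\max})=1$. Finally, for $q\notin[q^{\min},q^{\max}]$, $\kappa(q)$ sits entirely below $w_{M}(\cdot)$ or entirely above $w_{J}(\cdot)$ on $(0,1)$, which is exactly cases (iii)--(iv) of \citeasnoun[p.~1531]{acemoglu2018race}.
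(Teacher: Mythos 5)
Your overall strategy is the paper's: both arguments read the effective wages off the ideal price index at a fixed $\mu^{*}$ (so that $P=(1+\mu^{*})P^{c}$ enters through \eqref{eq:final_price}), set $\kappa(q)=D(h)^{a_{0}}\delta/(A^{k}q)$, identify $\bar q=\delta(1+\mu^{*})D(h)^{a_{0}}/A^{k}$ from the $m=0$ case, pin $q^{\min}$ and $q^{\max}$ by $\kappa(q^{\min})=w_{J}(1)$ and $\kappa(q^{\max})=w_{M}(1)$, and construct the thresholds by monotonicity plus the intermediate value theorem. (Your defining equalities $w_{J}(\bar m(q))=\kappa(q)$ and $w_{M}(\tilde m(q))=\kappa(q)$ agree with the paper's proof, which is the internally consistent convention even though the lemma statement prints them the other way around.)

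The gap sits exactly in the step you flag as the technical core. You assert that $m\mapsto w_{M}(m)$ is strictly decreasing and $m\mapsto w_{J}(m)$ strictly increasing on all of $(0,1)$, with the sign fixed ``unambiguously.'' The differentiation does not deliver that: the paper's computation gives
\begin{equation*}
\frac{w_{J}'(m)}{w_{J}(m)}=\frac{1}{1-\sigma}\,\frac{\kappa(q)^{1-\sigma}-w_{M}(m)^{1-\sigma}}{w_{M}(m)^{1-\sigma}\int_{0}^{m}e^{\alpha(1-\sigma)j}\,\mathrm{d}j},
\qquad
\frac{w_{M}'(m)}{w_{M}(m)}=\frac{1}{1-\sigma}\,\frac{\kappa(q)^{1-\sigma}-w_{J}(m)^{1-\sigma}}{w_{M}(m)^{1-\sigma}\int_{0}^{m}e^{\alpha(1-\sigma)j}\,\mathrm{d}j},
\end{equation*}
so $w_{J}'(m)>0$ if and only if $\kappa(q)>w_{M}(m)$ and $w_{M}'(m)<0$ if and only if $\kappa(q)<w_{J}(m)$: the signs are region-dependent. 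For $q<\bar q$ and $m$ below the crossing both effective wages are \emph{increasing} in $m$, and for $q>\bar q$ and $m<\tilde m(q)$ both are \emph{decreasing}, so each of your two global claims fails on one side of $\bar q$. The construction survives on the conditional signs alone --- at any crossing $w_{J}(m)=\kappa(q)$ one automatically has $\kappa(q)>w_{M}(m)$ and hence $w_{J}'>0$ there, which yields uniqueness of the crossing, the Region-2 inequality pattern for all $m>\bar m(q)$, and the sign of $\bar m'(q)$ --- and that is in effect what the paper does, except that it obtains $\bar m'<0$ and $\tilde m'>0$ not by implicit differentiation but from an explicit Taylor-approximated closed form, $\bar m(q)\approx\bigl[\tfrac{2}{\alpha(1-\sigma)}\bigl(1-((1+\mu^{*})\kappa(q))^{\sigma-1}\bigr)\bigr]^{1/2}$, which also sidesteps the fact (omitted in your sketch) that $w_{J}(\cdot;q)$ depends on $q$ directly, so the implicit differentiation of $w_{J}(\bar m(q))=\kappa(q)$ has an extra term. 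As written, your bookkeeping quotes the global monotonicity in precisely the form that is false, so the plan needs to be rerouted through the conditional statement before it closes.
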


\begin{proof}
See Online Appendix \ref{appendix:proofs}. 
\end{proof}

\bibliographystyle{econometrica}
{\footnotesize\bibliography{references_ciep}}

\newpage

\appendix

\numberwithin{equation}{section}
\numberwithin{lemma}{section}
\numberwithin{assm}{section}
\numberwithin{defn}{section}
\numberwithin{figure}{section}
\numberwithin{table}{section}
\renewcommand{\theequation}{\thesection\arabic{equation}}
\renewcommand{\thelemma}{\thesection\arabic{lemma}}
\renewcommand{\thedefn}{\thesection\arabic{defn}}
\renewcommand{\theassm}{\thesection\arabic{assm}}
\renewcommand{\thefigure}{\thesection\arabic{figure}}
\renewcommand{\thetable}{\thesection\arabic{table}}

\section*{ONLINE APPENDIX FOR:   ``THERE IS POWER IN GENERAL EQUILIBRIUM''}

\begin{center}
by \textsc{Juan Jacobo}
\end{center}

\section{Main Proofs}\label{appendix:proofs}

 \subsection{Section \ref{sec:model}} 
 
\subsubsection{Proof of Lemma \ref{lemma:management}}\label{subappendix:lemma_management} Managers have the option of increasing the productivity of workers by increasing the relative supply of effective capital. This is captured by the constraint that $a_{0},a_{1} >0$. Assuming that $J^{*}=J_{t}$, the management effort which maximizes output is consequently given by\footnote{The proof that the first order conditions give a maximum can be found in \citeasnoun{grossman2017balanced}.}

\begin{equation*}
\begin{split}
\frac{\partial Y_{t}}{\partial h_{t}}&  = \frac{-A \sigma}{\sigma -1 } Z^{\frac{-A \sigma}{\sigma -1} -1}_{t} \Big(L_{t} e^{\alpha J^{*}_{t}}\Big)^{1-A} K^{A}_{t} H(Z_{t}) \frac{\partial Z_{t}}{\partial h_{t}} \\
& + \frac{ \sigma}{\sigma -1 } Z^{\frac{-A \sigma}{\sigma -1}}_{t} \Big(L_{t} e^{\alpha  J^{*}_{t}}\Big)^{1-A} K^{A}_{t} H(Z_{t})^{1/\sigma} \frac{\partial Z_{t}}{\partial h_{t}} = 0 
\end{split}
\end{equation*}

where $H(Z_{t}) = \Bigg[(1-m^{*}_{t})^{1/\sigma} Z_{t} + \Bigg(\frac{e^{\alpha (\sigma-1)m^{*}_{t}} -1}{\alpha (\sigma-1)}\Bigg)^{1/\sigma} \Bigg]^{\frac{\sigma}{\sigma-1}}$ and $A=a_{1}/(a_{0}+a_{1})$. Organizing terms it follows that

\begin{equation*}
Z_{t} \Big[(1-m^{*}_{t})^{1/\sigma} Z_{t} + \Big(\frac{e^{\alpha (\sigma-1)m^{*}_{t}} -1}{\alpha (\sigma-1)} \Big)\Big]^{-1}=A
\end{equation*}

That is, given the BGP condition that $m^{*}_{t}=m$, it follows that $Z_{t}$ is constant if capitalists set an optimal management effort. 

Differentiating $Z_{t}$ with respect to time:

\begin{equation}\label{eq:online_dotZ1}
\frac{\dot{Z}_{t}}{Z_{t}} =  \frac{\sigma-1}{\sigma} \Big[ g_{K} - (a_{0}+a_{1}) \frac{D'(h_{t})}{D(h_{t})}\dot{h}_{t} - \alpha  \dot{J}^{*}_{t}\Big]=0,
\end{equation}

since $\dot{L}_{t}=0$ in a BGP. Now, using the aggregate production function  with $Z_{t}$ fixed and following similar steps as those in \citeasnoun{grossman2017balanced},  it follows that $g_{Y}=A g_{K} + (1-A) \alpha \dot{J}^{*}_{t} $. Thus, 
\begin{equation}\label{eq:online_gy1}
\begin{split}
 & (a_{0}+a_{1}) g_{Y} = a_{1} g_{K} + a_{0} \alpha  \dot{J}^{*}_{t} \\
\Rightarrow & (a_{0}+a_{1}) g_{Y} = a_{1} (g_{Y}+g_{q}) + a c \dot{J}^{*}_{t}\\
\Rightarrow &  g_{Y} =  \alpha  \dot{J}^{*}_{t} + \frac{a_{0}}{a_{1}} g_{q}
\end{split}
\end{equation}

where the second line uses $g_{K}=g_{Y}+g_{q}$. The last line in \eqref{eq:online_gy1} is one of the results of Lemma \ref{lemma:management}. Replacing \eqref{eq:online_gy1} in \eqref{eq:online_dotZ1} it follows that

\begin{equation*}
\frac{D'(h_{t})}{D(h_{t})}\dot{h}_{t}  = g_{q}/a_{0},
\end{equation*}

which is also mentioned in Lemma \ref{lemma:management}.  Lastly, in order to show that $g_{K}=g_{Y}+g_{q}$, we use the budget constraint of capitalists and workers to obtain 

\begin{equation*}
Y_{t} = C^{T}_{t} + \frac{I_{t}}{q_{t}} +\xi V_{t},
\end{equation*}

where $C^{T}_{t} \equiv C_{t} + C^{w}_{t}$ is consumption  of capitalists and workers and $U_{t} B_{t} = T_{t}$.  Integrating $\xi V_{t}$ as a form of consumption such that $\bar{C}_{t} = C^{T}_{t}  + \xi V_{t}$, it follows that 

\begin{equation*}
g_{Y} = g_{C} \frac{\bar{C}_{t} }{Y_{t}} + \frac{I_{t}/q_{t}}{Y_{t}} \big(g_{I}-g_{q}\big).
\end{equation*}

Given \eqref{eq:capital_dyn} in the main text, $g_{I}=g_{K}$ in a BGP, so 

\begin{equation*}
\begin{split}
g_{Y} &= g_{C} \frac{\bar{C}_{t} }{Y_{t}} + \frac{I_{t}/q_{t}}{Y_{t}} \big(g_{K}-g_{q}\big)
= g_{C} \frac{\bar{C}_{t} }{Y_{t}}  + \frac{Y_{t}-\bar{C}_{t}}{Y_{t}} \big(g_{K}-g_{q}\big)\\
 &=  \frac{\bar{C}_{t} }{Y_{t}} \big(g_{C} - g_{K}+g_{q}\big) + \big(g_{K}-g_{q}\big).\\
\Rightarrow g_{Y}-g_{K}+g_{q} &=  \frac{\bar{C}_{t} }{Y_{t}} \big(g_{C} - g_{K}+g_{q}\big)=0
\end{split}
\end{equation*}

This  finishes the proof of Lemma \ref{lemma:management}. Lemma \ref{lemma:management_text} is  a special case where $q_{t}=q$ and $a_{0}=a_{1}=0$.

\subsubsection*{Value Functions} The Hamiltonian associated with the problem of workers is

\begin{equation*}
\mathcal{H}^{w}(L_{t}, U_{t}) = e^{-\rho t} \big[ L_{t} (W_{t}/P_{t}) + U_{t} (B_{t}/P_{t})\big] + \varphi^{w}_{1,t} \big(f(\theta_{t}) U_{t} - \lambda_{t} L_{t}) +  \varphi^{w}_{2,t} \big( \lambda_{t} L_{t} +f(\theta_{t}) U_{t} )
\end{equation*}

 where $\varphi^{w}_{1,t}$ and $\varphi^{w}_{2,t}$ are co-state variables for employment and unemployment, respectively. The following necessary conditions hold for the Hamiltonian:
 
 \begin{equation*}
 \begin{split}
 \frac{\partial \mathcal{H}^{w}(\cdot)}{\partial L_{t}} &:  e^{-\rho t}  \frac{W_{t}}{P_{t}} - \lambda_{t}(\varphi^{w}_{1,t}-\varphi^{w}_{2,t})= - \dot{\varphi}^{c}_{1,t} \\
  \frac{\partial \mathcal{H}^{w}(\cdot)}{\partial U_{t}} &:  e^{-\rho t}  \frac{B_{t}}{P_{t}}  + f(\theta_{t})\big(\varphi^{w}_{1,t}-\varphi^{w}_{2,t}\big) =   - \dot{\varphi}^{c}_{2,t} 
 \end{split}
 \end{equation*}
 
 Dividing both sides of the equations by $e^{-\rho t}  e^{\alpha  J^{*}_{t}} D(h_{t})^{a_{1}}$ and expressing the stationary marginal value of employment as $\phi_{L_{t}} = \varphi^{w}_{1,t}/\big(e^{-\rho t}  e^{\alpha  J^{*}_{t}} D(h_{t})^{a_{1}}\big)$ and of unemployment as $\phi_{U_{t}} = \varphi^{w}_{2,t}/\big(e^{-\rho t}   e^{\alpha  J^{*}_{t}} D(h_{t})^{a_{1}}\big)$, we reach the results in equations  \eqref{eq:un_value} and \eqref{eq:job_value_worker}. 
 
 The Hamiltonian associated with the problem of capitalists is

\begin{equation*}
\mathcal{H}^{c}(L_{t}, U_{t}) = e^{-\rho t} \big[Y_{t} - L_{t} \frac{W_{t}}{P_{t}}  -  V_{t} \frac{\Xi_{t}}{P_{t}} - \frac{T_{t}}{P_{t}}- \frac{I_{t}}{q_{t}}  \big] + \varphi^{c}_{1,t} \big(q(\theta_{t}) V_{t} - \lambda_{t} L_{t}) + \varphi^{c}_{2,t} \big(  \lambda_{t} L_{t}-q(\theta_{t}) V_{t}\big)
\end{equation*}

 where $\varphi_{1,t}^{c}$ and $\varphi_{2,t}^{c}$  are  co-state variables of employment and vacancies, respectively. The necessary conditions are described as follows
 
  \begin{equation*}
 \begin{split}
 \frac{\partial \mathcal{H}^{c}(\cdot)}{\partial L_{t}} &:  e^{-\rho t} \big( Y_{L_{t}}-(W_{t}/P_{t})\big)  -  \lambda_{t}(\varphi^{c}_{1,t}-\varphi^{c}_{2,t})= - \dot{\varphi}^{c}_{1,t} \\
  \frac{\partial \mathcal{H}^{c}(\cdot)}{\partial V_{t}} &:  - e^{-\rho t}  \frac{\Xi_{t}}{P_{t}} + q(\theta_{t})\big(\varphi^{c}_{1,t}-\varphi^{c}_{2,t}\big) =   - \dot{\varphi}^{c}_{2,t} 
 \end{split}
 \end{equation*}

 As before, let $\pi_{L_{t}}= \varphi^{c}_{1,t}/\big(e^{-\rho t}  e^{\alpha  J^{*}_{t}} D(h_{t})^{a_{1}}\big)$ and $\pi_{V_{t}}= \varphi^{c}_{2,t}/\big(e^{-\rho t}  e^{\alpha J^{*}_{t}} D(h_{t})^{a_{1}}\big)$ denote the stationary marginal value of employment and vacancy posting for the firm. Dividing both sides of the first-order conditions with respect to  $e^{-\rho t}  e^{\alpha  J^{*}_{t}} D(h_{t})^{a_{1}}$  we get the result in \eqref{eq:value_vacancy_firm} and \eqref{eq:value_job_firm}.

\subsection{Section \ref{sec:barg_protocol}}\label{appendix:proof_sectionprotocol} This section presents the proofs of Proposition \ref{prop:non_union_wage}, Corollary \ref{coro:barg_power}, Proposition \ref{prop:union_wage}, and Proposition \ref{prop:equil_labor}. 

\subsubsection{Proof Proposition \ref{prop:non_union_wage}}\label{subappendix:proof_indwages} The proof of the individual bargaining solution builds from \citeasnoun{shaked1984involuntary}. Focusing on the stationary value functions, and denoting $w_{f}$ and $w_{e}$ as the wage proposal of firms and workers, respectively, the subgame perfect equilibrium satisfies (time arguments are ignored to save notation)

\begin{equation}\label{eq:subgame}
\begin{split}
\pi_{L}(w_{e}) &= (1-\Delta_{f} \lambda) e^{\rho \Delta_{f}} \pi_{L}(w_{f}) \\
(\phi_{L}(w_{f})-\phi_{U}) &= (1-\Delta \lambda) e^{\rho \Delta}(\phi_{L}(w_{e})-\phi_{U})
\end{split}
\end{equation}

 For convenience in notation, assume that the random draws from the exponential distributions of the waiting times  in the individual bargaining protocol are equal to their corresponding averages. This does not affect the final conclusions given that they rely on the law of large numbers.  

\textbf{Case $\boldsymbol{T(\theta)=1/q(\theta)}$:}  Assuming that $\Delta/q(\theta)$ is an odd number,\footnote{\citeasnoun{shaked1984involuntary} show that the same argument applies when this assumption is dropped. However, the argument is significantly simpler by assuming that $\Delta/q(\theta)$ is  odd.} at time $\Delta T(\theta)$ the firm can switch and start a  bargaining process with a new unemployed worker. Let $\mathcal{G}$ denote the subgame starting at $t=0$ in the rightmost branch of Figure \ref{fig:bargaining_protocol} and let $\mathcal{G}_{0}$ be the subgame starting at $t=\Delta T(\theta)$ when the firm is contemplating switching to a new bargaining partner. Correspondingly, let $\mathcal{M}$ and $\mathcal{M}_{0}$ denote the supremum for the firm in the respective subgame. The supremum of $\mathcal{G}_{0}$ must satisfy

\begin{equation*}
\mathcal{M}_{0} = \mathrm{max}\big\{\delta^{f} \big(S^{n} (1-\delta^{w}) + \delta^{w} \mathcal{M}_{0}\big), \mathcal{M}\big\}
\end{equation*}

where $\delta^{f} = (1-\Delta_{f} \lambda) e^{\rho \Delta_{f}}$ and $\delta^{w} = (1-\Delta \lambda) e^{\rho \Delta}$. Beginning with this condition in period $\Delta T(\theta)$, we can iterate backwards until period $t=0$. For instance, in period $\Delta T(\theta)-\Delta$, when contemplating an offer made by the firm, the worker solves

\begin{equation*}
\phi_{L}(w_{f}) = \phi_{U}(1-\delta^{w}) + \delta^{w} (S^{n} + \phi_{U}  -\mathcal{M}_{0}) 
\end{equation*}

Using the previous equation and the surplus equation for the individual bargaining problem we obtain

\begin{equation*}
\pi_{L}(w_{f}) = S^{n}(1-\delta^{w}) + \delta^{w} \mathcal{M}_{0}. 
\end{equation*}

In period $\Delta T(\theta)-2\Delta$, when contemplating an offer made by the worker, the firm solves

\begin{equation*}
\pi_{L}(w_{e}) = \delta^{f} \Big[(1-\delta^{w}) S^{n} + \delta^{w} \mathcal{M}_{0}\big] = \delta^{f} (1-\delta^{w}) S^{n} + \delta^{f} \delta^{w} \mathcal{M}_{0}.
\end{equation*}

By induction, for any $T(\theta)$:\footnote{An intuitive way of interpreting the result is to set $\mathcal{M} = \pi_{L}(w_{f})$ and $\mathcal{M}_{0} = \pi_{L}(w_{e})$, both of which can be deduced  using the surplus of the individual bargaining problem.}

\begin{equation*}
\mathcal{M}= S^{n}\big( (1-\delta^{w} + \delta^{f} \delta^{w} (1-\delta^{w}) + \cdots + (\delta^{w} \delta^{f})^{\frac{T(\theta)-1}{2}} (1-\delta^{w}) \big) + {\delta^{f}}^{\frac{T(\theta)-1}{2}} {\delta^{w}}^{\frac{T(\theta)+1}{2}} \mathcal{M}_{0}.
\end{equation*}

Denoting $\tilde{\delta}=\delta^{w} \delta^{f}$ and noting that $\sum_{i=0}^{\frac{T(\theta)-1}{2}} \tilde{\delta}^{i} = \sum_{i=0}^{\tilde{T}-1} \tilde{\delta}^{i} = \frac{1-\tilde{\delta}^{\tilde{T}}}{1-\tilde{\delta}}$:

\begin{equation*}
\mathcal{M}= \frac{S^{n} (1-\delta^{w})(1-\tilde{\delta}^{\tilde{T}})}{1-\tilde{\delta}} + {\delta^{f}}^{\frac{T(\theta)-1}{2}} {\delta^{w}}^{\frac{T(\theta)+1}{2}} \mathcal{M}_{0}.
\end{equation*}

Now we ask whether $\mathcal{M}_{0}=\mathcal{M}$ or $\mathcal{M}_{0}= \delta^{f} \big(S^{n} (1-\delta^{w}) + \delta^{w} \mathcal{M}_{0}\big)$. If the latter is the case, then

\begin{equation*}
\mathcal{M}_{0} = \frac{S^{n}\delta^{f} (1-\delta^{w})}{1-\tilde{\delta}},
\end{equation*}

so

\begin{equation*}
\begin{split}
\mathcal{M} &=  \frac{S^{n} (1-\delta^{w})(1-\tilde{\delta}^{\tilde{T}})}{1-\tilde{\delta}} + {\delta^{f}}^{\frac{T(\theta)-1}{2}} {\delta^{w}}^{\frac{T(\theta)+1}{2}} \Bigg[\frac{S^{n}(1-\delta^{w})}{1-\tilde{\delta}}\Bigg]\\
& =  \frac{(1-\delta^{w}) S^{n}}{1-\tilde{\delta}} > \mathcal{M}_{0},
\end{split}
\end{equation*}

 which is a contradiction because  $\delta^{f}<1$. Thus

\begin{equation*}
\mathcal{M}  = \frac{S^{n} (1-\delta^{w})(1-\tilde{\delta}^{\tilde{T}})}{1-\tilde{\delta}} + {\delta^{f}}^{\frac{T(\theta)-1}{2}} {\delta^{w}}^{\frac{T(\theta)+1}{2}} \mathcal{M}
\end{equation*}

Solving the previous equation it is obtained that 

\begin{equation*}
\mathcal{M} = \frac{S^{n}(1-\delta^{w})(1-\tilde{\delta}^{\tilde{T}})}{(1-\tilde{\delta})(1-\tilde{\delta}^{\tilde{T}}/\delta^{f})}
\end{equation*}

Reverting back to the complete notation, the supremum of $\mathcal{G}$ can be written as

\begin{equation*}
\begin{split}
\mathcal{M} & = S^{n} \frac{\big(1-(1-\lambda \Delta) e^{-\rho \Delta}\big)}{\big[ 1-(1-\lambda \Delta) (1-\lambda_{f} \Delta) e^{-(\rho+\rho_{f}) \Delta}\big] }  \times \\
& \frac{1-(1-\lambda \Delta)^{\frac{T(\theta)+1}{2}} (1-\lambda_{f} \Delta)^{\frac{T(\theta)+1}{2}} e^{-(\rho+\rho_{f})\big(\frac{T(\theta)+1}{2}\big) \Delta}}{\big[1-(1-\lambda \Delta)^{\frac{T(\theta)+1}{2}}e^{-\rho\big(\frac{T(\theta)+1}{2}\big) \Delta}  (1-\lambda_{f} \Delta)^{\frac{T(\theta)-1}{2}} e^{-\rho_{f}\big(\frac{T(\theta)-1}{2}\big) \Delta}]} 
\end{split}
\end{equation*}

Where $\lambda_{f} = \lambda \gamma^{f}$ and $\rho_{f} =\rho \gamma^{f}$. Applying L'Hôpital's rule to the last equation, it follows that when $\Delta \rightarrow 0$:

\begin{equation*}
\mathcal{M} = \pi_{L}(w) =  S^{n} \frac{T(\theta)+1}{T(\theta)(1+\gamma^{f}) + (1-\gamma^{f})}
\end{equation*}

This result follows from the fact that when $\Delta \rightarrow 0$, $w_{e} = w_{f}=w$. The previous equation  also shows that when  $\Delta \rightarrow 0$, $\mathcal{M}$ can be associated with the first order conditions of the generalized Nash solution with an intrinsic labor power

\begin{equation*}
\Gamma^{nb}  = 1- \frac{T(\theta)+1}{T(\theta)(1+\gamma^{f}) + (1-\gamma^{f})} = \frac{\gamma^{f} (1-q(\theta))}{1+\gamma^{f} - q(\theta)(1-\gamma^{f})},
\end{equation*}

which corresponds to the labor power under individual bargaining in Proposition \ref{prop:non_union_wage}. Now, given that $\phi_{L}-\phi_{U}=S^{n}/\Gamma^{nb} = \frac{w - (\rho + \alpha  \dot{m}^{*} - g) \phi_{U} }{\lambda +\rho + \alpha  \dot{m}^{*} - g}$,  $(\rho + \alpha  \dot{m}^{*} - g) \phi_{U}  = b + f(\theta) S^{n} \Gamma^{nb} $, and $S^{n} = \frac{y_{L}-(\rho + \alpha  \dot{m}^{*} - g)\phi_{U} }{\lambda +\rho + \alpha \dot{m}^{*} - g}$, it follows that 

\begin{equation}
w^{nb}= b + \Psi^{nb}(y_{L} - b), \;\; \text{with   } \Psi^{nb} = \frac{\Gamma^{nb}\big(\lambda +\rho + c \dot{m}^{*} - g + f(\theta)\big)}{\lambda +\rho + c \dot{m}^{*} - g + \Gamma^{nb} f(\theta)},
\end{equation}

which completes the proof of part (ii)  in Proposition \ref{prop:non_union_wage}. 

\textbf{Case $\boldsymbol{T(\theta)=T^{w}/f(\theta)}$:} Here we assume that $\Delta T^{w}/f(\theta)$ is an even number, such that at time $\Delta T(\theta)$ the worker can switch and start a bargaining process with a new employer. 

The  subgame $\mathcal{G}$ at $t=0$, when $T(\theta)=T^{w}/f(\theta)$,  is identical to the rightmost branch of Figure \ref{fig:bargaining_protocol}. The difference in this case is that at $t=\Delta T(\theta)$, the decision of staying with the same firm or switching  to a new employer is different for the worker than it is for the firm. If the worker decides to stay with the same firm, it will make the wage offer the following period. However, if the worker decides to switch, it will have to hear the offer of the new firm in the upcoming period and will only be able to respond after two periods.

 Denoting $\mathcal{N}_{0}$ as the supremum for the worker in game $\mathcal{G}_{0}$, which corresponds to the subgame starting at $t=\Delta T(\theta)$, it will follow that

\begin{equation*}
\mathcal{N}_{0} = \mathrm{max}\big\{\delta^{w}\big( S^{n}(1-\delta^{f}) + \delta^{f} \mathcal{N}_{0}\big), \mathcal{N}\big\},
\end{equation*}

where $\mathcal{N}$ is the supremum for the worker in subgame $\mathcal{G}$. Following the same argument as in the  previous case, we get by backward induction that in period $t=0$:

\begin{equation*}
\mathcal{N} = S^{n} \frac{\delta^{w} (1-\delta^{f}) (1-\tilde{\delta}^{T(\theta)/2})}{1-\tilde{\delta}} + \tilde{\delta}^{T(\theta)/2} \mathcal{N}_{0}.
\end{equation*}

Suppose that $\mathcal{N}_{0}  = \delta^{w}\big( S^{n}(1-\delta^{f}) + \delta^{f} \mathcal{N}_{0}\big)$, such that

\begin{equation*}
\begin{split}
\mathcal{N} &= S^{n} \frac{\delta^{w} (1-\delta^{f}) (1-\tilde{\delta}^{T(\theta)/2})}{1-\tilde{\delta}} + \tilde{\delta}^{T(\theta)/2} \Bigg[ \frac{S^{n} \delta^{w} (1-\delta^{f})}{1-\tilde{\delta}}  \Bigg] \\
& = \frac{S^{n} \delta^{w} (1-\delta^{f})}{1-\tilde{\delta}}  = \mathcal{N}_{0}.
\end{split}
\end{equation*}

That is, the case where $T(\theta)=T^{w}/f(\theta)$ is formally identical to the original \citeasnoun{rubinstein1982perfect} alternating offers model without switch points. From this basis, we can interpret the intrinsic  bargaining power of labor when $T(\theta)=T^{w}/f(\theta)$ as

\begin{equation*}
\Gamma^{na} = \underset{\theta \rightarrow \infty}{\mathrm{lim}} \; \Gamma^{nb} = \frac{\gamma^{f}}{1+\gamma^{f}}
\end{equation*}

since $q(\theta)=0$ when $\theta \rightarrow \infty$. This result follows from the assumption that all firms are identical and that firms always make the first offer.

Lastly, to prove the final part of Proposition \ref{prop:non_union_wage}, we use the assumption that waiting times follow an exponential distribution. Given  the law of large numbers,

\begin{equation*}
\begin{split}
w^{n} &= \frac{\theta}{\theta + T^{w}} w^{na} +  \frac{T^{w}}{\theta + T^{w}} w^{nb} \\
  & =  \frac{\theta}{\theta + T^{w}} \big[ b + \Psi^{na}\big(y_{L}-b\big)\big] +  \frac{T^{w}}{\theta + T^{w}} \big[ b + \Psi^{nb}\big(y_{L}-b\big)\big] \\
  & = b+ \frac{\theta \Psi^{na} + T^{w} \Psi^{nb}}{\theta + T^{w}} \big(y_{L}-b\big).
\end{split}
\end{equation*}

\subsubsection{Proof of Corollary \ref{coro:barg_power}}\label{subappendix:proof_corollary_bargpower} The proof of (i)-(iii) is  straightforward and requires no special attention. 

In part (iv), if $m^{*}_{t}=m_{t}$, then

\begin{equation*}
\frac{\partial \Psi^{ni}}{\partial \dot{m}} = \frac{\alpha +\partial \lambda/\partial \dot{m}}{\rho +\alpha \dot{m}^{*}-g +\lambda +\Gamma^{ni} f(\theta)} \times \Big[ \Gamma^{ni}-\Psi^{ni}\Big]
\end{equation*}

for $i \in \{a,b\}$. The second term in the  right-hand side of the previous equation is always negative, so the sign is determined by the first term. Particular, given Lemma \ref{lemma:tech_unemployment}, if $\alpha + \partial \lambda/\partial \dot{m} <0$, then $\frac{\partial \Psi^{ni}}{\partial \dot{m}}>0$. Lastly, since $\Psi^{n}$ is a linear combination of $\Psi^{na}$ and $\Psi^{nb}$, we get the result in Corollary  \ref{coro:barg_power} (iv).

In part (v), 

\begin{equation*}
\frac{\partial \Psi^{ni}}{\partial \dot{M}} = \frac{\partial \lambda/\partial \dot{M}- \alpha}{\rho +\alpha \dot{m}^{*}-g +\lambda +\Gamma^{ni} f(\theta)} \times \Big[ \Gamma^{ni}-\Psi^{ni}\Big]
\end{equation*}

for $i \in \{a,b\}$. If $\sigma >1$, $\partial \lambda/\partial \dot{M} <0$ from Lemma \ref{lemma:tech_unemployment} and  $\frac{\partial \Psi^{ni}}{\partial \dot{M}}>0$. If, $\sigma \in (0,1)$, since $(1-\sigma)e^{\alpha(\sigma-1)\dot{M}} <1$ for all $\dot{M}>0$, then  $\frac{\partial \Psi^{ni}}{\partial \dot{M}}>0$.  Again, since $\Psi^{n}$ is a linear combination of $\Psi^{na}$ and $\Psi^{nb}$, we get the result in Corollary  \ref{coro:barg_power} (v).

\subsubsection{Proof of Proposition \ref{prop:union_wage}}\label{subappendix:proof_unionwage} The Hamilton-Jacobi-Bellman (HJB) equation for the firm satisfies\footnote{To save notation I will assume $\dot{m}^{*}=0$. That is, $\rho-\lambda$ is actually $\rho-g+\alpha \dot{m}^{*}_{t}$. }

\begin{equation*}
\begin{split}
(\rho-g )\pi^{L}_{t} &= \frac{ \partial \pi^{L}_{t}}{\partial t} + \underset{V_{t}}{\mathrm{max}} \; \Big[ y_{t} -\xi V_{t} -w_{t} L_{t}  +\hat{\varphi}^{c}_{1t}\big(q(\theta_{t})V_{t} - \lambda_{t} L_{t}\big)\Big]\\
\Rightarrow (\rho-g)\pi^{L}_{t} &=  \frac{ \partial \pi^{L}_{t}}{\partial t} + y_{t} -w_{t}L_{t} - \lambda_{t} \frac{\xi L_{t}}{q(\theta_{t})}
\end{split}
\end{equation*}

Here, for simplicity, I assumed from the start that the marginal value of an unfilled vacancy is zero and used in the second line the result that the marginal value of an employed worker is equal to $\xi/q(\theta)$. 

In a stationary-state, $(\rho-g)\pi^{l} = y - w L - \lambda \pi_{L} L_{t} $. Additionally, since $(\rho-g)\pi_{L}= y_{L} - w  - \lambda \pi_{L}$, we can arrange terms and show that

\begin{equation*}
\pi^{L}=L\Big(\frac{\hat{y}-y_{L}}{\rho-g} + \frac{y_{L}-w}{\rho-g+\lambda}\Big).
\end{equation*}

Similarly, given that in the steady-state $\phi_{L}-\phi_{U}=\frac{w}{\rho-g+\lambda} -\frac{(\rho-g)\phi_{U}}{\rho-g+\lambda}$, the first-order conditions of the Nash bargaining problem satisfy

\begin{equation*}
\begin{split}
\Gamma^{u} (\pi^{L}/L) &= (1-\Gamma^{u}))(\phi_{L}-\phi_{U})\\
\Gamma^{u}\big( \frac{y_{L}-w}{\rho-g +\lambda} + \frac{(\rho-g+\lambda)(\hat{y}-y_{L})}{\rho-g}\big) &= (1-\Gamma^{u})\big( \frac{w}{\rho-g +\lambda} - \frac{(\rho-g) \phi_{U}}{\rho-g +\lambda}\big).
\end{split}
\end{equation*}

Since $(\rho-g)\phi_{U}=b + f(\theta)(\phi_{L}-\phi_{U})$ from \eqref{eq:un_value} and $(\phi_{L}-\phi_{U})= \Gamma^{u}(\pi^{L}/L)/(1-\Gamma^{u})$ from the Nash bargaining rule, then $w^{u}$ satisfies \eqref{eq:union_wages}.

 \subsubsection{Wage premium}\label{subappendix:wage_premium}
Combining \eqref{eq:ind_wage} and \eqref{eq:union_wages}, we can  express the \emph{wage premium} from collective bargaining as follows:

\begin{equation}\label{eq:wage_gain_collective}
w^{u}_{t} - w^{n}_{t} = \big(\Psi^{u}_{t}  - \Psi^{n}_{t}\big)\big( y_{L_{t}}  - b_{t}\big) +  \frac{\Psi^{u}_{t}(\rho +  \alpha\;  \dot{m}^{*}_{t}  -g +\lambda_{t})}{\rho+\alpha\;  \dot{m}^{*}_{t} -g}\big(\hat{y}_{t}- y_{L_{t}}\big) > 0
\end{equation}

The first term in the right-hand side of equation \eqref{eq:wage_gain_collective} is expected to be positive given that the  power of labor will generally be higher under collective bargaining. The second term shows that,  even if $\Psi^{n}_{t}=\Psi^{u}_{t}$, workers can still get higher  real wages as a result of  an increase in the aggregate surplus.

  \subsubsection*{Proof of Proposition \ref{prop:equil_labor} }\label{subappendix:equil_labor} Here I only  present a heuristic proof to Proposition \ref{prop:equil_labor}. Note that  using \eqref{eq:online_marginal},   the rate of return can be defined as map
  
  \begin{equation}\label{eq:contraction_map}
  \mu^{j+1} = \Phi(\mu^{j})= \frac{\hat{y}(\hat{k}(\mu^{j}))- \hat{k}(\mu^{j}) \hat{y}_{\hat{k}}(\mu^{j}) }{w^{*}(\theta(\hat{k}(\mu^{j}))} - 1 . 
  \end{equation}

Use the following recursive argument:

\begin{itemize}
\item Guess some initial $\mu^{0} > 0$. 

\item For $ j \in \{0, ...\}:$
\item Define $\tilde{f}(\hat{k}(\mu))=\hat{y}_{\hat{k}} - \delta (1+\mu) D(h)^{a_{0}}/A^{k}$. From the intermediate value theorem, for any given   $\mu^{j}$, there exists a $\hat{k}(\mu^{j})$ such that  $\tilde{f}(\hat{k}(\mu^{j}))=0$. This is true because $\hat{y}_{\hat{k}} $ is continuously differentiable; see \eqref{eq:online_marginal}. 
\item Suppose that $P(\mathcal{U}=1|\cdot) \in (0,1)$ is given and that there is a finite and non-negative solution to $\{w^{u}, \theta^{u}\}$ and  $\{w^{n}, \theta^{n}\}$. Let $w^{d}$ denote the labor demand equation from \eqref{eq:value_job_firm} and $w^{s}$ be the labor supply equation from \eqref{eq:wage_agg}.  Define  $\tilde{g}(\theta(\hat{k}(\mu)) )=w^{s}-w^{d}$ and note that by the properties of the matching function $q(\theta)$ we can employ the intermediate value theorem to show that there is a $\theta(\hat{k}(\mu^{j})))$ such that $\tilde{g}(\cdot)=0$. 
\item  Set $w^{*}=w^{d}(\theta(\hat{k}(\mu^{j}))))$.

 \item Define $\mu^{j+1}$ using \eqref{eq:contraction_map}  until convergence. 

\end{itemize}

For suitable values of $\mu$,  $\Phi(\cdot)$ is a contraction, so there is a unique $\mu^{*}$ associated with the equilibrium in the labor market.  

\subsection{Section \ref{sec:equilibrium_dynamics}} 

This section presents the  proof of Lemma \ref{lemma:lemma_2A_AR} in Appendix \ref{subappendix:Automation_regions} and the   the proofs of Propositions \ref{prop:gen_equilibrium} and \ref{prop:comp_stat} in Section \ref{sec:equilibrium_dynamics}.

\subsubsection{Proof of Proposition \ref{prop:gen_equilibrium}}\label{subappendix:gen_equil} Suppose that Proposition \ref{prop:equil_labor} holds, such that we can build a recursive proof where $\mu^{*}$ and $\hat{k}^{*}$ can be taken as given. From this we can start with the value functions derived in subsection \ref{subsection:value_functions} and for ease of notation set $\tilde{\rho}=\rho +\alpha \dot{m}^{*} -g$, such that

\begin{equation}\label{eq:value_functions_appendix}
\begin{split}
\tilde{\rho} \phi_{L} &=  w + \lambda (\phi_{U}-\phi_{L}) + \dot{\phi}_{L}\\
\tilde{\rho} \phi_{U} &= b + f(\theta) (\phi_{L}-\phi_{U}) + \dot{\phi}_{U}\\
\tilde{\rho} \pi_{L} &= y_{L} -  w + -\lambda \pi_{L} + \dot{\pi}_{L}\\
\tilde{\rho} \pi^{L} &= \dot{\pi}^{L} + y - wL - \lambda \pi_{L} L.
\end{split}
\end{equation}

The last equation in \eqref{eq:value_functions_appendix} follows the argument described in the Proof of Proposition \ref{prop:non_union_wage}. 

To save notation, denote $P^{U} \equiv P(\mathcal{U}=1|\cdot)$ and define  the total surplus in the bargaining of wages as  

\begin{equation*}
\bar{S} = P^{U} \frac{S^{u}}{L} + (1-P^{U}) S^{n}
\end{equation*}

Recalling that $S^{n} = \pi_{L}+\phi_{L}-\phi_{U}$ and $S^{u} = L(\phi_{L}-\phi_{U}) + \pi^{L}$, we have that

\begin{equation*}
\dot{\bar{S}} = P^{U} \big[\dot{\phi}_{L}-\dot{\phi}_{U} +\dot{\pi}^{L}/L - \frac{\dot{L}}{L} \pi^{L}/L\big] +(1-P^{U}) \big[\dot{\phi}_{L}-\dot{\phi}_{U} +\dot{\pi}_{L}\big]
\end{equation*}

Using the first three lines in \eqref{eq:value_functions_appendix}, 

\begin{equation*}
\dot{\phi}_{L}-\dot{\phi}_{U} +\dot{\pi}_{L} = (\tilde{\rho}+\lambda) S^{n} - y_{L} + \tilde{\rho} \phi_{U} - \dot{\phi}_{U} = \dot{S}^{n}
\end{equation*}

Correspondingly, using the last two lines in \eqref{eq:value_functions_appendix},

\begin{equation*}
\dot{\big(S^{u}/L\big)} = (\tilde{\rho}+\lambda) \frac{S^{u}}{L} - \hat{y} + \tilde{\rho}\phi_{U} - \dot{\phi}_{U} - \frac{\lambda}{\tilde{\rho}} (\tilde{y} +\tilde{\pi}^{L}) - \frac{\dot{L}}{L} \frac{\pi^{L}}{L}.
\end{equation*}

Where $\tilde{y} \equiv \hat{y} - y_{L} = \hat{k}\hat{y}_{\hat{k}}$ and $\tilde{\pi}^{L} = \dot{\pi}^{L}/L - \dot{\pi}_{L}$. The last equation follows by noting that $\tilde{\rho} \pi^{L}/L = \tilde{y} + \tilde{\pi}^{L} + \tilde{\rho} \pi_{L}$. 

Using the sharing rule from the individual bargaining protocol, we have that $\pi_{L}=(1-\Gamma^{n})S^{n}= \xi/q(\theta)$.\footnote{Recall that $\Gamma^{n}$ is the intrinsic bargaining power of labor under individual bargaining and is given by $\Gamma^{n} = \big(T^{w} \Gamma^{nb} + \theta \Gamma^{na}\big)/(T^{w}+\theta)$.} Thus, 

\begin{equation*}
\dot{S}^{n} = \frac{\xi \dot{\theta} }{(1-\Gamma^{n}) q(\theta)} \Bigg[ \frac{\partial \Gamma^{n}/\partial \theta }{(1-\Gamma^{n})} - \frac{q'(\theta)}{q(\theta)}\Bigg].
\end{equation*}

Where $\partial \Gamma^{n}/\partial \theta >0$ by Corollary \ref{coro:barg_power}. Using the previous equation and the second line in \eqref{eq:value_functions_appendix},  it follows that 

\begin{equation}\label{eq:dyn_theta_individual}
\frac{\xi \dot{\theta} }{(1-\Gamma^{n}) q(\theta)} \Bigg[ \frac{\partial \Gamma^{n}/\partial \theta }{(1-\Gamma^{n})} - \frac{q'(\theta)}{q(\theta)}\Bigg] = \frac{\tilde{\rho}+\lambda + \Gamma^{n} f(\theta)}{1-\Gamma^{n}} \frac{\xi}{q(\theta)} - y_{L} + b.
\end{equation}

Repeating a similar excercise for the collective bargaining protocol, we have that $\pi^{L}/L=(1-\Gamma^{u})S^{u}/L=\pi_{L}+ \tilde{\rho}^{-1}(\tilde{y}+\tilde{\pi}^{L}) = \frac{\xi}{q(\theta)} + \tilde{\rho}^{-1}(\tilde{y}+\tilde{\pi}^{L})$. Thus, given that $\hat{k}=\hat{k}^{*}$ by assumption,

\begin{equation*}
\frac{\dot{S}^{u}}{L} - \frac{\dot{L}}{L} \frac{S^{u}}{L} = \frac{-\xi \dot{\theta} q'(\theta)}{(1-\Gamma^{u}) q(\theta)^{2}} 
\end{equation*}

Using the previous expression of $\dot{\big(S^{u}/L\big)}$ and setting $\dot{L}/L = \frac{\partial L}{\partial \theta} \dot{\theta}/L$, we obtain

\begin{equation}\label{eq:dyn_theta_collective}
\begin{split}
\frac{\xi \dot{\theta}}{(1-\Gamma^{u})q(\theta)} & \Big[ \frac{(1-\Gamma^{u})\partial L/\partial \theta}{L}\Big( 1 + \frac{q(\theta)(\tilde{y}+\tilde{\pi}^{L})}{\tilde{\rho} \xi}\Big)  -  \frac{ q'(\theta)}{q(\theta)}\Big]  = \frac{\tilde{\rho} + \lambda + \Gamma^{u} f(\theta)}{1-\Gamma^{u}}\times \\
&  \Big( \frac{\xi}{q(\theta)} + \frac{\tilde{y}+\tilde{\pi}^{L}}{\tilde{\rho}}\Big) - \hat{y} +b 
\end{split}
\end{equation}

Calculating $\dot{\bar{S}}= P^{U} \dot{\big(S^{u}/L\big)} + (1-P^{U}) \dot{S}^{n}$  and assuming that near the equilibrium $\tilde{\pi}^{L} =0$, we arrive at

\begin{equation*}
\begin{split}
 &\Bigg\{\frac{q'(\theta) \xi \big( 1- \Gamma^{u} + P^{U}(\Gamma^{u}-\Gamma^{n})\big)}{q(\theta)^{2} (1-\Gamma^{n})(1-\Gamma^{u})} - \frac{P^{U} \partial L/\partial \theta}{L} \Big( \frac{\xi}{q(\theta)} + \frac{\tilde{y}}{\tilde{\rho}}\Big) - \frac{(1-P^{U})\xi \partial \Gamma^{n}/\partial \theta  }{q(\theta)(1-\Gamma^{n})^2}\Bigg\} \dot{\theta} \\
 & - y_{L} + b + \frac{\xi}{q(\theta)} \frac{(\tilde{\rho}+\lambda + \Gamma^{n} f(\theta))(1-\Gamma^{u}) + P^{U} (\tilde{\rho}+\lambda + f(\theta))(\Gamma^{u}-\Gamma^{n})}{ (1-\Gamma^{n})(1-\Gamma^{u})} \\
 -& P^{U}  \frac{\tilde{y} \big(1-\tilde{\rho} - \lambda - \Gamma^{u}(1+f(\theta))\big))}{1-\Gamma^{u}} = 0  
 \end{split}
\end{equation*}

Multiplying the previous equation by $q(\theta)^2  (1-\Gamma^{n})(1-\Gamma^{u})$ and differentiating with respect to $\dot{\theta}$ and $\theta$ in the neighborhood of $(\dot{\theta}=0, \theta=\theta^{*})$ yields

\begin{equation}\label{eq:dyn_theta_synthesis}
a(\theta^{*}) \mathrm{d}\dot{\theta} + b(\theta^{*})\mathrm{d}\theta =0
\end{equation}

where

\begin{equation*}
\begin{split}
& a(\theta^{*}) = q'(\theta^{*})\xi \big( 1- \Gamma^{u} + P^{U}(\Gamma^{u}-\Gamma^{n})\big)- \frac{ (1-\Gamma^{n})(1-\Gamma^{u}) P^{U} \partial L/\partial \theta}{L} \times \\
& \Big( \xi q(\theta^{*}) + \frac{q(\theta^{*}) \tilde{y}}{\tilde{\rho}}\Big) - \frac{(1-P^{U})\xi q(\theta^{*})(1-\Gamma^{u}) \partial \Gamma^{n}/\partial \theta }{(1-\Gamma^{n})} < 0,\\
& b(\theta^{*}) = -2(1-\Gamma^{u})(1-\Gamma^{n})q(\theta^{*})(y_{L}-b)q'(\theta^{*}) + \xi q(\theta^{*})f'(\theta^{*}) \big[ \Gamma^{n}(1-\Gamma^{u})+\\
&  P^{U}(\Gamma^{u}-\Gamma^{n})\big] + \xi q'(\theta^{*})\Big[(\tilde{\rho}+\lambda + \Gamma^{n} f(\theta^{*}) ) (1-\Gamma^{u})+ P^{U}\big((\tilde{\rho}+\lambda+f(\theta^{*}))(\Gamma^{u}-\Gamma^{n})\big)\Big] \\
& -2 q(\theta^{*})P^{U}\tilde{y}\big[1-\tilde{\rho} -\lambda -\Gamma^{u}(1+f(\theta^{*}))\big] (1-\Gamma^{n})q'(\theta^{*}) + q(\theta^{*})^{2} P^{U} \tilde{y} \Gamma^{u} (1-\Gamma^{b}) f'(\theta^{*}) \\
& +  (1-\Gamma^{u})q(\theta^{*})^{2} (y_{L}-b) \frac{\partial \Gamma^{n}}{\partial \theta} +\xi q(\theta^{*}) \big[f(\theta^{*}) (1-P^{U}) -P^{U} (\tilde{\rho}+\lambda)  \Gamma^{n}\big]\frac{\partial \Gamma^{n}}{\partial \theta}  \\
& + q(\theta^{*})^{2} P^{U} \tilde{y} \big[ 1-\tilde{\rho}+\lambda - \Gamma^{u}(1+f(\theta^{*}))\big] \frac{\partial \Gamma^{n}}{\partial \theta}  > 0. 
\end{split}
\end{equation*}

The sign of $a(\theta^{*})$ is straightforward. Respectively, the sign of $b(\theta^{*}) $ is guaranteed to be positive by  the  labor market equilibrium condition (i.e., by combining \eqref{eq:value_job_firm} and  \eqref{eq:wage_agg}). 

Setting $\mathrm{d}\dot{\theta}=\dot{\theta}$ and $\mathrm{d}\theta = \theta-\theta^{*}$, we have

\begin{equation*}
\theta = \mathcal{B} e^{-c(\theta^{*}) t} + \theta^{*},
\end{equation*} 

where $\mathcal{B}$ is a constant and $c(\theta^{*})=b(\theta^{*})/a(\theta^{*})<0$. Thus, the unique stable path of $\theta$ corresponds to $\mathcal{B}=0$, meaning that $\theta$ will always jump immediately to the stationary value.

Thus, given \eqref{eq:dyn_theta_synthesis}, the system can be characterized near the equilibrium  by the following system of differential equations (for simplicity I am setting $q_{t}=q$). 

\begin{equation}\label{eq:syst_differential_eq}
\begin{split}
&\dot{\theta}_{t} + c(\theta^{*}) (\theta_{t}-\theta^{*})=0\\ 
& \dot{U}_{t} = \lambda_{t} (1-U_{t}) - f(\theta_{t})U_{t}\\
& \dot{\hat{k}}_{t} = q \big[\hat{y}_{t} - \hat{c}_{t} - w_{t} - \xi \frac{V_{t}}{L_{t}} - \frac{\tau}{L_{t}}\big] -(\delta_{t} + g_{t}) \hat{k}_{t} \\ 
& \dot{\hat{c}}_{t}= \frac{\hat{c}_{t}}{\epsilon}\big[ \hat{y}_{\hat{k}_{t}} q - \delta (1+\mu^{*}_{t})\big]
\end{split}
\end{equation}

Analyzing the equations in \eqref{eq:syst_differential_eq} as a recursive block where the first two equations define the equilibrium in the labor market and the remaining two equations derive the process of arbitrage---which takes the equilibrium in the labor market as given---then there is obviously a unique and locally stable solution. This is true because the last two equations are completely analogous to the traditional neoclassical growth model. 

However, the existence of  an equilibrium  BGP with positive cannot be taken as given. To see this we need to solve the equilibrium in the third equation of \eqref{eq:syst_differential_eq}, from which it follows that

\begin{equation*}
(\delta+ g) \frac{\hat{k}}{q} = \hat{y}- \hat{c} - w - \xi \frac{V}{L} - \frac{\tau}{L}
\end{equation*}

Be definition 

\begin{equation*}
\mu = \frac{\text{Profits}}{\text{Costs of production}}= \frac{\hat{y}-\delta \hat{k}/q - w}{\delta \hat{k}/q + w}.
\end{equation*}

Given that the aggregate production function has constant returns to scale,

\begin{equation*}
g \frac{\hat{k}}{q}  = \frac{\mu \hat{y}}{1+\mu} - \hat{c} - \hat{\xi} V - \hat{\tau}.
\end{equation*}

Where $\hat{\xi}=\xi/L$ and $\hat{\tau}=\tau/L$. Dividing both sides by $\hat{k}/q$, we have that

\begin{equation*}
g = r - \frac{\hat{c}q}{\hat{k}} - \chi,
\end{equation*}

since

\begin{equation*}
r =\frac{\text{Profits}}{\text{Value of the capital stock}}= \frac{\mu Y}{P^{k} K} = \frac{\mu \hat{y} q}{(1+\mu) \hat{k}}
\end{equation*}

From this it follows that, if $\hat{c} \geq 0$,  there exist a number $s \in (0,1]$ such that

\begin{equation}\label{eq:harrod_cond_appendix}
g = s(r-\chi),
\end{equation}

which is equation \eqref{eq:harrod_growth} in the main text. Now, an additional problem in the system is that a steady-state growth requires a surplus sufficiently large to sustain the expansion of capital and the payment of vacancy expenses and taxes. The minimum surplus capable of guaranteeing this condition can be found by setting $s=1$ (or $\hat{c}=0$). Using \eqref{eq:harrod_cond_appendix}, it follows that

\begin{equation}\label{eq:min_surplus2}
\mu^{\text{min}} = \Big[1 + \frac{q\hat{y}}{\hat{k}(g + \chi)}\Big]^{-1}.
\end{equation}

The key implication of \eqref{eq:min_surplus2} is that  higher growth rates,   higher vacancy expenses, or higher taxes raise the minimum rate of return of capital, meaning that---unless the system can simultaneously increase the equilibrium surplus---the economy can become unsustainable if $g$ or $\chi$ are sufficiently high. More precisely, we require

\begin{equation*}
\mu > \frac{g}{\delta} > \mu^{\text{min}}
\end{equation*}

to ensure the transversality and sustainability condition.

\subsubsection{Proof of Proposition \ref{prop:comp_stat}}\label{subappendix:comp_stat}

\emph{(Automation)} Suppose that the economy is initially in a BGP with  $\mu > g/\delta >\mu^{\text{min}}$, $|\partial \lambda/\partial \dot{m}|>\alpha$ and $m > \mathrm{max}\{\bar{m}, \tilde{m}\}$. 

\textbf{(Stage 1)} At stage 1, the negative shock on $m$ implies $\dot{m}<0$. Starting with the labor market, we have that (recall that $\tilde{\rho}=\rho + \alpha \dot{m} - g$):

\begin{equation*}
\frac{\partial \Psi^{na}}{\partial \dot{m}}= \frac{\alpha + \partial \lambda/\partial \dot{m}}{\tilde{\rho}+\lambda + \Gamma^{na}f(\theta)} \big[ \Gamma^{na}-\Psi^{na}\big] >0 
\end{equation*}

and

\begin{equation*}
\frac{\partial \Psi^{nb}}{\partial \dot{m}}= \frac{\alpha + \partial \lambda/\partial \dot{m}}{\tilde{\rho}+\lambda + \Gamma^{nb}f(\theta)} \big[ \Gamma^{nb}-\Psi^{nb}\big] >0 
\end{equation*}

implies that $\partial \Psi^{n}/\partial \dot{m} > 0$. Similarly, 

\begin{equation*}
\frac{\partial \Psi^{u}}{\partial \dot{m}}= \frac{\alpha + \partial \lambda/\partial \dot{m}}{\tilde{\rho}+\lambda + \Gamma^{u}f(\theta)} \big[ \Gamma^{u}-\Psi^{u}\big] >0. 
\end{equation*}

Thus, the labor supply equation moves down with $\dot{m}<0$. Using \eqref{eq:value_job_firm}, we have that the labor demand equation changes according to

\begin{equation*}
\frac{\partial w^{d}}{\partial \dot{m}} = - \big(\alpha + \frac{\partial \lambda}{\partial \dot{m}}\big) \frac{\xi}{q(\theta)}>0
\end{equation*}

As consequence, $\dot{m}<0$ implies a lower labor demand schedule. The result is a reduction in the equilibrium stationary real wage, though it is generally not possible to determine the change in $\theta$.  

Using the second line in \eqref{eq:syst_differential_eq}, we have that the steady-state unemployment $U^{*}_{t}=\lambda_{t}/(\lambda_{t}+f(\theta))$ changes in the following direction

\begin{equation*}
\frac{\partial U^{*}}{\partial \dot{m}} = \frac{\partial \lambda/\partial \dot{m}}{\lambda+f(\theta)} \frac{f(\theta)}{\lambda+f(\theta)}<0.
\end{equation*} 

Thus, a lower $\dot{m}$ increases $U^{*}$. A diagrammatic representation of the resulting changes in the labor market are represented in Figure \ref{fig:labor_market_transdyn_appendix}.The decrease in $\dot{m}$ moves $\dot{U}$ to the right and, though it is generally not possible to anticipate how $\theta$ will change (hence the area in teal), the new equilibrium will likely  result in a higher rate of unemployment and vacancy rates. 

Moving now to the conditions of capital arbitrage, we have that $\dot{\hat{c}}=0$  when (see equation \eqref{eq:euler})

\begin{equation*}
\hat{y}_{\hat{k}} = \frac{\delta}{q} (1+\mu)
\end{equation*}

Thus, since $\mu^{*}$ moves up when $w \downarrow$, a lower $\dot{m}$ will lead to a lower equilibrium $\hat{k}^{*}$. Respectively, using the third line in \eqref{eq:syst_differential_eq}, it follows that when $\dot{\hat{k}}=0$, 

\begin{equation*}
\hat{c} = g \frac{\hat{k}}{q} + \frac{\mu \hat{y}}{1+\mu} - \hat{\xi} V - \hat{\tau}. 
\end{equation*} 

The effects on  $\hat{c}$ are ambiguous because a higher $\mu$ raises $\hat{c}$, but a lower $L$ increases $\hat{\tau}$ and $\hat{\xi}$. In all generality, it is most likely that the curve $\dot{\hat{k}}=0$ will remain more or less constant. 

The final effects from stage 1 (as illustrated in Figure \ref{fig:capital_market_trans_automation})  are consequently an initial increase in consumption, followed a decline in $\hat{k}$ and $\hat{c}$.

            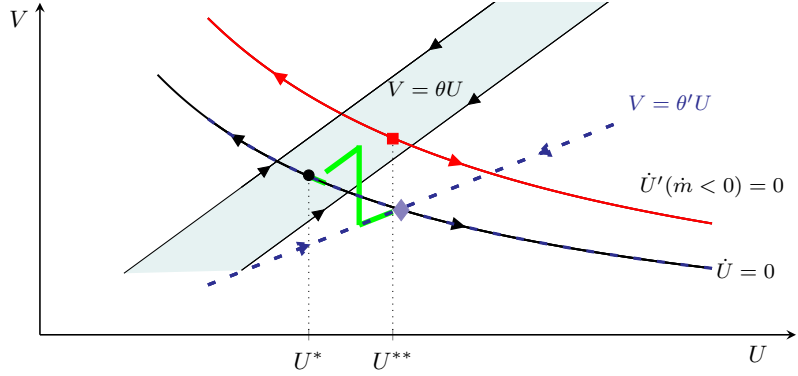
\begin{figure}
\begin{center}
\begin{tikzpicture}
     \begin{axis}[
        axis x line=middle,
                    axis y line=middle,
         name=plot1,
          title={\footnotesize }, 
              title style={at={(axis description cs:0.3,0.98)}, anchor=south} , 
 width=0.8*\textwidth, height=6 cm,
ymin=0.1, ymax = 1.099,
xmin=0, xmax=4.5,
xshift=-0.25cm,
 xlabel = {\scriptsize  $U$},
    ylabel = {\scriptsize $V$},
 ytick = \empty, 
  xtick = {1.6,2.1},
  xticklabels={$U^{*}$,$U^{**}$},
        every axis x label/.style={
    at={(ticklabel* cs:0.95)},
    anchor=north,
},
every axis y label/.style={
    at={(ticklabel* cs:0.95)},
    anchor=east,
}
]

    \addplot[domain=0.7:4,
  line width=0.75pt,
black
]
{1.5/(1+x) }node at (axis cs:4.2,0.3) {\tiny $\dot{U}=0$};

    \addplot[domain=1:4,
  line width=0.75pt,
black
]
{2/(1+0.9*x) }node at (axis cs:4,0.55) {\tiny $\dot{U}'(\dot{m}<0)=0$};

\addplot [name path = A,
    domain = 1.2:3.5] {0.37*x-0.15} node[
    currarrow,
    pos= 0.2, 
    xscale=1,
    sloped,
    scale=1]{};
    
    \addplot [
    domain = 1.2:3.5] {0.37*x-0.15} node[
    currarrow,
    pos= 0.6, 
    xscale=-1,
    sloped,
    scale=1]{};
    
        \addplot [
    domain = 1.2:3.5] {0.37*x+0.1}  node[
    currarrow,
    pos= 0.5, 
    xscale=-1,
    sloped,
    scale=1]{};
    
            \addplot [
    domain = 1.2:3.5] {0.37*x+0.1}  node[
    currarrow,
    pos= 0.06, 
    xscale=1,
    sloped,
    scale=1]{};
    
        \addplot[domain=1:4,
  line width=0.75pt,
red
]
{2/(1+0.9*x) } node[
    currarrow,
    pos= 0.15, 
    xscale=-1,
    sloped,
    color=red,
    scale=1]{};
    
           \addplot[domain=1:4,
  line width=0.75pt,
red
]
{2/(1+0.9*x) } node[
    currarrow,
    pos= 0.5, 
    xscale=1,
    sloped,
    color=red,
    scale=1]{};
    
       \addplot[domain=1:4,
  line width=0.75pt,
black
]
{1.5/(1+x) } node[
    currarrow,
    pos= 0.06, 
    xscale=-1,
    sloped,
    scale=1]{};

     \addplot[domain=1.6:1.7,
  line width=2pt,
green
]
{1.5/(1+x) } node[
    currarrow,
    pos= 0.16, 
    xscale=1,
    sloped,
    scale=0]{};

        \addplot [line width=2pt,
    domain = 1.7:1.9,  green] {0.37*x-0.04}  node[
    currarrow,
    pos= 0.5, 
    xscale=-1,
    sloped,
    scale=0]{};

           \draw[line width=2pt,green] (1.9,0.67)--(1.9,0.43); 
    
        \addplot [line width=2pt,green, 
    domain = 1.9:2.1] {0.2*x+0.05} node[
    currarrow,
    pos= 0.8, 
    xscale=-1,
    sloped,
    color=Green,
    scale=0]{};

           \addplot[domain=1:4,
  line width=0.75pt,
black
]
{1.5/(1+x) } node[
    currarrow,
    pos= 0.5, 
    xscale=1,
    sloped,
    scale=1]{};

         \addplot[loosely dashed, domain=1:4,
  line width=1.2pt,
Blue
]
{1.5/(1+x) }; 

    \addplot [loosely dashed,   Blue,  line width=1.2pt, 
    domain = 1:3.5] {0.2*x+0.05} node[
    currarrow,
    pos= 0.8, 
    xscale=-1,
    sloped,
    color=Blue,
    scale=1]{}; 
    
    \addplot [loosely dashed,   Blue,  line width=1.2pt, 
    domain = 1:3.5] {0.2*x+0.05} node[
    pos= 1.1]{\tiny $V=\theta'U$};

        \addplot [loosely dashed,  Blue,  line width=1.2pt, 
    domain = 1:3.5] {0.2*x+0.05} node[
    currarrow,
    pos= 0.23, 
    xscale=1,
    sloped,
    color=Blue,
    scale=1]{};

\addplot [name path = B,
    domain = 0.5:3.5] {0.37*x+0.1} 
    node [pos=0.5, right] {\tiny $V=\theta U$};

\addplot [teal!10] fill between [of = A and B, soft clip={domain=0:4}] node [pos=0.5, left] {\tiny $V=\theta U$};

\addplot[color = black, dotted, thin] coordinates {(1.6, 0) (1.6, 0.58)};

\addplot[color = black, dotted, thin] coordinates {(2.1, 0) (2.1, 0.69)};

\addplot[color = black, mark = *, only marks, mark size = 2pt] coordinates {(1.6, 0.58)};
\addplot[color = red, mark = square*, only marks, mark size = 2pt] coordinates {(2.1, 0.69)};

\addplot[color = Blue!50, mark =diamond*, only marks, mark size = 4pt] coordinates {(2.15, 0.475)};

\end{axis}
    
\end{tikzpicture}
            \caption{Labor market transitional dynamics with higher automation.\label{fig:labor_market_transdyn_appendix}}
                \end{center}
            \end{figure}

\textbf{(Stage 2)} In stage 2, we have that $\dot{m}=0$ and we reach an automation measure  $m'<m$. Starting again in the labor market equilibrium, we have that the labor supply schedule returns to its initial position but the labor demand stays below it (see subsection \ref{subappendix:proof_lemma_automation} below). Thus, the initial equilibrium results in a lower $\theta$ and a lower $w$. These two effects explain the movements described by the dashed blue lines in  Figure \ref{fig:labor_market_transdyn_appendix}. First,  $\dot{U}=0$ returns to its initial position because $\dot{m}=0$. Second, the vacancy rate curve moves to the right given the lower equilibrium value of $\theta$. Thus, the end result is a lower vacancy rate and a higher rate of unemployment.

        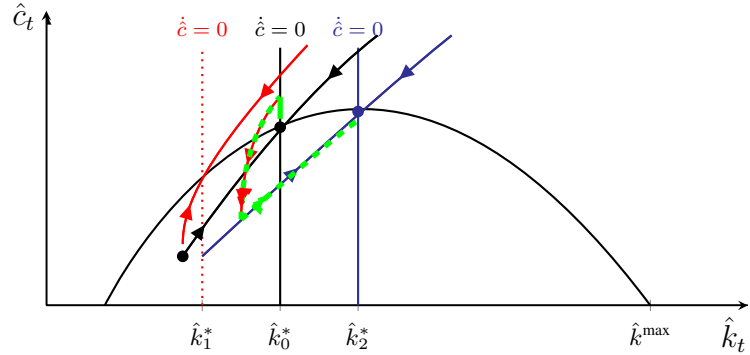
\begin{figure}
\begin{center}
\begin{tikzpicture}
     \begin{axis}[
        axis x line=middle,
                    axis y line=middle,
         name=plotramsey,
          title={}, 
              title style={at={(axis description cs:0.1,0.965)}, anchor=south} , 
 width=0.75\textwidth, height=5.5 cm,
ymin=0, ymax = 12,
xmin=0, xmax=18,
xshift=-0.5cm,
 xlabel = {$\hat{k}_{t}$},
    ylabel = {$\hat{c}_{t}$},
  xtick = {4,6,8,15.5},
  xticklabels={$\hat{k}^{*}_{1}$,$\hat{k}^{*}_{0}$, $\hat{k}^{*}_{2}$,$\hat{k}^{\text{max}}$},
 ytick = \empty, 
        every axis x label/.style={
    at={(ticklabel* cs:0.98)},
    anchor=north,
},
  dot/.style={circle,fill=black,minimum size=4pt,inner sep=0pt,
            outer sep=-0pt},
every axis y label/.style={
    at={(ticklabel* cs:0.98)},
    anchor=east,
}
]

\draw[thick,<->] (0,12) node[left]{$\hat{c}_{t}$}--(0,0) node[below right]{$0$}--(19,0) node[below]{$\hat{k}_{t}$};

\draw[thick,black] (6,0)--(6,10.5) node[above]{\tiny  $\dot{\hat{c}}=0$};

\draw[thick, red, dotted] (4,0)--(4,10.5) node[above]{\tiny  $\dot{\hat{c}}=0$};

\draw[thick, Blue] (8,0)--(8,10.5) node[above,Blue]{\tiny  $\dot{\hat{c}}=0$};

\draw[thick](1.5,0) ..controls (1.5,0) and (7.0,18) .. (15.5,0);    

         \draw[line width=2pt,green] (6,7.6)--(6,8.5);

\draw[thick,black](3.5,2) ..controls (3.5,2) and (6,7.95) .. (8.5,11) node[color=black,
	currarrow,
	pos=0.25, 
	xscale=1,
	sloped,
	scale=1]{};

	\draw[thick,black](3.5,2) ..controls (3.5,2) and (6,7.95) .. (8.5,11)node[
	currarrow,
	color=black,
	pos=0.85, 
	xscale=-1,
	sloped,
	scale=1]{};

	\draw[thick,Blue](4,2) ..controls (4,2) and (8,7.95) .. (10.4,11) node[color=Blue,
	currarrow,
	pos=0.5, 
	xscale=1,
	sloped,
	scale=1]{};
	
	\draw[thick,Blue](4,2) ..controls (4,2) and (8,7.95) .. (10.4,11)node[
	currarrow,
	color=Blue,
	pos=0.85, 
	xscale=-1,
	sloped,
	scale=1]{};

	\draw[thick,red](3.5,2.5) ..controls (3.5,4) and (4,6) .. (6.7,10.6) node[color=red,
	currarrow,
	pos=0.25, 
	xscale=1,
	sloped,
	scale=1]{};
	
	\draw[thick,red](3.5,2.5) ..controls (3.5,4) and (4,6) .. (6.7,10.6) node[color=red,
	currarrow,
	color=red,
	pos=0.85, 
	xscale=-1,
	sloped,
	scale=1]{};

		\draw[thick,red](5,3.5) ..controls (5,7) and (6,8.5) .. (6,8.5) node[color=red,
	currarrow,
	pos=0.3, 
	xscale=-1,
	sloped,
	scale=1]{};
	
		\draw[thick,red](5,3.5) ..controls (5,7) and (6,8.5) .. (6,8.5) node[color=red,
	currarrow,
	color=red,
	pos=0.1, 
	xscale=-1,
	sloped,
	scale=1]{};

		\draw[line width=2pt,dashed, green](5,3.5) ..controls (5,7) and (6,8.5) .. (6,8.5) node[color=red,
	currarrow,
	color=red,
	pos=0.1, 
	xscale=-1,
	sloped,
	scale=1]{};

		\draw[line width=2pt,dashed, green](5,3.5) ..controls (5,3.5) and (8,7.75) .. (8,7.5)node[
	currarrow,
	color=green,
	pos=0.25, 
	xscale=1,
	sloped,
	scale=1]{};

\addplot[color = black, mark = *, only marks, mark size = 2pt] coordinates {(6, 7.26)};
\addplot[color = black, mark = *, only marks, mark size = 2pt] coordinates {(3.5, 2)};
\addplot[color = Blue, mark = *, only marks, mark size = 2pt] coordinates {(8, 7.9)};


\end{axis}
    
\end{tikzpicture}
            \caption{Capital market transitional dynamics with an increase in automation.\label{fig:capital_market_trans_automation}}
                \end{center}
            \end{figure}

On the capital market, we have now that---though $\mu$ increases because $w$ goes down---the reduction in $m$ will end up raising $\hat{k}$ (see equation \eqref{eq:online_marginal} in the main text). Thus,  curve where $\dot{\hat{c}}=0$ moves to the right since the new equilibrium requires a higher $\dot{k}$. Respectively, since $\mu$ and $L$ again move in different directions, it is not clear whether $\dot{\hat{k}}=0$ moves up or down. However, in general, it will most likely stay relatively constant. 

The end results in the labor and capital market are illustrated by the green lines  in Figures \ref{fig:labor_market_transdyn_appendix} and \ref{fig:capital_market_trans_automation}.

\emph{(Labor-augmenting technical change)} Here we consider the effects of an increase in $\dot{M}$, which leads to a rise in $g$. As before, let us start  by studying the effects in the labor market with the assumption that $\sigma \in (0,1)$ and $g>0$. In this case, given Corollary \ref{coro:barg_power}, we have that 

\begin{equation*}
\frac{\partial \Psi^{n}}{\partial \dot{M}} > 0
\end{equation*}

which moves the labor supply equation upwards. The effects on the demand for labor are summarized by

\begin{equation*}
\frac{\partial w^{d}}{\partial \dot{M}} = \alpha - \frac{\partial \lambda}{\partial \dot{M} }>0
\end{equation*}

The sign is positive for the same reason reasons that  $\partial \Psi^{n}/\partial \dot{M} > 0$ (see the proof of Corollary \ref{coro:barg_power} above).  Summing up the effects on the demand and supply of labor, the result is a higher $w$---though with ambiguous effects on $\theta$.

            \begin{figure}
\begin{center}
\begin{tikzpicture}
     \begin{axis}[
        axis x line=middle,
                    axis y line=middle,
         name=plot1,
          title={\footnotesize }, 
              title style={at={(axis description cs:0.3,0.98)}, anchor=south} , 
 width=0.8*\textwidth, height=6 cm,
ymin=0.1, ymax = 1.099,
xmin=0, xmax=4.5,
xshift=-0.25cm,
 xlabel = {\scriptsize  $U$},
    ylabel = {\scriptsize $V$},
 ytick = \empty, 
  xtick = {1.6,2.1},
  xticklabels={$U^{*}$,$U^{**}$},
        every axis x label/.style={
    at={(ticklabel* cs:0.95)},
    anchor=north,
},
every axis y label/.style={
    at={(ticklabel* cs:0.95)},
    anchor=east,
}
]

    \addplot[domain=0.7:4,
  line width=0.75pt,
black
]
{1.5/(1+x) }node at (axis cs:4.2,0.3) {\tiny $\dot{U}=0$};

    \addplot[domain=1:4,
  line width=0.75pt,
red
]
{2.1/(1+0.9*x) }node at (axis cs:4,0.55) {\tiny $\dot{U}'(\uparrow \dot{M})=0$};

\addplot [name path = A,
    domain = 1.2:3.5] {0.37*x-0.15} node[
    currarrow,
    pos= 0.2, 
    xscale=1,
    sloped,
    scale=1]{};
    
    \addplot [
    domain = 1.2:3.5] {0.37*x-0.15} node[
    currarrow,
    pos= 0.6, 
    xscale=-1,
    sloped,
    scale=1]{};
    
        \addplot [
    domain = 1.2:3.5] {0.37*x+0.1}  node[
    currarrow,
    pos= 0.5, 
    xscale=-1,
    sloped,
    scale=1]{};
    
            \addplot [
    domain = 1.2:3.5] {0.37*x+0.1}  node[
    currarrow,
    pos= 0.06, 
    xscale=1,
    sloped,
    scale=1]{};
    
        \addplot[domain=1:4,
  line width=0.75pt,
red
]
{2.1/(1+0.9*x) } node[
    currarrow,
    pos= 0.15, 
    xscale=-1,
    sloped,
    color=red,
    scale=1]{};
    
           \addplot[domain=1:4,
  line width=0.75pt,
red
]
{2.1/(1+0.9*x) } node[
    currarrow,
    pos= 0.5, 
    xscale=1,
    sloped,
    color=red,
    scale=1]{};

       \addplot[domain=1:4,
  line width=0.75pt,
black
]
{1.5/(1+x) } node[
    currarrow,
    pos= 0.06, 
    xscale=-1,
    sloped,
    scale=1]{};

     \addplot[domain=1.6:1.7,
  line width=2pt,
green
]
{1.5/(1+x) } node[
    currarrow,
    pos= 0.16, 
    xscale=1,
    sloped,
    scale=0]{};

        \addplot [line width=2pt,
    domain = 1.6:2.1,  green] {0.37*x-0.04}  node[
    currarrow,
    pos= 0.5, 
    xscale=1,
    sloped,
    color=green, 
    scale=1]{};

           \addplot[domain=1:4,
  line width=0.75pt,
black
]
{1.5/(1+x) } node[
    currarrow,
    pos= 0.5, 
    xscale=1,
    sloped,
    scale=1]{};

\addplot [name path = B,
    domain = 0.5:3.5] {0.37*x+0.1} 
    node [pos=0.5, right] {\tiny $V=\theta U$};

\addplot [teal!10] fill between [of = A and B, soft clip={domain=0:4}] node [pos=0.5, left] {\tiny $V=\theta U$};

\addplot[color = black, dotted, thin] coordinates {(1.6, 0) (1.6, 0.58)};

\addplot[color = black, dotted, thin] coordinates {(2.1, 0) (2.1, 0.69)};

\addplot[color = black, mark = *, only marks, mark size = 2pt] coordinates {(1.6, 0.58)};

\addplot[color = red, mark = square*, only marks, mark size = 2pt] coordinates {(2.1, 0.72)};

\end{axis}
    
\end{tikzpicture}
            \caption{Labor market transitional dynamics with higher $\dot{M}$ and $\sigma<1$.\label{fig:labor_market_laboraugmenting}}
                \end{center}
            \end{figure}
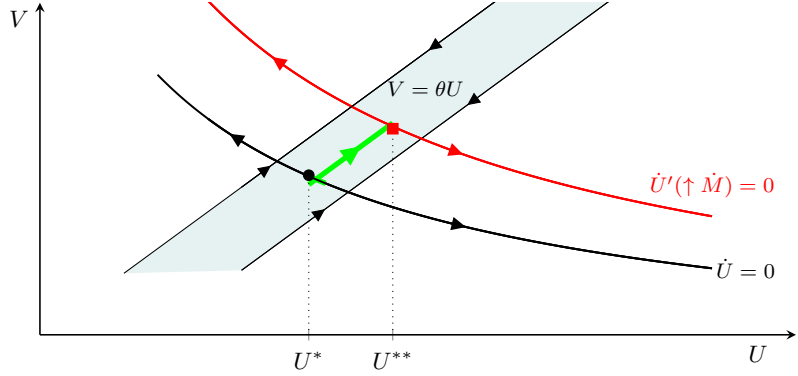

The changes in the equilibrium rate of unemployment are given by

\begin{equation*}
\frac{\partial U}{\partial \dot{M}}= \frac{\partial \lambda/\partial \dot{M} }{\lambda+f(\theta)} \frac{f(\theta)}{\lambda+f(\theta)}>0
\end{equation*}

This is explained by Lemma \ref{lemma:tech_unemployment}.  The end results in the labor market are summarized  in Figure \ref{fig:labor_market_laboraugmenting}.

Moving to the capital market, the higher $w$ leads to a lower $\mu$. From this, using the stationary condition in the Euler equation \eqref{eq:euler} in the main text, we arrive at a new equilibrium with a higher $\hat{k}$. Respectively, using the condition that $\dot{\hat{k}}=0$ in the thrid line of \eqref{eq:syst_differential_eq}, it follows that $\hat{c}$ decreases given the lower $\mu$ and $L$. The resulting transition dynamics are summarized in Figure \ref{fig:capital_market_trans_labtech}.

        \begin{figure}
\begin{center}
\begin{tikzpicture}
     \begin{axis}[
        axis x line=middle,
                    axis y line=middle,
         name=plotramsey,
          title={}, 
              title style={at={(axis description cs:0.1,0.965)}, anchor=south} , 
 width=0.75\textwidth, height=5.5 cm,
ymin=0, ymax = 12,
xmin=0, xmax=18,
xshift=-0.5cm,
 xlabel = {$\hat{k}_{t}$},
    ylabel = {$\hat{c}_{t}$},
  xtick = {6,8,15.5},
  xticklabels={$\hat{k}^{*}_{0}$, $\hat{k}^{*}_{1}$,$\hat{k}^{\text{max}}$},
 ytick = \empty, 
        every axis x label/.style={
    at={(ticklabel* cs:0.98)},
    anchor=north,
},
  dot/.style={circle,fill=black,minimum size=4pt,inner sep=0pt,
            outer sep=-0pt},
every axis y label/.style={
    at={(ticklabel* cs:0.98)},
    anchor=east,
}
]

\draw[thick,<->] (0,12) node[left]{$\hat{c}_{t}$}--(0,0) node[below right]{$0$}--(19,0) node[below]{$\hat{k}_{t}$};

\draw[thick,black] (6,0)--(6,10.5) node[above ]{\tiny  $\dot{\hat{c}}=0$};

\draw[thick, red] (8,0)--(8,10.5) node[above,xshift=1em, red]{\tiny  $\dot{\hat{c}}(\uparrow \dot{M})=0$};

\draw[thick, red](2.5,0) ..controls (2.5,0) and (7.0,15) .. (14,0);    

\draw[thick](1.5,0) ..controls (1.5,0) and (7.0,18) .. (15.5,0);

         \draw[line width=2pt,dashed, green] (6,7.6)--(6,4);

\draw[thick,black](3.5,2) ..controls (3.5,2) and (6,7.95) .. (8.5,11) node[color=black,
	currarrow,
	pos=0.35, 
	xscale=1,
	sloped,
	scale=1]{};

	\draw[thick,black](3.5,2) ..controls (3.5,2) and (6,7.95) .. (8.5,11)node[
	currarrow,
	color=black,
	pos=0.85, 
	xscale=-1,
	sloped,
	scale=1]{};

	\draw[thick,red](4,1) ..controls (4,1) and (7,5.95) .. (10,9) node[color=red,
	currarrow,
	pos=0.65, 
	xscale=1,
	sloped,
	scale=1]{};
	
	\draw[thick,red](4,1) ..controls (4,1) and (7,5.95) .. (10,9) node[color=red,
	currarrow,
	color=red,
	pos=0.85, 
	xscale=-1,
	sloped,
	scale=1]{};

		\draw[line width=2pt,dashed, green](6,4) ..controls (7,5.8) and (8,6.8) .. (8,6.8) node[color=red,
	currarrow,
	color=red,
	pos=0.1, 
	xscale=-1,
	sloped,
	scale=0]{};

\addplot[color = black, mark = *, only marks, mark size = 2pt] coordinates {(6, 7.26)};
\addplot[color = black, mark = *, only marks, mark size = 2pt] coordinates {(3.5, 2)};
\addplot[color = red, mark = *, only marks, mark size = 2pt] coordinates {(8, 6.8)};


\end{axis}
    
\end{tikzpicture}
            \caption{Capital market transitional dynamics with an increase in $\dot{M}$.\label{fig:capital_market_trans_labtech}}
                \end{center}
            \end{figure}
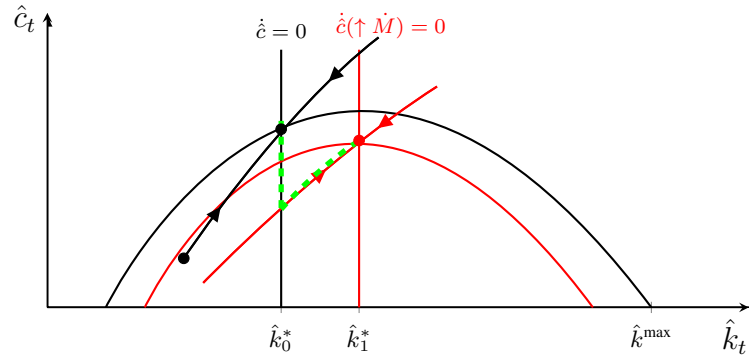

            The case where $\sigma >1$ is a straightforward application of the previous argument.

 \emph{(Labor Institutions)}  Without loss of generality, we can focus on variations in $T^{w}$ as a representation of variations in the institutional support to labor. Remembering that a higher $T^{w}$ can be read as a lower support to labor, we have from Corollary \ref{coro:barg_power} that 
 
 \begin{equation*}
 \frac{\partial \Psi^{n}}{\partial T^{w}} <0
 \end{equation*}
            
            which moves the labor supply schedule downwards. In the paper, given that we provided no link between worker power and labor productivity, there are no changes in the labor supply schedule. Thus, the end result is a lower stationary real wage and a higher value of $\theta$.  The resulting changes in the labor market  are summarized in Figure

                  \begin{figure}
\begin{center}
\begin{tikzpicture}
     \begin{axis}[
        axis x line=middle,
                    axis y line=middle,
         name=plot1,
          title={\footnotesize }, 
              title style={at={(axis description cs:0.3,0.98)}, anchor=south} , 
 width=0.8*\textwidth, height=6 cm,
ymin=0.1, ymax = 1.099,
xmin=0, xmax=4,
xshift=-0.25cm,
 xlabel = {\scriptsize  $U$},
    ylabel = {\scriptsize $V$},
 ytick = \empty, 
  xtick = {1,1.6},
  xticklabels={$U^{*}$,$U^{**}$},
        every axis x label/.style={
    at={(ticklabel* cs:0.95)},
    anchor=north,
},
every axis y label/.style={
    at={(ticklabel* cs:0.95)},
    anchor=east,
}
]

    \addplot[domain=0.5:3.2,
  line width=0.75pt,
black
]
{1.5/(1+x) }node at (axis cs:3.5,0.42) {\tiny $\dot{U}=0$};

\addplot [name path = A,
    domain = 0:3.5] {0.35*x+0.05} node[
    currarrow,
    pos= 0.2, 
    xscale=1,
    sloped,
    scale=1]{};
    
    \addplot [
    domain = 0:3.5] {0.35*x+0.05} node[
    currarrow,
    pos= 0.6, 
    xscale=-1,
    sloped,
    scale=1]{};

        \addplot [red, 
    domain = 0:3.5] {0.7*x+0.05}  node[
    currarrow,
    pos= 0.35, 
    xscale=-1,
    sloped,
    color=red,
    scale=1]{};
    
            \addplot [red,
    domain = 0:3.5] {0.7*x+0.05}  node[
    currarrow,
    pos= 0.16, 
    xscale=1,
    sloped,
        color=red,
    scale=1]{};

       \addplot[domain=1:1.6,
  line width=2pt,
green
]
{1.5/(1+x) } node[
    currarrow,
    pos= 0.5, 
    xscale=-1,
    sloped,
    color=green,
    scale=1]{};

       \addplot[domain=0.5:4,
  line width=0.75pt,
black
]
{1.5/(1+x) } node[
    currarrow,
    pos= 0.06, 
    xscale=-1,
    sloped,
    scale=1]{};

           \addplot[domain=0.5:4,
  line width=0.75pt,
black
]
{1.5/(1+x) } node[
    currarrow,
    pos= 0.5, 
    xscale=1,
    sloped,
    scale=1]{};

\addplot[color = black, dotted, thin] coordinates {(1.6, 0) (1.6, 0.58)};

\addplot[color = black, dotted, thin] coordinates {(1, 0) (1, 0.77)};

\addplot[color = black, mark = *, only marks, mark size = 2pt] coordinates {(1.6, 0.58)};

\addplot[color = red, mark = square*, only marks, mark size = 2pt] coordinates {(1, 0.77)};

\end{axis}
    
\end{tikzpicture}
            \caption{Labor market transitional dynamics with higher $T^{w}$.\label{fig:labor_market_institutions}}
                \end{center}
            \end{figure}
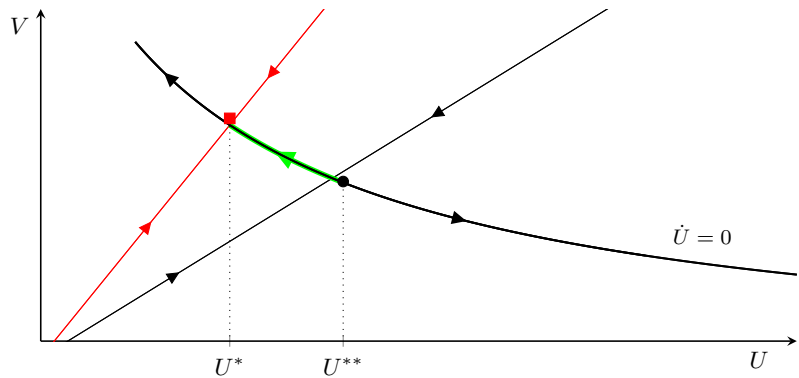

            As a consequence of the lower $w$ resulting from the new equilibrium in the labor market, we have that $\mu$ goes up. Using this result in the stationary solution of the Euler equation \eqref{eq:euler} in the main text, we have that $\dot{\hat{c}}=0$ moves to the left because  $\hat{k}$  decreases with a higher rate of return of capital (see \eqref{eq:online_marginal} in the main Appendix).

            Similarly, using the third line in \eqref{eq:syst_differential_eq}, we have that 
            
            \begin{equation*}
            \hat{c} = g \frac{\hat{k}}{q} + \frac{\mu \hat{y}}{1+\mu} - \hat{xi} V - \hat{\tau}
            \end{equation*}
            
            when $\dot{\hat{k}}=0$. Thus, the isocline $\dot{\hat{k}}=0$   will increase given the rise in $\mu$ and $L$ (which lowers $\hat{\xi}$ and $\hat{\tau}$). The end results are summarized in Figure

                    \begin{figure}
\begin{center}
\begin{tikzpicture}
     \begin{axis}[
        axis x line=middle,
                    axis y line=middle,
         name=plotramsey,
          title={}, 
              title style={at={(axis description cs:0.1,0.965)}, anchor=south} , 
 width=0.75\textwidth, height=5.5 cm,
ymin=0, ymax = 12,
xmin=0, xmax=18,
xshift=-0.5cm,
 xlabel = {$\hat{k}_{t}$},
    ylabel = {$\hat{c}_{t}$},
  xtick = {6,8,15.5},
  xticklabels={$\hat{k}^{*}_{1}$, $\hat{k}^{*}_{0}$,$\hat{k}^{\text{max}}$},
 ytick = \empty, 
        every axis x label/.style={
    at={(ticklabel* cs:0.98)},
    anchor=north,
},
  dot/.style={circle,fill=black,minimum size=4pt,inner sep=0pt,
            outer sep=-0pt},
every axis y label/.style={
    at={(ticklabel* cs:0.98)},
    anchor=east,
}
]

\draw[thick,<->] (0,12) node[left]{$\hat{c}_{t}$}--(0,0) node[below right]{$0$}--(19,0) node[below]{$\hat{k}_{t}$};

\draw[thick,red,dashed] (6,0)--(6,10.5) node[above]{\tiny  $\dot{\hat{c}}(\uparrow T^{w})=0$};

\draw[thick, black] (8,0)--(8,10.75) node[above,xshift=1em, black]{\tiny  $\dot{\hat{c}}=0$};

\draw[thick](2.5,0) ..controls (2.5,0) and (7.0,15) .. (14,0);    

\draw[thick, red, dashed](1.5,0) ..controls (1.5,0) and (7.0,18) .. (15.5,0);

         \draw[line width=2pt,dashed, green] (8,7.3)--(8,10.7);

\draw[thick,red](3.5,2) ..controls (3.5,2) and (6,7.95) .. (8.5,11) node[color=red,
	currarrow,
	pos=0.35, 
	xscale=1,
	sloped,
	scale=1]{};

	\draw[thick,red](3.5,2) ..controls (3.5,2) and (6,7.95) .. (8.5,11)node[
	currarrow,
	color=red,
	pos=0.85, 
	xscale=-1,
	sloped,
	scale=1]{};

	\draw[thick](4,1) ..controls (4,1) and (7,5.95) .. (10,9) node[
	currarrow,
	pos=0.65, 
	xscale=1,
	sloped,
	scale=1]{};

	\draw[thick](4,1) ..controls (4,1) and (7,5.95) .. (10,9) node[
	currarrow,
	pos=0.85, 
	xscale=-1,
	sloped,
	scale=1]{};

		\draw[line width=2pt,dashed, green](6,6.8) ..controls (6,6.8) and (8,10.7) .. (8,10.7) node[color=red,
	currarrow,
	color=red,
	pos=0.1, 
	xscale=-1,
	sloped,
	scale=0]{};

\addplot[color = red, mark = *, only marks, mark size = 2pt] coordinates {(6, 7.26)};

\addplot[color = black, mark = *, only marks, mark size = 2pt] coordinates {(8, 6.8)};

\end{axis}
    
\end{tikzpicture}
            \caption{Capital market transitional dynamics following an increase in $T^{w}$.\label{fig:capital_market_trans_inst}}
                \end{center}
            \end{figure}
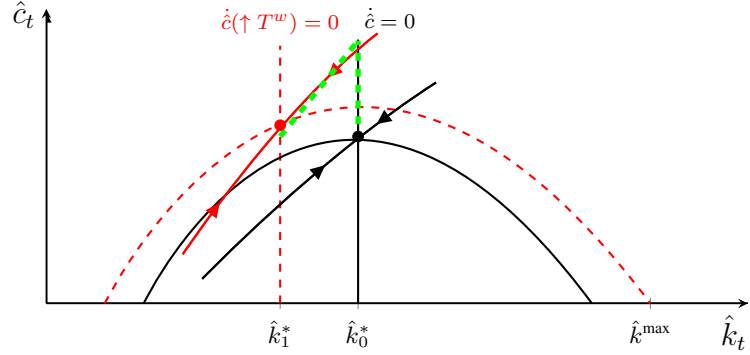

\subsubsection{Proof Lemma \ref{lemma:lemma_2A_AR}}\label{subappendix:proof_lemma_automation}

Unlike in the main text, here $w_{J}(m)= \mathrm{lim}_{t \rightarrow \infty} W_{t}/(P_{t} e^{\alpha J^{*}_{t}} D(h_{t})^{a_{1}})$ and $w_{M}(m)= \mathrm{lim}_{t \rightarrow \infty} W_{t}/(P_{t} e^{\alpha M_{t}} D(h_{t})^{a_{1}})$. Following the steps outlined by \citeasnoun{acemoglu2018race}, $w_{J}(m)$ can be derived from the ideal price index condition as (time arguments are ignored to save notation)

\begin{equation*}
w_{J}(m)^{1-\sigma} = \frac{P^{\sigma-1} - (1-m^{*})\Big[\frac{\delta D(h)^{a_{0}}}{A^{k} q}\Big]^{1-\sigma}}{\int_{0}^{m^{*}} e^{\alpha (\sigma-1)j}\mathrm{d}j }
\end{equation*}

Holding $\mu^{*}$ fixed with  small changes in $m$, it follows that

\begin{equation*}
\frac{w'_{J}(m)}{w_{J}(m)} = \frac{1}{1-\sigma} \Bigg[ \frac{\Big[\frac{\delta D(h)^{a_{0}}}{A^{k} q}\Big]^{1-\sigma} -w^{1-\sigma}_{M}(m) }{w^{1-\sigma}_{M}(m) \int_{0}^{m^{*}} e^{\alpha (1-\sigma) j} \mathrm{d}j }\Bigg]. 
\end{equation*}

Similarly,

\begin{equation*}
\frac{w'_{M}(m)}{w_{M}(m)} = \frac{1}{1-\sigma} \Bigg[ \frac{\Big[\frac{\delta D(h)^{a_{0}}}{A^{k} q}\Big]^{1-\sigma} -w^{1-\sigma}_{J}(m) }{w^{1-\sigma}_{M}(m) \int_{0}^{m^{*}} e^{\alpha (1-\sigma) j} \mathrm{d}j }\Bigg] 
\end{equation*}

since $w_{J}^{1-\sigma}(m)\int_{0}^{m} e^{\alpha(\sigma-1)j }\mathrm{d}j = w_{M}^{1-\sigma}(m)\int_{0}^{m} e^{\alpha(1-\sigma)j} \mathrm{d}j$ and $e^{\alpha(1-\sigma)m}  \int_{0}^{m} e^{\alpha(\sigma-1)j }\mathrm{d}j = \int_{0}^{m} e^{\alpha(1-\sigma)j} \mathrm{d}j $. 

Define $\bar{q}$ as the relative price of capital for which $m=0$. Using \eqref{eq:online_marginal}, it follows that

\begin{equation}
\bar{q}(\mu^{*}) = \frac{\delta (1+\mu^{*}) D(h)^{a_{0}}}{A^{k}}.
\end{equation}

To prove  part (i), let $\delta D(h)^{a_{0}}/(A^{k} q^{\text{min}}) = w_{J}(1)$. The proof that $q^{\text{min}}>\bar{q}$ is a direct application of the steps in \citeasnoun[p. 1532]{acemoglu2018race}. 

At $m=\bar{m}(q)$, the ideal price index condition satisfies

\begin{equation*}
\begin{split}
(1+\mu^{*})^{\sigma-1} &= \Bigg[\frac{\delta D(h)^{a_{0}}}{(A^{k} q)}\Bigg]^{1-\sigma}  \Bigg[ 1-\bar{m}(q) + \frac{e^{\alpha(\sigma-1)\bar{m}(q)}-1}{\alpha(\sigma-1)} \Bigg] \\
& \approx \Bigg[\frac{\delta D(h)^{a_{0}}}{(A^{k} q)}\Bigg]^{1-\sigma} \Big[ 1+ \frac{\alpha (\sigma-1)}{2} \bar{m}(q)^{2} \Big].
\end{split}
\end{equation*}

The second line uses a Taylor expansion, which provides a reasonable approximation because $m \in (0,1)$ and $\sigma$ is most likely a number not very different to 1. From this it follows that 

\begin{equation*}
\bar{m}(q) \approx \sqrt{\frac{2}{\alpha (1-\sigma)} \times \Big[1-\Big(\frac{\delta(1+\mu^{*})D(h)^{a_{0}}}{A^{k}q}}\Big)^{\sigma-1}\Big].
\end{equation*}

which clearly implies that  $\bar{m}'<0$. Note, in addition, that if $q=\bar{q}$, $\bar{m}(q)  =  0$. 

 When $q=q^{\text{min}}$, we have that

\begin{equation*}
\begin{split}
(1+\mu^{*})^{\sigma-1}  &= \Bigg[\frac{\delta D(h)^{a_{0}}}{(A^{k} q^{\text{min}})}\Bigg]^{1-\sigma}  \Bigg[ 1-\bar{m}(q^{\text{min}}) + \frac{e^{\alpha(\sigma-1)\bar{m}(q^{\text{min}})}-1}{\alpha(\sigma-1)} \Bigg] \\
 & = w_{J}(1)^{1-\sigma} \Bigg[ 1-\bar{m}(q^{\text{min}}) + \frac{e^{\alpha(\sigma-1)\bar{m}(q^{\text{min}})}-1}{\alpha(\sigma-1)} \Bigg].
\end{split}
\end{equation*}

Since $w_{J}(1)^{1-\sigma}  \int_{0}^{1} e^{\alpha (\sigma-1)j} \mathrm{d}j = (1+\mu^{*})^{\sigma-1} $, it follows that $ \int_{0}^{1} e^{\alpha (\sigma-1)j} \mathrm{d}j = 1-\bar{m}(q^{\text{min}}) + \frac{e^{\alpha(\sigma-1)\bar{m}(q^{\text{min}})}-1}{\alpha(\sigma-1)}$ only  when $\bar{m}(q^{\text{min}} )=1$. 

 In the region  where $m > \bar{m}(q)$ and $q \in [q^{\text{min}}, \bar{q}]$, we have that $w_{J}(m) > w_{J}(\bar{m}(q)) =  \delta D(h)^{a_{0}}/(A^{k} q)  > w_{M}(\bar{m}(q))) > w_{M}(m)$ because $w_{J}(m)'>0$ and $w_{M}(m)'<0$ when $w_{J}(m) > \delta D(h)^{a_{0}}/(A^{k} q)  >  w_{M}(m)$. On the other hand, when $m < \bar{m}(q)$, it follows that $w_{J}(\bar{m}(q)) <  \delta D(h)^{a_{0}}/(A^{k} q) $, meaning that additional automation would reduce aggregate output, so small changes in $m$ do not affect $m^{*}$ and have no effects on the economic equilibrium.

To prove  part (ii), let $\delta D(h)^{a_{0}}/(A^{k} q^{\text{max}}) = w_{M}(1)$. The proof that $q^{\text{max}}>\bar{q}$ is a direct application of the steps in \citeasnoun[p. 1532]{acemoglu2018race}. Correspondingly, when $m=\tilde{m}(q)$, it follows that

\begin{equation*}
\begin{split}
(1+\mu^{*})^{\sigma-1} &= \Bigg[\frac{\delta D(h)^{a_{0}}}{(A^{k} q)}\Bigg]^{1-\sigma}  \Bigg[ 1-\tilde{m}(q) + \frac{e^{\alpha(1-\sigma)\tilde{m}(q)}-1}{\alpha(1-\sigma)} \Bigg] \\
& \approx \Bigg[\frac{\delta D(h)^{a_{0}}}{(A^{k} q)}\Bigg]^{1-\sigma} \Big[ 1+ \frac{\alpha (1-\sigma)}{2} \tilde{m}(q)^{2} \Big].
\end{split}
\end{equation*}

The second line uses a Taylor approximation of the exponential function, such that

\begin{equation*}
\tilde{m}(q) \approx \sqrt{\frac{2}{\alpha (\sigma-1)} \times \Big[1-\Big(\frac{\delta(1+\mu^{*})D(h)^{a_{0}}}{A^{k}q}}\Big)^{\sigma-1}\Big].
\end{equation*}

Clearly, $\tilde{m}'>0$. Similarly, note that when $q=\bar{q}$, $\tilde{m}(q) =0$. When $q=q^{\text{max}}$, $\delta D(h)^{a_{0}}/(A^{k} q^{\text{max}} )= w_{M}(1)$ implies that 

\begin{equation*}
\begin{split}
w_{M}(1)^{1-\sigma} \int_{0}^{1} e^{\alpha(1-\sigma)j}\mathrm{d}j &= \Big[w_{M}(1)\Big]^{1-\sigma}  \Bigg[ 1-\tilde{m}(q) + \frac{e^{\alpha(1-\sigma)\tilde{m}(q)}-1}{\alpha(1-\sigma)} \Bigg] \\
\frac{e^{\alpha(1-\sigma)}-1}{\alpha(1-\sigma)} &= 1-\tilde{m}(q^{\text{max}}) + \frac{e^{\alpha(1-\sigma)\tilde{m} (q^{\text{max}})}-1}{\alpha(1-\sigma)}.
\end{split}
\end{equation*}

Clearly, the previous equation only holds if $\tilde{m} (q^{\text{max}})=1$. 

In this region, because $w_{J}(m)> \delta D(h)^{a_{0}}/(A^{k} q) > w_{M}(m)$ we have that $w'_{J}(m)>0$ and  $w'_{M}(m)<0$. Thus, for $m > \tilde{m}(q)$, $w_{J}(m) > w_{J}(\tilde{m}(q)) > \delta D(h)^{a_{0}}/(A^{k} q)  = w_{M}(\tilde{m}(q)) >w_{M}(m)$. On the other hand, for  $m < \tilde{m}(q)$, $ \delta D(h)^{a_{0}}/(A^{k} q)  < w_{M}(m)$, which means that new  tasks would reduce aggregate output, so they are not adopted. 

\section{Data Description}\label{appendix:data}

 In the main text, I used the experimental BEA-BLS integrated of \citeasnoun{eldridge2020toward} from 1947 to 2016.  Particularly, given the emphasis of the paper on \emph{production} relations between firms and workers, I focused on sectors which do not require imputing a value-added onto to them in order to make them equal to their respective incomes. This is the case, for example, with Finance, Insurance, and Real Estate (FIRE) sectors, Education and health Services, and Professional and Business Services. Focusing on the non-farming economy, Table \ref{table:bea_codes} summarizes the BEA-BLS industry categories used in the main text.

\begin{table}\caption{BEA and BLS classification codes. }
\begin{center}
\resizebox{0.95\textwidth}{!}{
\begin{tabular}{l   l}
\toprule
 BEA industry category & BLS industry category \\
 \hline 
Utilities & Utilities\\
Construction & Construction\\
Manufacturing & \\
& Wood products\\
& Nonmetallic mineral products\\
& Primary metals\\
& Fabricated metal products\\
& Machinery\\
& Computer and electronic products\\
& Electrical equipment, appliances, and components\\
& Motor vehicles, bodies and trailers, and parts\\
& Other transportation equipment\\
& Furniture and related products\\
& Miscellaneous manufacturing\\
& Food and beverage and tobacco products\\
& Textile mills and textile product mills\\
& Apparel and leather and allied products\\
& Paper products\\
& Printing and related support activities\\
& Petroleum and coal products\\
& Chemical products\\ 
& Plastics and rubber products\\ 
Whole sale trade & Whole sale trade\\
Retail  trade & Retail  trade \\
Transporting and warehousing & \\
&  Air transportation\\ 
& Rail transportation\\
& Water transportation\\
& Truck transportation\\
&Transit and ground passenger transportation\\
& Pipeline transportation\\ 
& Other transportation and support activities\\
& Warehousing and storage\\
 Information  & \\
 & Publishing industries, except internet (includes software)\\
 & Motion picture and sound recording industries\\
& Broadcasting and telecommunications\\
& Data processing, internet publishing, and other information services\\
Administrative and waste management services & \\
& Administrative and support services\\
& Waste management and remediation services\\
Arts, entertainment, and recreation & \\
& Performing arts, spectator sports, museums, and related activities\\
& Amusements, gambling, and recreation industries\\
Accommodation and food services & \\
& Accommodation\\
& Food services and drinking places\\
Other services, except government & Other services, except government\\
\bottomrule
\end{tabular}}
\end{center}
\label{table:bea_codes}
\end{table}

For these sectors, I estimated the rate of return of capital ($\mu^{\text{BEA-BLS}}_{t})$ as follows

\begin{equation*}
\mu^{\text{BEA-BLS}}_{t} = \frac{P_{t}Y_{t}-P^{c}Y_{t}}{P^{c}Y_{t} } = \frac{P_{t} Y_{t} - \delta P^{k}_{t} K_{t}- W_{t} L_{t}}{W_{t} L_{t} +  \delta P^{k}_{t} K_{t}}
\end{equation*}

where $\delta P^{k}_{t} K_{t}$ is the  Current-Cost Depreciation of Private Fixed Assets obtained from Table 3.4ESI from the BEA Fixed Assets Accounts Tables, $P_{t}Y_{t}$ is the nominal gross output minus nominal intermediate output, and $W_{t} L_{t} $ is the sum nominal college labor input and
nominal non-college labor input in \citeasnoun{eldridge2020toward}. 

The capital-output ratio $P_{t} Y_{t}/ P^{k}_{t} K_{t}$ is the   nominal gross output minus nominal intermediate output in \citeasnoun{eldridge2020toward} over the sum of  Current-Cost Net Stock of Private Fixed Assets and the  Current-Cost Depreciation of Private Fixed Assets. 

The wage-premium (BEA-BLS) is measured as

\begin{equation*}
w^{\text{bea-bls}} = \frac{\text{(nominal college labor input)}/\text{(quantity index college labor input)}}{\text{(nominal non-college labor input)}/\text{(quantity index non-college labor input)}}
\end{equation*}

The depreciation rate $\delta^{\text{BEA-BLS}}$ used in equation \eqref{eq:empirical_automation} is obtained is the monthly average of  Current-Cost Depreciation of Private Fixed Assets over the sum of  Current-Cost Net Stock of Private Fixed Assets and the  Current-Cost Depreciation of Private Fixed Assets.  This value is approximately $0.056\%$.  For clarity, $K_{t}/(qY_{t})$ in \eqref{eq:empirical_automation} is also the monthly capital-output ratio, which is the annual value divided by 12.

\begin{figure}
    \begin{center}
    \includegraphics[width=0.8\textwidth]{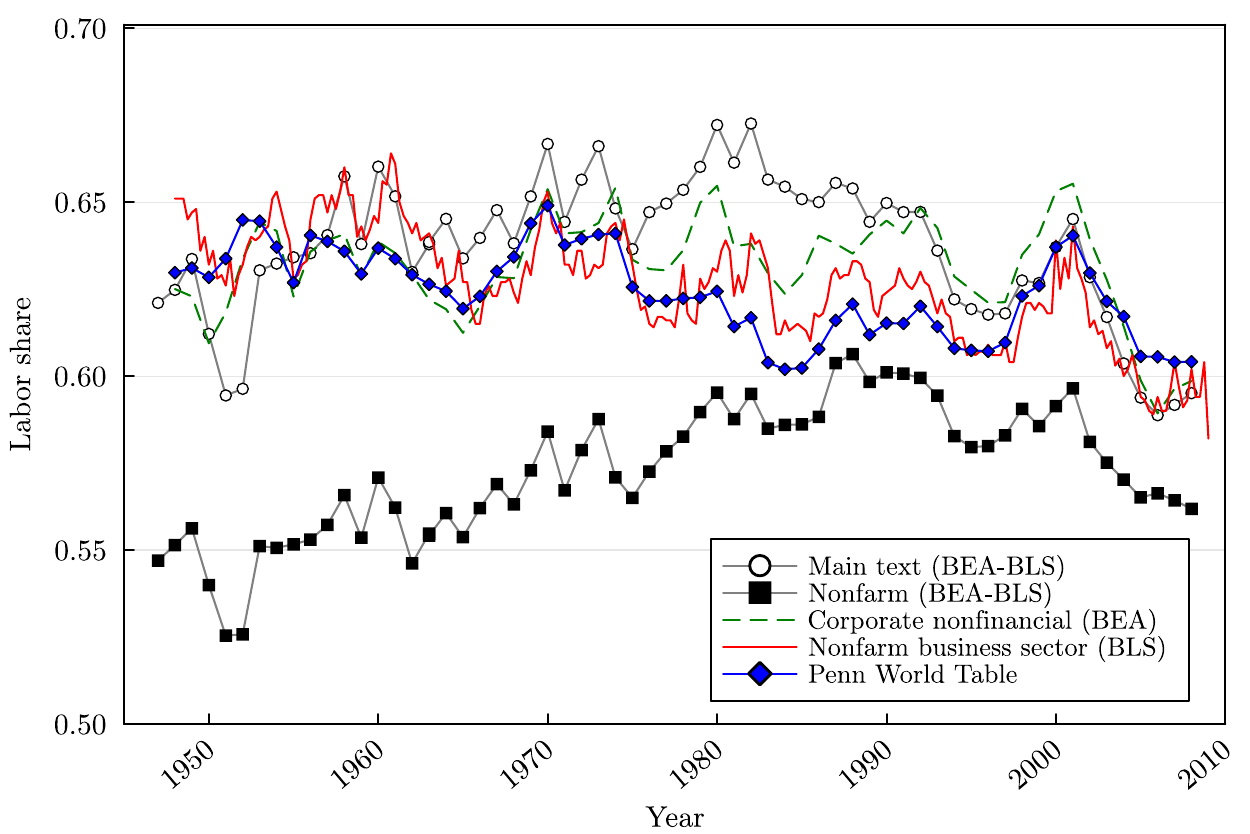}
    \end{center}
          \caption{Labor shares. \emph{Notes--- The black line is the nonfarm  (BEA-BLS) measure, the gray line with white circles is the labor share using the sectors in Table \ref{table:bea_codes}.} \label{fig:labor_shares_US}}
  \end{figure}

\subsection{Measures of the Labor share and  Labor shares by sector}  Figure \ref{fig:labor_shares_US} shows five different measures of the labor share. The  nonfarm  (BEA-BLS) data and the corporate nonfinancial (BEA) data are the only two which  exhibit a clear downward trend of the labor share after the 2000s. In contrast, Figure \ref{fig:predictedLS_several} shows that the remaining three measures of the labor share are broadly consistent with the predicted paths of the model generated by allowing changes in technology and institutions.

What explains the difference  the nonfarm  (BEA-BLS)  and the corporate nonfinancial (BEA) data with the other measures? Part of the answer can be found by analyzing the behavior of individual sectors in the economy. Using the BEA-BLS data,  Figures \ref{fig:labor_shares1_US} and \ref{fig:labor_shares2_US} show that the difference between the nonfarm (BEA-BLS) labor share and that following the industry categories in Table \ref{table:bea_codes}   can be explained by the data in sectors with questionable value added imputations. Even if we exclude the Finance and Insurance sector, there is a number of service sector with questionable imputations.  The data associated with Service sectors (2) in Figure \ref{fig:labor_shares2_US}, for instance, exhibit a sharp increase in the labor share even after the 1970s when the institutional support to labor started to decrease.

\begin{figure}
    \begin{center}
    \includegraphics[width=0.8\textwidth]{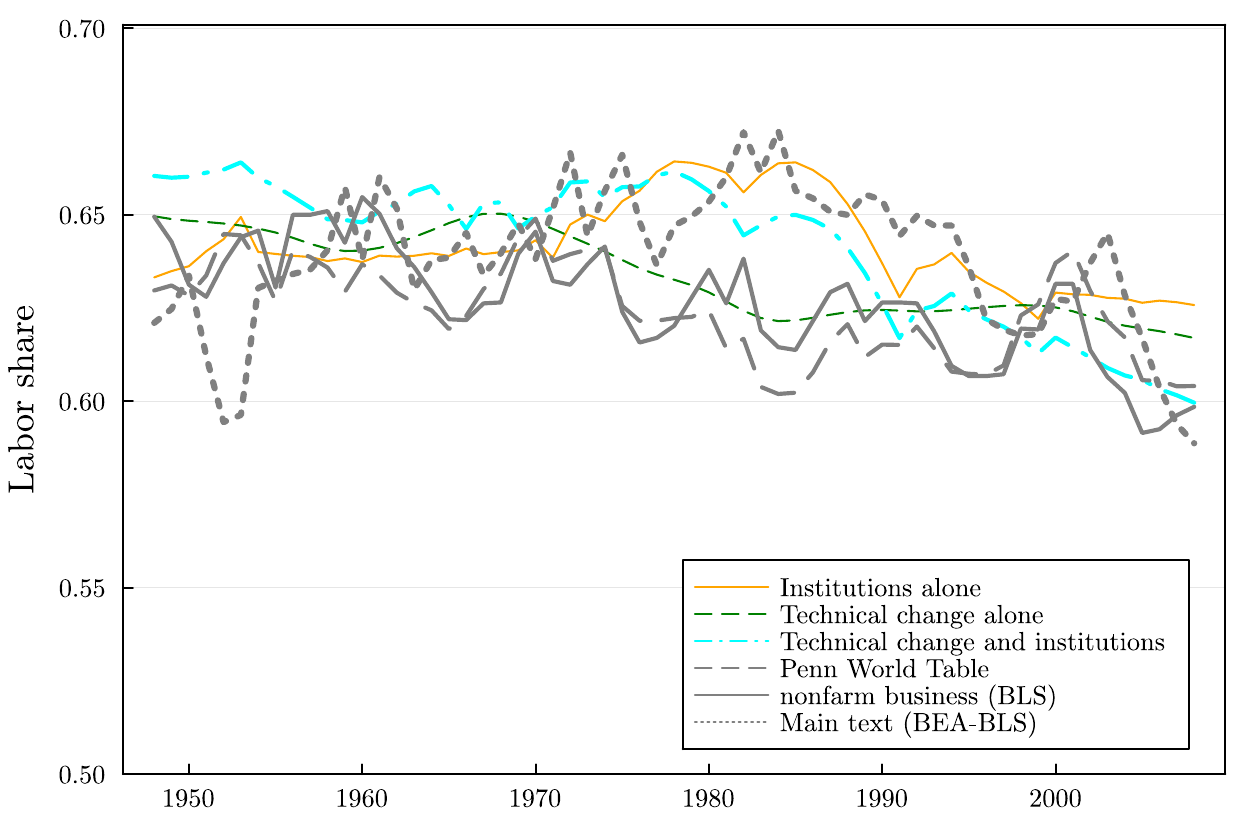}
    \end{center}
          \caption{Predicted and real labor shares. \label{fig:predictedLS_several}}
  \end{figure}

\begin{figure}
    \begin{center}
    \includegraphics[width=0.85\textwidth]{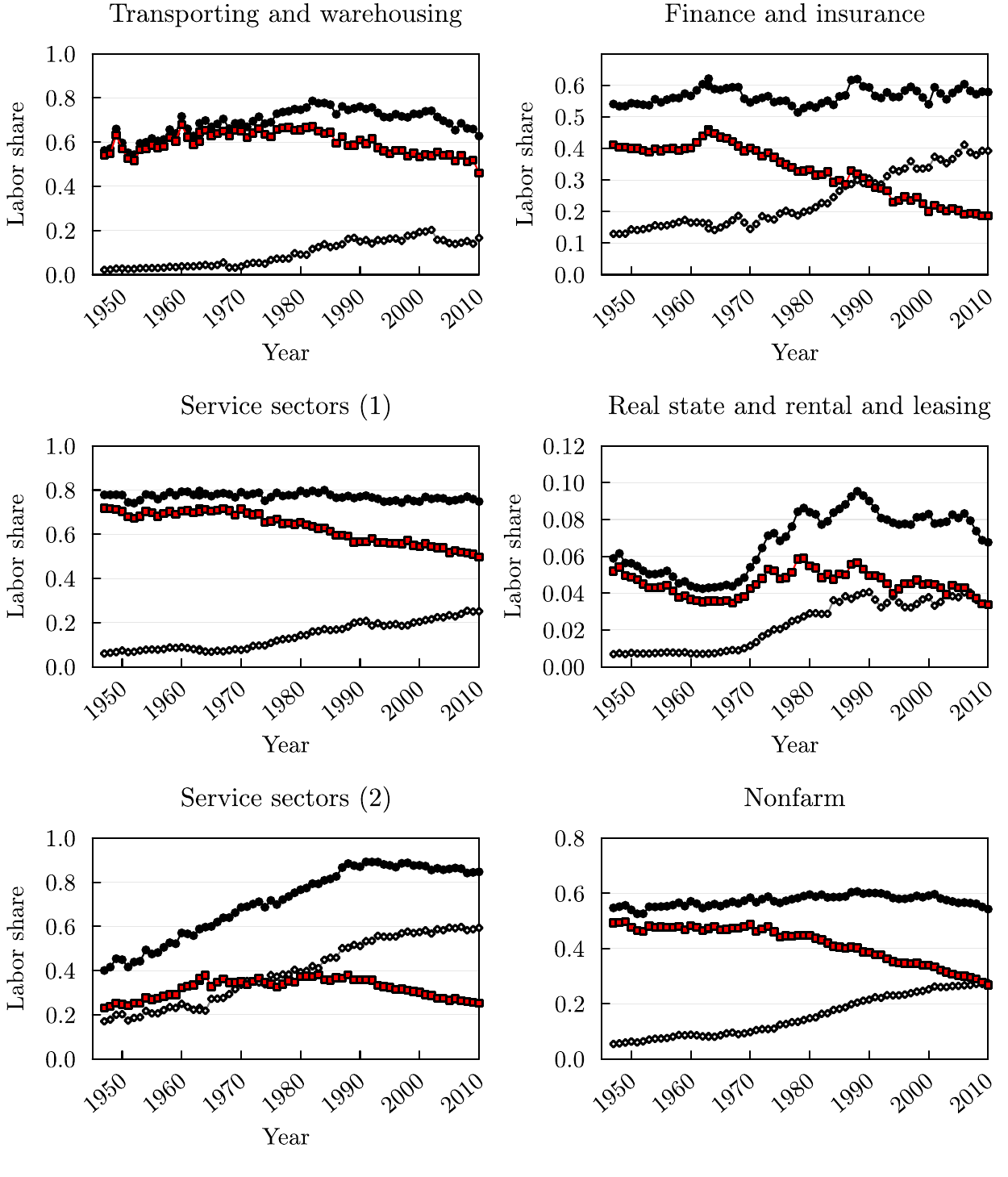}
    \end{center}
          \caption{Labor share by sectors.    \emph{Notes--- Services (1) include:Administrative and waste management services, Arts, entertainment, and recreation, Accommodation, Food services and drinking places, and Other services, except government. Services (2) include: Professional, scientific, and technical services, Management of companies and enterprises,  Educational services,  and Health care and social assistance. The black lines with circles is the total labor share, the red lines with squares is the share on non-college labor, and the gray lines with diamonds  is the share of college labor.} \label{fig:labor_shares1_US}}
  \end{figure}

\begin{figure}
    \begin{center}
    \includegraphics[width=0.85\textwidth]{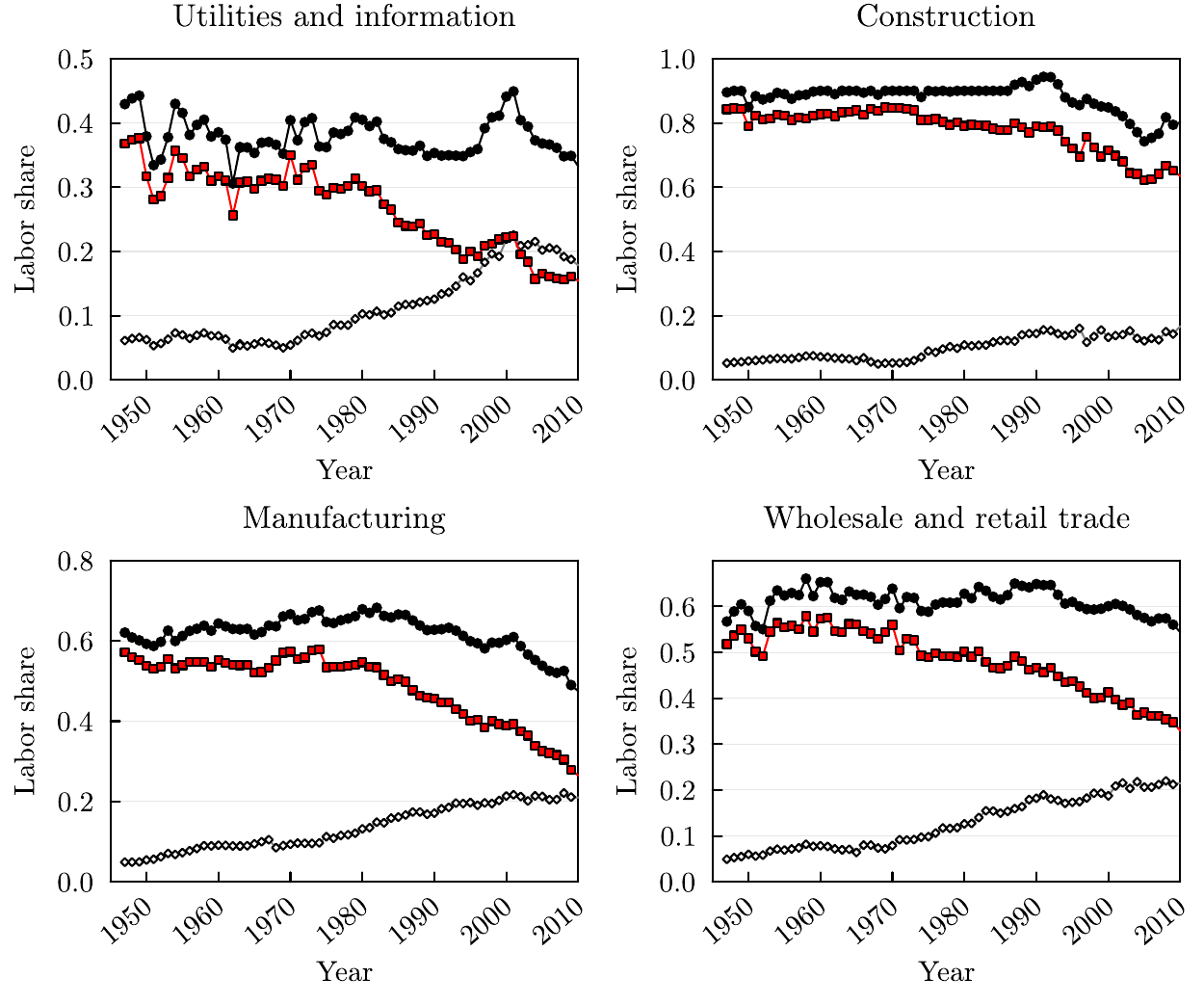}
    \end{center}
          \caption{Labor share by sectors.  \emph{Notes---The black lines with circles is the total labor share, the red lines with squares is the share on non-college labor, and the gray lines with diamonds  is the share of college labor.}\label{fig:labor_shares2_US}}
  \end{figure}

Thus, although the nonfarm business sector (BLS) and the Penn World Table data are popular measures, they may underestimate the fall of the labor share since the late 1970s given some questionable imputations of labor income in specific service sectors and because they include proprietor income as a component of the labor share. This  problem is depicted in Figure \ref{fig:profitability_US} below. The BEA-BLS integrated data, however, is still experimental and is subject to important measurement.

\subsection{Capital-output ratio} Figure \ref{fig:capital_outputUS} depicts three different series of the annual capital-output ratio. The (BEA) Corporate time series is obtained by summing the value of net stocks with the depreciation of capital in the corporate sector from Tables 6.4 and 6.1 of the Fixed Assets Accounts Tables over the gross value added of corporate business from Table 1.14 of the National Income and Product Accounts. 

There are two important conclusions that can be drawn from Figure \ref{fig:capital_outputUS}.   First,  the capital-output ratio used in the main text is very similar to the capital-output ratio of the corporate sector obtained from the BEA. This means that the automation measure obtained from \eqref{eq:empirical_automation} is robust to alternatives measures  of the capital-output ratio.  The second conclusion is that, though all measures follow a similar pattern, the nonfarm BEA-BLS capital-output ratio   is much higher  than the other two : about 1.6 times greater than the BEA capital-output ratio and close to 1.66 times higher than the measure used in the main text.  This suggests that the main problems in the BEA-BLS experimental data are probably found in the sectors excluded from Table \ref{table:bea_codes}.

\begin{figure}
    \begin{center}
    \includegraphics[width=0.8\textwidth]{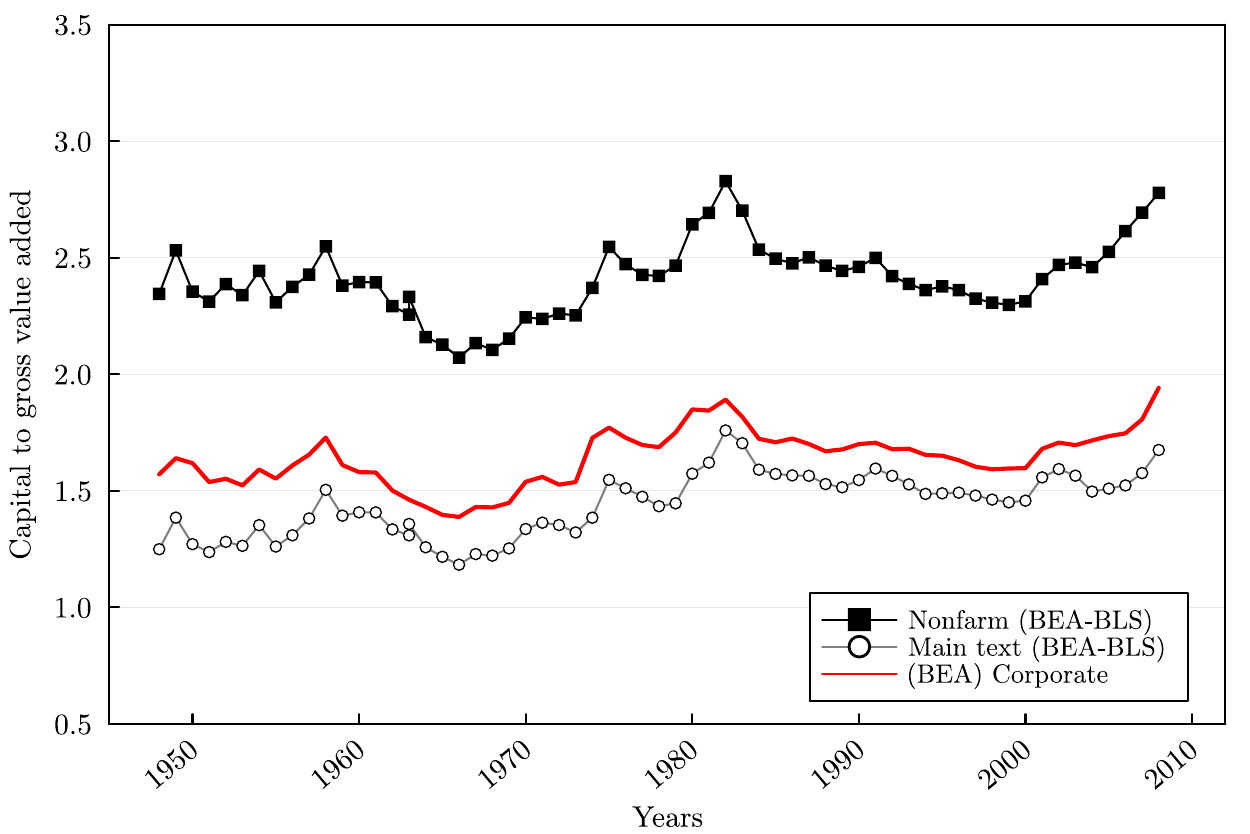}
    \end{center}
          \caption{Capital-output ratio. \label{fig:capital_outputUS}}
  \end{figure}

  \subsection{Profitability} Figure \ref{fig:profitability_US} depicts three different measures of the rate of return of capital and compares them with the proprietor's labor compensation share, which depends on the proprietor's return to capital.\footnote{Here ``proprietors'' is taken to mean ``unincorporated proprietors''. See  \url{https://www.bls.gov/opub/mlr/2017/article/estimating-the-us-labor-share.htm} for additional details. }  The data shows that the behavior of proprietor's labor compensation share is remarkably similar with the rate of return of capital in the main text. Particularly, both present a sharp decline before the 1980s and a strong recovery afterwards. A similar behavior is shared by the other two measures of business profitability, though in a lesser degree. 
  
  There are two important implications of this result. First,  the  nonfarm business sector (BLS) labor share time series---in spite of already exhibiting a a downward trend---is probably underestimating the fall of the participation of workers on gross aggregate income. Second, the measure of the rate of return of capital in the main text provides a credible measure of business profitability in the postwar US economy.

    \begin{figure}
    \begin{center}
    \includegraphics[width=0.85\textwidth]{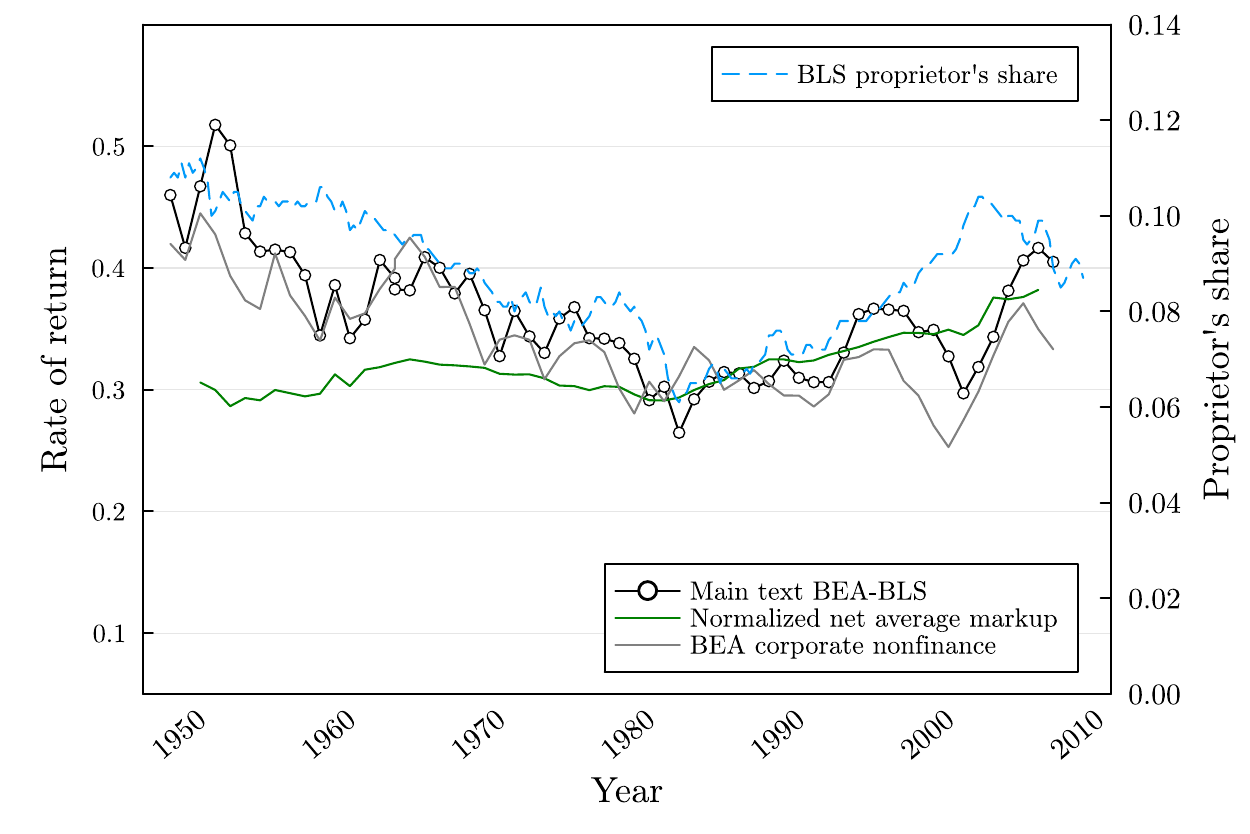}
    \end{center}
          \caption{Rates of return. \emph{Notes--- The  net average markup of \protect \citeasnoun{de2020rise} is normalized so that in 1980 is equal to the rate of return of capital in the main text.} \label{fig:profitability_US}}
  \end{figure}

    Figure \ref{fig:profitability_wp} complements the results in Figure \ref{fig:inst_prof_wagep} by showing that, particularly since the mid-1960s, there is a clear correlation between the different measures of the wage-premium and the rate of return of capital. Thus, given the  connection between the rate of return of capital and the variations in worker power, it is plausible that an important part of the behavior of the wage-premium can be accounted for by  changes in labor institutions. These results do not contradict the evidence of the skill-biased technical change. Rather, it adds an additional latter to the analysis by noting that there is probably a link between the demand for high-skilled labor and business profitability, such that the fundamental question to answer is what determines the  rate of return of capital. This is a potentially fruitful area for future research since it can shed new light on how the market-driven and institution-driven are  connected in relation to the problem of wage inequality in the US.

  Figure \ref{fig:profitability_conc_US} compares the four normalized measures of corporate profitability with three different measures of market concentration. As already noted in Figure \ref{fig:concentration_hyp}, there is only a clear link between the concentration of market among large firms and the rise of business profitability after the early 1980s. If anything, the relation is negative between the  1940s and the late 1970s.

    \begin{figure}
    \begin{center}
    \includegraphics[width=0.85\textwidth]{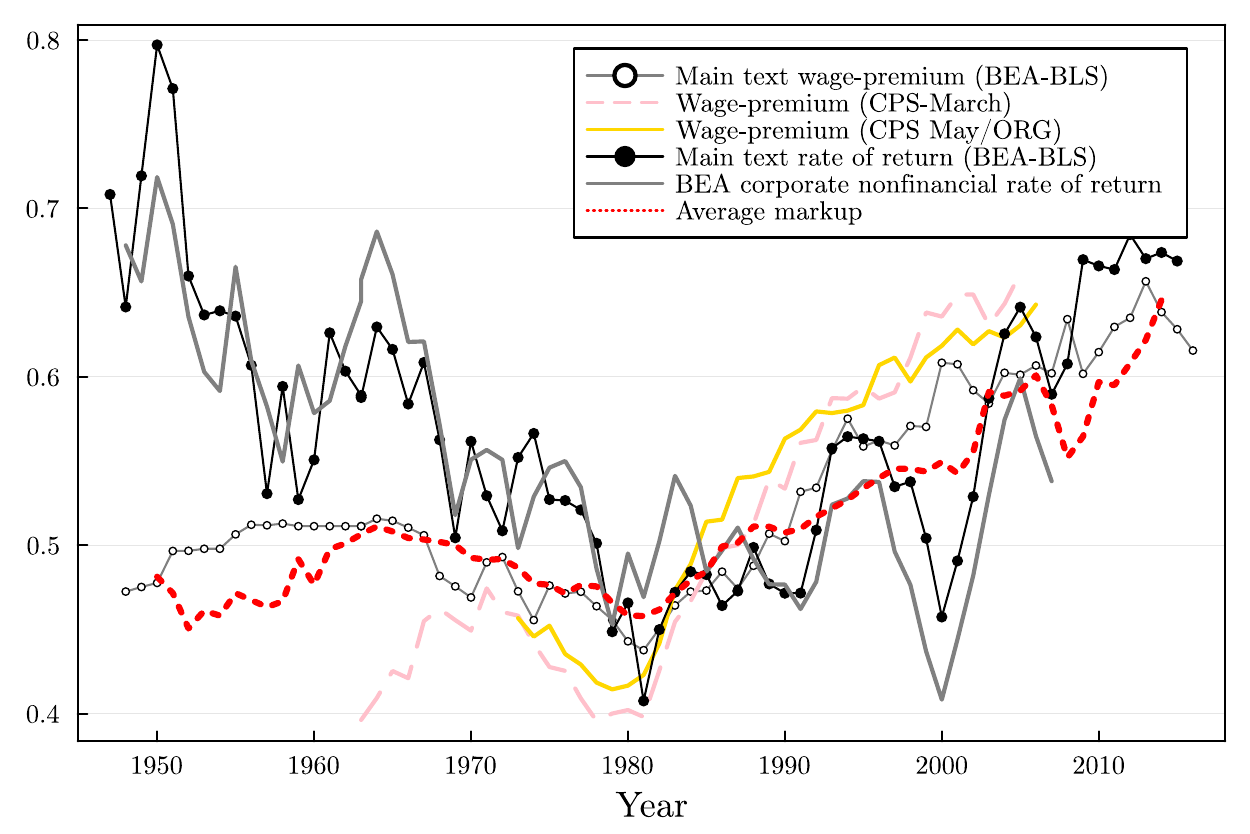}
    \end{center}
          \caption{Rates of return and wage-premium. \emph{Notes--- All measures are normalized to the 1985 value  of the wage-premium (CPS/March) in \protect \citeasnoun{autor2008trends}.} \label{fig:profitability_wp}}
  \end{figure}

    \begin{figure}
    \begin{center}
    \includegraphics[width=0.95\textwidth]{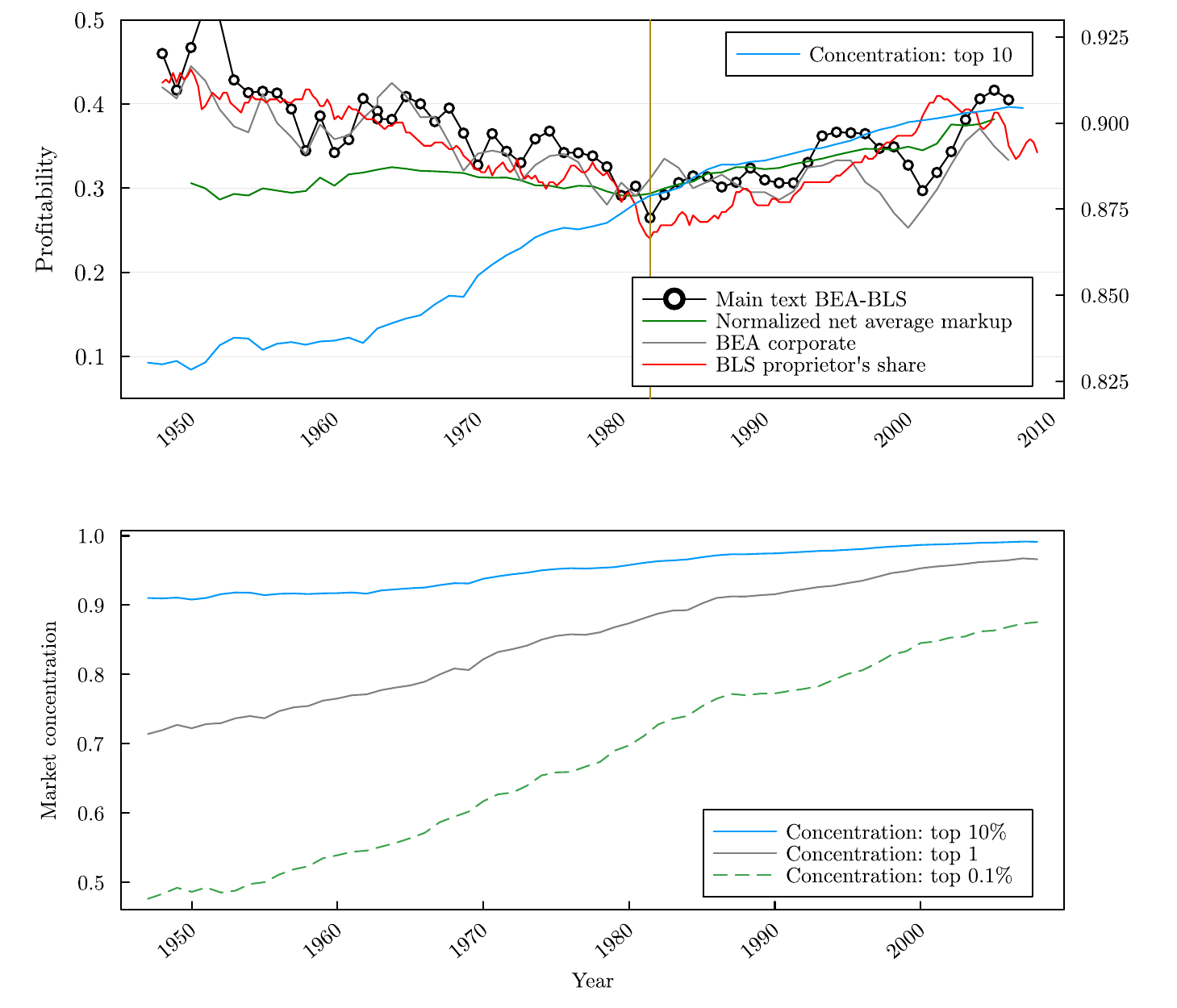}
    \end{center}
          \caption{Rates of return. \emph{Notes--- All profitability measures in the top panel are normalized so that in 1980 they are equal to the rate of return in the main text. The data in the lower panel measures the share of corporate assets accounted for by the top 10$\%$, 1$\%$, and $0.1\%$  \protect \cite{kwon2023}.} \label{fig:profitability_conc_US}}
  \end{figure}

To sum up, this appendix shows that the data used in the main text is robust to alternative measurements of the labor share,  the capital-output ratio, and business profitability.

\section{Robustness}\label{appendix:robustness} Here I present some additional results complementing Section \ref{sec:empirics} in the main text and show two robustness tests which strengthen the conclusions in the main text.  

\subsection{Additional Results} Figure \ref{fig:stability_condition_euler} shows that the stability condition of Proposition \ref{prop:gen_equilibrium} is plausible in light of the time series of the US. Particularly, we find that---with the exception of the early 1980s---the postwar US economy was probably in a condition to fund a steady growth rate of about 2$\%$ and maintain at the same time the available funds for financing capitalist consumption, taxes, and vacancy expenses. Moreover, Figure \ref{fig:stability_condition_euler} also tells an important story about the sustainability of the New Deal Order  given that, by the late 1970s, the difference between the rate of return and $g/\delta$ was coming to a minimum. It also says that the weakening power of labor probably contributed to the economic  sustainability of the system given the expansion of  $\mu$ over $g/\delta$.

   \begin{figure}
    \begin{center}
    \includegraphics[width=0.5\textwidth]{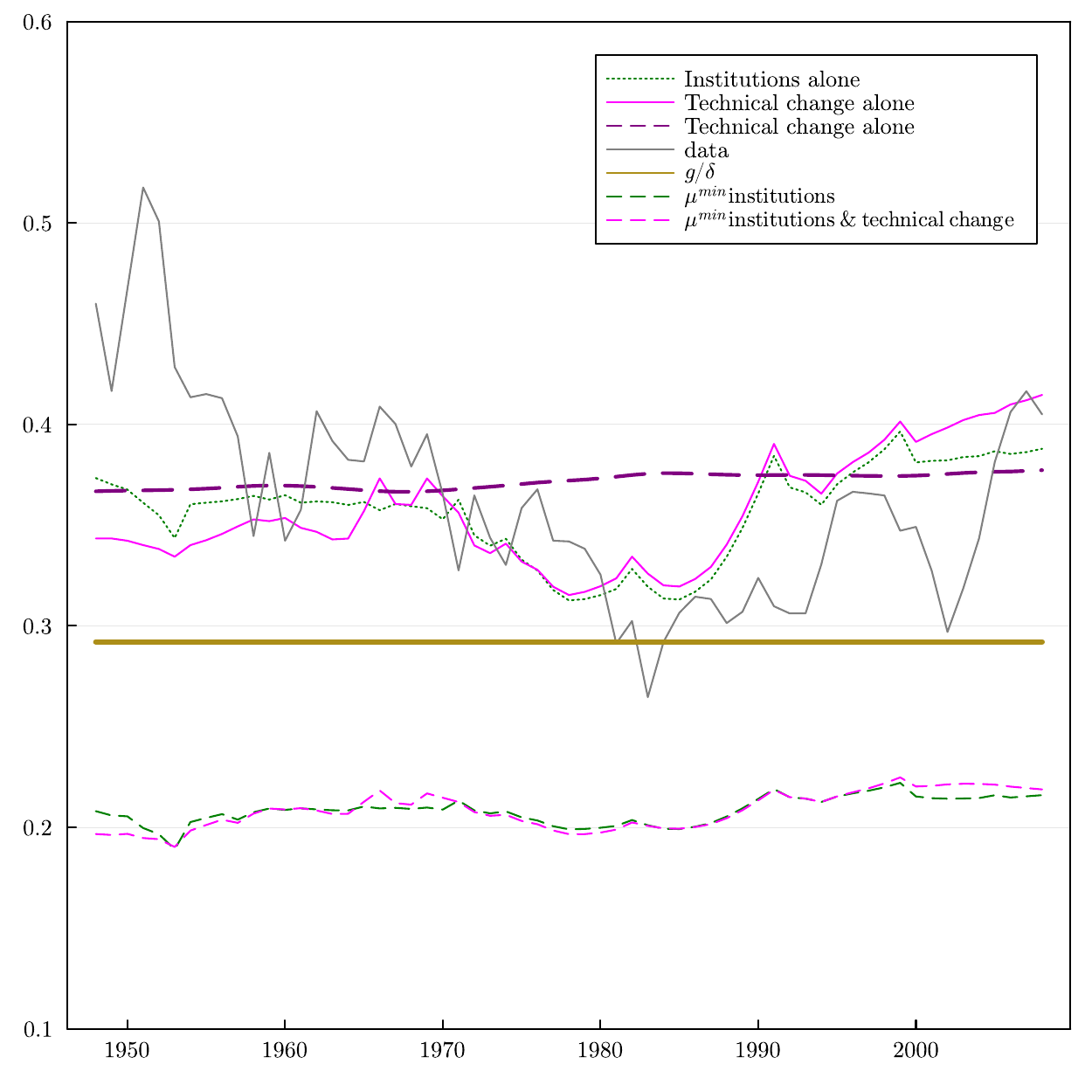}
    \end{center}
          \caption{Stability of BGP with positive growth. \emph{Notes---The growth rate $g$ is the average rate over the entire sample. See Table \ref{table:calibration_steady_state1}}. \label{fig:stability_condition_euler}}
  \end{figure}

  Figure \ref{fig:automation_regions_empirical} shows that, given the calibration in Table \ref{table:calibration_steady_state1}, the economy is always operating in region 2 of Figure \ref{fig:automation_reg} in the main text and under the condition that automated tasks raise aggregate output and are inmediately produced with capital. 
  
  \begin{figure}
    \begin{center}
    \includegraphics[width=0.9\textwidth]{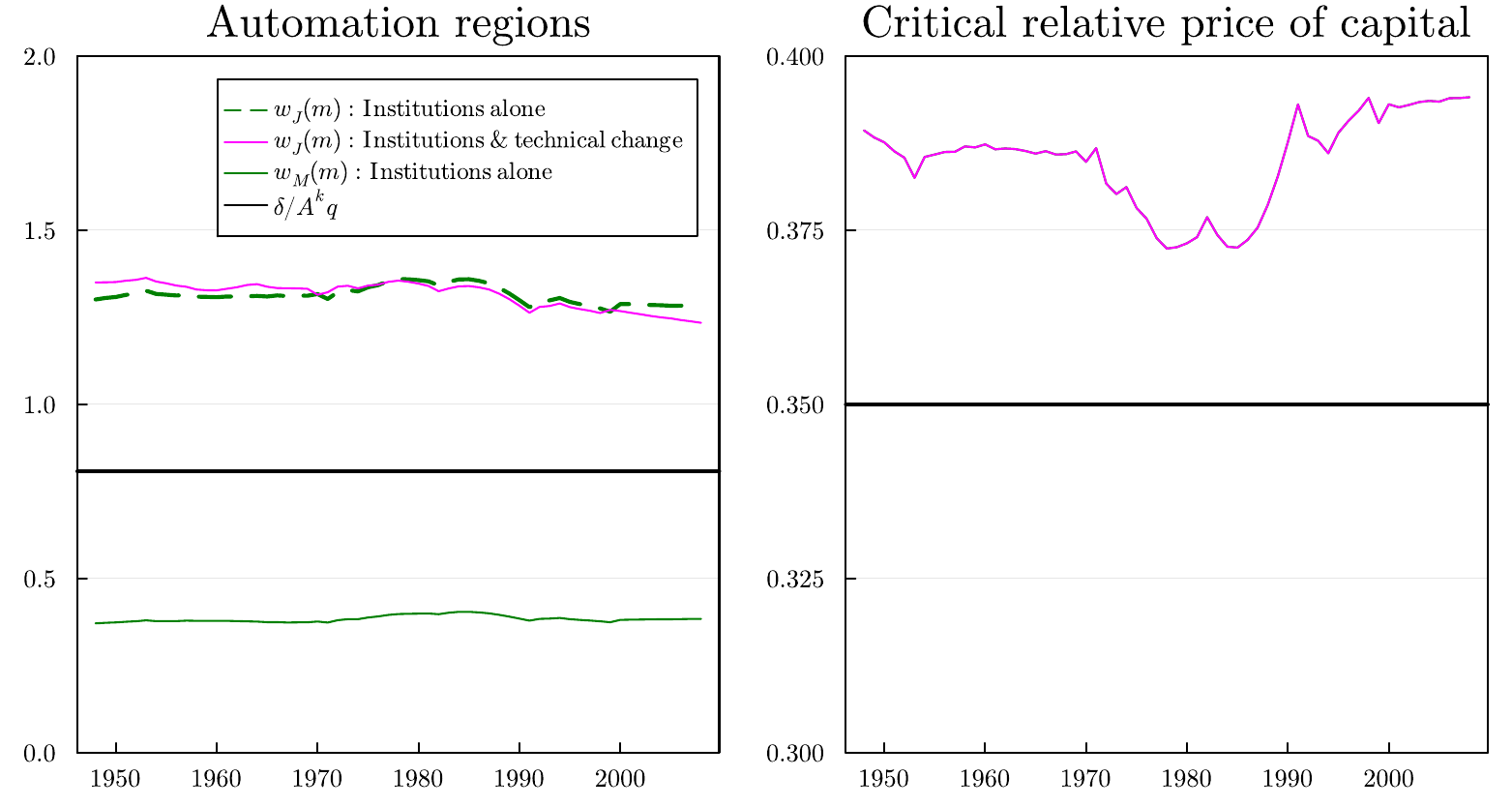}
    \end{center}
          \caption{Empirical automation regions. \label{fig:automation_regions_empirical}}
  \end{figure}

  \subsection{NAIRU Calibration} Here we consider how the model predictions change if we target the NAIRU instead of the efficient unemployment rate of \citeasnoun{michaillat2021beveridgean} employed in the main text. Given that the NAIRU is always above the efficient unemployment rate, I increase $\lambda_{0}$ from 0.02 in the main text to 0.025. Similarly, given that a higher $\lambda_{0}$ lowers the labor share by increasing the rate of unemployment, I now set $\alpha=1.7$. The remaining parameters are as in Table \ref{table:calibration_steady_state1}. 
  
  The results in Figure \ref{fig:results_NAIRU} support the results in the main text and show that there are no significant changes to the conclusions of Section \ref{sec:empirics}. 
  
  \begin{figure}
    \begin{center}
    \includegraphics[width=0.95\textwidth]{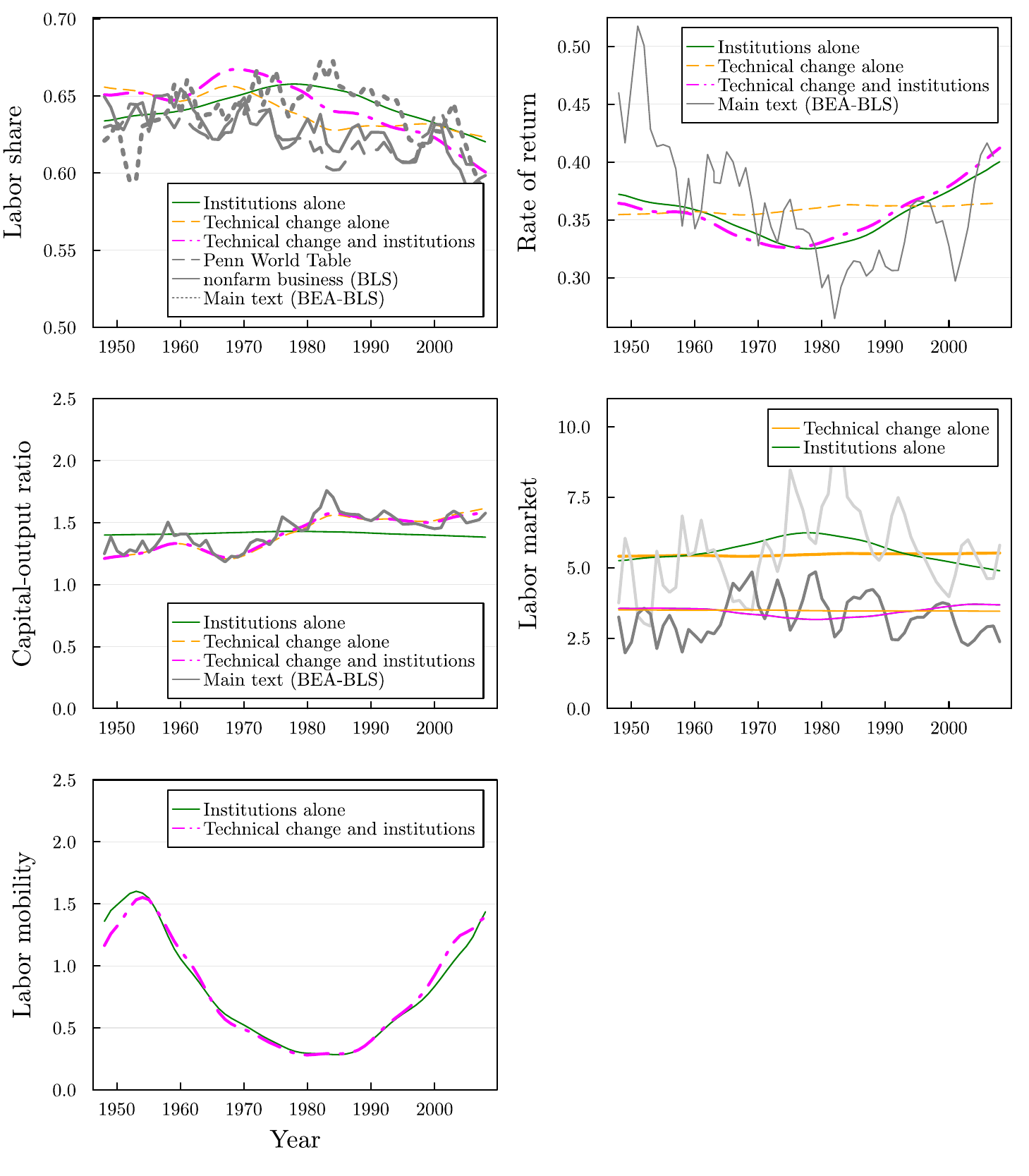}
    \end{center}
          \caption{Calibration targeting the NAIRU. \label{fig:results_NAIRU}}
  \end{figure}

\subsection{Gross substitution: $\sigma>1$} Let us now consider the case where capital and labor are gross substitutes. Targeting the efficient unemployment rate, I change $\sigma$ to 1.2 and set $\alpha=1.3$ to target an average labor share of about 0.63.

 Figure \ref{fig:results_G1} again shows that there are no considerable changes to the results in the main text by modifying the value of $\sigma$.

   \begin{figure}
    \begin{center}
    \includegraphics[width=0.95\textwidth]{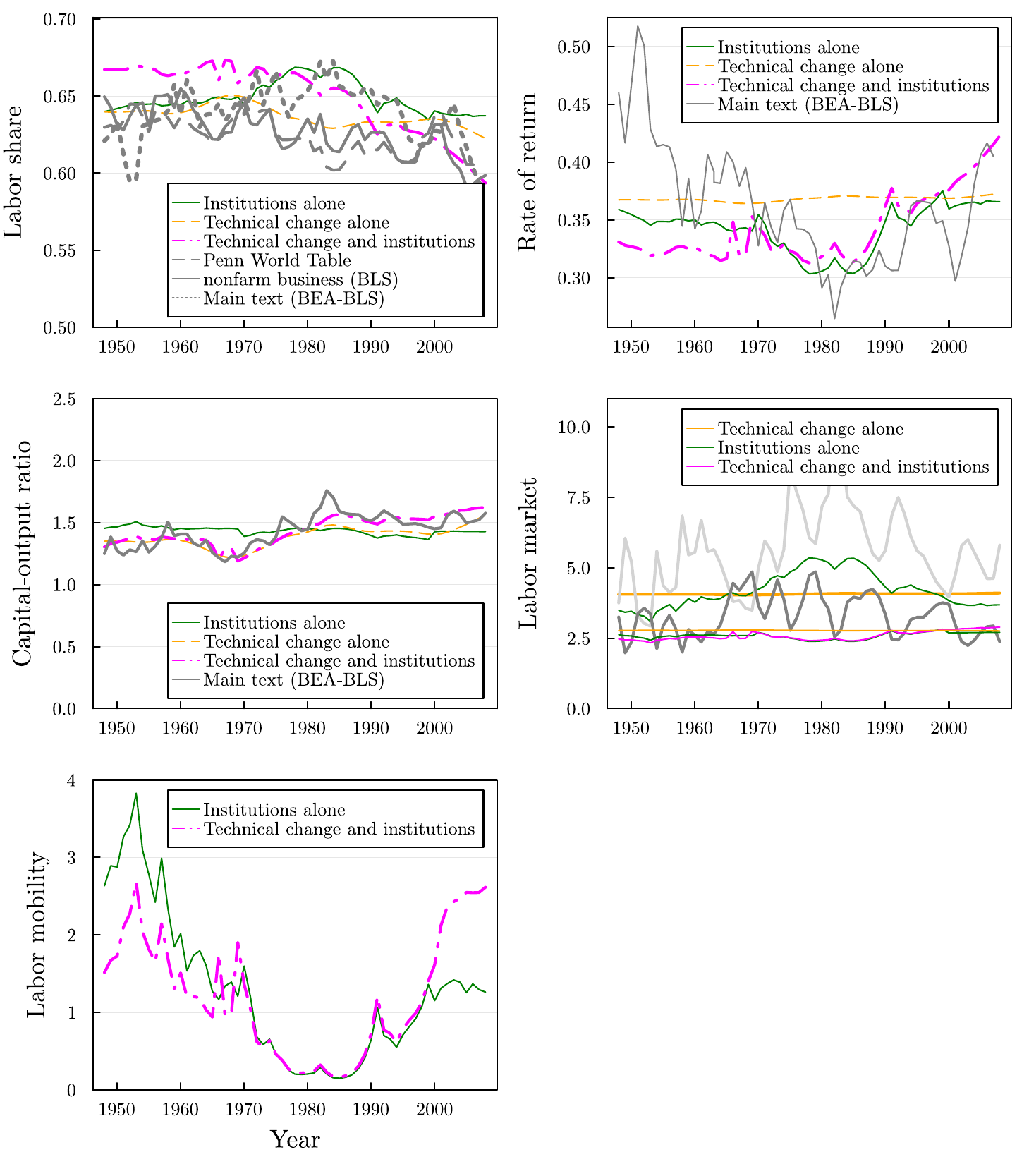}
    \end{center}
          \caption{Calibration with $\sigma>1$.  \label{fig:results_G1}}
  \end{figure}

\end{document}